\newtheorem{thrm}{Theorem}[section]
\newtheorem{lem}[thrm]{Lemma}
\newtheorem{prop}[thrm]{Proposition}
\newtheorem{cor}[thrm]{Corollary}
\theoremstyle{definition}
\newtheorem{definition}[thrm]{Definition}
\newtheorem{remark}[thrm]{Remark}
\numberwithin{equation}{section}
\author{Lin Zhang and Junde Wu}
\address{Department of Mathematics\\
Zhejiang University\\Hangzhou\\
People's Republic of China}
\email{godyalin@163.com, linyz@zju.edu.cn}
\keywords{Vectorization; Realignment; Quantum operations;
Separable channels; Dynamical matrices; Choi-Jamio{\l}kowski
isomorphism}
\begin{document}

\title[Dynamical matrices]{A Survey of Dynamical Matrices Theory}

\begin{abstract}
In this note, we survey some elementary theorems and proofs concerning dynamical
matrices theory. Some mathematical concepts and results involved in quantum information theory are reviewed.
A little new result on the matrix representation of quantum operation are obtained. And best separable
approximation for quantum operations is presented.
\end{abstract}
\maketitle

\tableofcontents
\pagebreak

\section{Introduction and preliminaries}
\emph{Positive linear maps} on some operator algebras are a very
important subject of both the mathematical and the physical
literature for several years. The images of positive operators
acting on a given Hilbert space under such a map are positive
operators acting on the same Hilbert space.  A map $\Phi$ is called
$k$-positive for some $k\in\mathbb{N}$ if the tensor product
$\Phi\otimes \mbox{Id}_{k}$ is positive. We call $\Phi$ is a
\emph{completely positive} (CP) when it is $k$-positive for any
$k\in\mathbb{N}$. Completely positive maps (CP maps, for short)
describe the dynamics of open quantum systems.
The structure of the set of CP maps is well understood due to the theorems of Stinespring \cite{Stinespring}, Kraus \cite{Kraus}, and Choi \cite{Choi}. Choi's theorem is also proved by another simple approach in \cite{Salgado}.\\

In this paper, only finite
dimensional complex vector spaces are considered. An column vector
in a complex vector space is denoted by $|\phi\rangle$, the symbol
$\phi$ is a label, while $|\cdot\rangle$ denotes that the object is
a complex column vector. This notation for complex vectors is called
\emph{Dirac notation}. Throughout the paper, $\dagger,\mbox{t}$ and $\ast$ stand for Hermitian conjugate,
transposition and complex conjugate, respectively, of matrices with respect to a given orthonormal basis. Given a vector\\
\indent $|\phi\rangle=[\phi_{1},\phi_{2},\ldots,\phi_{d}]^{\mbox{t}}$,\\ its dual is defined as\\
\indent $\langle\phi|=[\phi^{\ast}_{1}\ \phi^{\ast}_{2}\ \cdots\ \phi_{d}^{\ast}]\equiv(|\phi\rangle)^{\dagger}$.\\
Given the vectors $|\phi\rangle,|\varphi\rangle$, the inner product
between two vectors is denoted by $\langle\phi|\varphi\rangle$,
which is defined as follows:\\
\indent $\langle\phi|\varphi\rangle\equiv\sum_{i=1}^{d}\phi^{\ast}_{i}\varphi_{i}=[\phi^{\ast}_{1}\ \phi^{\ast}_{2}\ \cdots\ \phi^{\ast}_{d}][\phi_{1},\phi_{2},\ldots,\phi_{d}]^{\mbox{t}}$.\\
The \emph{norm} of a vector $|\phi\rangle$ is defined as
$\|\phi\|=\sqrt{\langle\phi|\phi\rangle}$. Unite vectors are those
vectors with unit norm. Two vectors are \emph{orthogonal} it they
have zero product. The \emph{outer product} of the given vectors
$|\phi\rangle$ and $|\varphi\rangle$ is given by
$$|\phi\rangle\langle\varphi|\equiv\left[\begin{array}{c}
                                                     \phi_{1} \\
                                                     \phi_{2} \\
                                                     \vdots \\
                                                     \phi_{d}
                                                   \end{array}
\right][\varphi^{\ast}_{1}\ \varphi^{\ast}_{2}\ \cdots\
\varphi^{\ast}_{d}]=\left[\begin{array}{cccc}
                      \phi_{1}\varphi^{\ast}_{1} & \phi_{1}\varphi^{\ast}_{2} & \cdots & \phi_{1}\varphi^{\ast}_{d} \\
                      \phi_{2}\varphi^{\ast}_{1} & \phi_{2}\varphi^{\ast}_{2} & \cdots & \phi_{2}\varphi^{\ast}_{d} \\
                      \vdots & \vdots & \vdots & \vdots \\
                      \phi_{d}\varphi^{\ast}_{1} & \phi_{d}\varphi^{\ast}_{2} & \cdots & \phi_{d}\varphi^{\ast}_{d}
                    \end{array}
\right].$$ A set of vectors $\{|v_{k}\rangle\}_{k=1}^n$ in a vector
space $\mathcal{V}$ is \emph{orthonormal} if the vectors are normalized and
orthogonal, that is, $\langle v_{i}|v_{j}\rangle=\delta_{ij}$. If,
in addition, $n=\dim\mathcal{V}$, this set of vectors form an orthonormal
basis for $\mathcal{V}$. Here we have a simple but useful fact that
$\sum_{k=1}^{n}|v_{k}\rangle\langle v_{k}|=I_{n}$ for given an
orthonormal basis $\{|v_{k}\rangle\}_{k=1}^n$ in a vector space $\mathcal{V}$.
This called the \emph{completeness relation}.\\

\emph{Quantum states} will now be introduced. A \emph{quantum system} is a physical system that obeys the laws of
quantum mechanics. Let us assume that we are given two quantum
systems. The first one is owned by Alice, and the second one by Bob.
The physical \emph{states} of Alice's system may be described by
states in a Hilbert space $\mathcal{H}_{A}$ of dimension $d_{A}=N$, and
in Bob's system in a Hilbert space $\mathcal{H}_{B}$ of dimension
$d_{B}=M$. The \emph{tensor product} is a ubiquitous mathematical operation which can be used to combine vector spaces to form a larger vector space.
Given two vector spaces $\mathcal{V}$ and $\mathcal{W}$, we can combine them to form the vector space $\mathcal{V}\otimes\mathcal{W}$, with $\dim(\mathcal{V}\otimes\mathcal{W})=\dim(\mathcal{V})\times\dim(\mathcal{W})$. The bipartite quantum system is then described by vectors in
the tensor-product of the two spaces
$\mathcal{H}=\mathcal{H}_{A}\otimes\mathcal{H}_{B}$, and
$\mbox{dim}(\mathcal{H})=d_{A}d_{B}$. A \emph{pure state} of dimension $d$ can be represented by a
$d$-dimensional complex unit vector $|\psi\rangle$. For real $\theta$, the vectors $|\psi\rangle$ and $e^{i\theta}|\psi\rangle$
 represent the same state. More generally, a $d$-dimensional quantum state is represented by a $d\times d$ complex matrix $\rho$, also called
 a \emph{density matrix}, which is a non-negative linear operator, acting on a complex Hilbert space $\mathcal{H}$, with trace 1.
 A pure state can be represented either by its state vector $|\psi\rangle$, or by its density matrix $\rho=|\psi\rangle\langle\psi|$.
 States which are not pure are called \emph{mixed states}. A simple test for whether a state $\rho$ is pure or mixed is to take the trace of $\rho^{2}$:
 $\mbox{tr}(\rho^{2})=1$ if $\rho$ is pure and $\mbox{tr}(\rho^{2})<1$ if $\rho$ is mixed. A mixed state can be expressed as a mixture of pure states in many different ways.\\

Suppose that $|v\rangle\in\mathcal{V},|w\rangle\in\mathcal{W}$. The vector $|v\rangle\otimes|w\rangle\in \mathcal{V}\otimes\mathcal{W}$. The vector $|v\rangle\otimes|w\rangle$ is computed as follows:
$$|v\rangle\otimes|w\rangle=\left[\begin{array}{c}
                                   w_{1}|v\rangle \\
                                   \vdots \\
                                   w_{k}|v\rangle \\
                                   \vdots \\
                                   w_{n}|v\rangle
                                 \end{array}
\right]\ \mbox{if}\ |w\rangle=\left[\begin{array}{c}
                                   w_{1} \\
                                   \vdots \\
                                   w_{k} \\
                                   \vdots \\
                                   w_{d_{B}}
                                 \end{array}
\right]\ \mbox{and}\ |v\rangle=\left[\begin{array}{c}
                                   v_{1} \\
                                   \vdots \\
                                   v_{k} \\
                                   \vdots \\
                                   v_{d_{A}}
                                 \end{array}
\right].$$
Similarly, the tensor product of two given matrices will be explained as follows:
with the orthonormal bases $\{|m\rangle\}(m=1,\ldots,d_{A})$ of
$\mathcal{H}_{A}$ and $\{|\mu\rangle\}(\mu=1,\ldots,d_{B})$ of
$\mathcal{H}_{B}$, respectively, the orthonormal basis of
$\mathcal{H}$ can be described as $\{|m\rangle\otimes
|\mu\rangle\equiv|m\mu\rangle\}(m=1,\ldots,d_{A};\mu=1,\ldots,d_{B})$ (throughout the present paper, Roman indices correspond to the subsystem $A$ and Greek indices to the subsystem $B$.)
for which two types of ordering are very important such as:
\begin{enumerate}[(i)]
\item Ordering of type-I:\\
$\{|11\rangle,|21\rangle,\ldots,|d_{A}1\rangle;\ldots;|1\mu\rangle,|2\mu\rangle,\ldots,|d_{A}\mu\rangle;\ldots;|1d_{B}\rangle,|2d_{B}\rangle,\ldots,|d_{A}d_{B}\rangle\}.$
\item Ordering of type-II:\\
$\{|11\rangle,|12\rangle,\ldots,|1d_{B}\rangle;\ldots;|m1\rangle,|m2\rangle,\ldots,|md_{B}\rangle;\ldots;|d_{A}1\rangle,|d_{A}2\rangle,\ldots,|d_{A}d_{B}\rangle\}.$
\end{enumerate}

$\mathscr{B}(\mathcal{H})$, $\mathscr{B}(\mathcal{H}_{A})$ and $\mathscr{B}(\mathcal{H}_{B})$ means that the set of all bounded linear operators on $\mathcal{H},\mathcal{H}_{A}$
and $\mathcal{H}_{B}$, respectively. If $X\in\mathscr{B}(\mathcal{H}_{A})$ and $Y\in\mathscr{B}(\mathcal{H}_{B})$,
then $X\otimes Y\in\mathscr{B}(\mathcal{H})$. Suppose that the matrix-representations $X\equiv[x_{mn}]$ and $Y\equiv[y_{\mu\nu}]$
for $X$ and $Y$ with respect to the given orthonormal bases $\{|m\rangle\}_{m=1}^{d_{A}}$ and $\{|\mu\rangle\}_{\mu=1}^{d_{B}}$ are given, respectively.
Then there are several different matrix-representations of $X\otimes Y$ with respect to the corresponding orthonormal bases of different orderings.
For the ordering of type-I, the matrix representation of $X\otimes Y$ is\\
\indent $X\otimes Y\equiv\left[\begin{array}{cccc}
                          y_{11}X & y_{12}X & \cdots & y_{1d_{B}}X \\
                          y_{21}X & y_{22}X & \cdots & y_{2d_{B}}X \\
                          \vdots & \vdots & \vdots & \vdots \\
                          y_{d_{B}1}X & y_{d_{B}2}X & \cdots & y_{d_{B}d_{B}}X
                        \end{array}
\right]$;\\
while for the ordering of type-II, the matrix representation of $X\otimes Y$ is\\
\indent $X\otimes Y\equiv\left[\begin{array}{cccc}
                          x_{11}Y & x_{12}Y & \cdots & x_{1d_{A}}Y \\
                          x_{21}Y & x_{22}Y & \cdots & x_{2d_{A}}Y \\
                          \vdots & \vdots & \vdots & \vdots \\
                          x_{d_{A}1}Y & x_{d_{A}2}Y & \cdots & x_{d_{A}d_{A}}Y\\
                        \end{array}
\right]$.\\
The ordering of type-I will be employed throughout the present paper if unspecified.
For tensor product, we have the following rules: given two matrices $S$ and $T$ acting on vector spaces $\mathcal{V}$ and $\mathcal{W}$, respectively, vectors $|x\rangle\in\mathcal{V}$ and $|y\rangle\in\mathcal{W}$, then\\
\indent $(S\otimes T)(|v\rangle\otimes|w\rangle)=(S|x\rangle)\otimes (T|y\rangle),\mbox{Tr}(S\otimes T)=\mbox{tr}(S)\mbox{tr}(T),(S\otimes T)^{\dagger}=S^{\dagger}\otimes T^{\dagger}$.\\
If $X,Y$ act also on $\mathcal{V,W}$ respectively, we have $(S\otimes T)(X\otimes Y)=SX\otimes TY$. Obviously, tensor product is a bi-linear map.\\

The description of \emph{subsystems} of a composite quantum system is provided by the \emph{reduced density operator}, which is so useful as to be virtually indispensable in the analysis of composite quantum systems.
Suppose we have physical systems $A$ and $B$, whose state is described by a density operator $\rho_{AB}$. The state space of the composite quantum system $AB$ is denoted by $\mathscr{D}(\mathcal{H})$, similarly, $\mathscr{D}(\mathcal{H}_{A})$ for subsystem $A$ and $\mathscr{D}(\mathcal{H}_{B})$ for subsystem $B$. The reduced density operator for system $A$ is defined by $\mbox{Tr}_{B}(\rho_{AB})\equiv\rho_{A}$,
where $\mbox{Tr}_{B}$ is a map of operators know as the \emph{partial trace} over system $B$. The partial trace is defined by\\
\indent $\mbox{Tr}_{B}(|a_{1}\rangle\langle a_{2}|\otimes |b_{1}\rangle\langle b_{2}|)=|a_{1}\rangle\langle a_{2}|\mbox{tr}(|b_{1}\rangle\langle b_{2}|)$,\\
where $|a_{1}\rangle$ and $|a_{2}\rangle$ are any two vectors in the
state space of $A$, and $|b_{1}\rangle$ and $|b_{2}\rangle$ are any
two vectors in the state space of $B$. The trace operation appearing
on the right hand side is the usual trace operation for system $B$,
so $\mbox{tr}(|b_{1}\rangle\langle b_{2}|)=\langle
b_{2}|b_{1}\rangle$. In fact,
$\mbox{Tr}_{A}=\mbox{tr}\otimes\mbox{Id}_{B}$,
$\mbox{Tr}_{B}=\mbox{Id}_{A}\otimes\mbox{tr}$ and
$\mbox{Tr}=\mbox{tr}\otimes\mbox{tr}$. We have defined the partial
trace operation only on a special subclass of operators on $AB$.
More generally, for any matrix $Z$ acting on
$\mathcal{H}_{A}\otimes\mathcal{H}_{B}$, we have a block
construction on $Z$:
$Z=[Z_{\mu\nu}](\mu,\nu=1,\ldots,d_{B}=\dim\mathcal{H}_{B})$, where
each $Z_{\mu\nu}$ is a scalar matrix of size $d_{A}\times
d_{A}(d_{A}=\dim\mathcal{H}_{A})$. Therefore\\
\indent $Z=\sum_{\mu,\nu=1}^{d_{B}}Z_{\mu\nu}\otimes|\mu\rangle\langle\nu|$.\\
Now the partial trace over system $A$ is provided by\\
\indent $\mbox{Tr}_{A}(Z)=\sum_{\mu,\nu=1}^{d_{B}}\mbox{tr}(Z_{\mu\nu})|\mu\rangle\langle\nu|\equiv[\mbox{tr}(Z_{\mu\nu})]$,\\
while the partial trace over system $B$ is given by\\
\indent $\mbox{Tr}_{B}(Z)=\sum_{\mu,\nu=1}^{d_{B}}Z_{\mu\nu}\mbox{tr}(|\mu\rangle\langle\nu|)=\sum_{\mu=1}^{d_{B}}Z_{\mu\mu}$.\\
The partial trace over the composite quantum system
$AB$ is \\
\indent $\mbox{Tr}(Z)=\sum_{\mu=1}^{d_{B}}\mbox{tr}(Z_{\mu\mu})$.\\

The quantum operations formalism is a general tool for describing
the evolution of quantum systems in a wide variety of circumstances,
including stochastic changes to quantum states. A simple example of
a state change in quantum mechanics is the unitary evolution
experienced by a closed quantum system. The final state of the
system is related to the initial state by a unitary transformation
$U$,\\
\indent $\rho\rightarrow \mathcal{E}(\rho)=U\rho U^{\dagger}$.\\
Unitary evolution is not the most general type of state change
possible in quantum mechanics. Other state changes, described
without unitary transformations, arise when a quantum system is
coupled to an environment or when a measurement is performed on the
system. This formalism is described in detail by Kraus. In this
formalism there is an input state and an output state, which are
connected by a map \\
\indent $\rho\rightarrow
\frac{\mathcal{E}(\rho)}{\mbox{tr}[\mathcal{E}(\rho)]}$.\\
This map
is determined by a \emph{quantum operation} $\mathcal{E}$, a linear,
trace-decreasing map that preserves positivity. The trace in the
denominator is included in order to preserve the trace condition
$\mbox{tr}(\rho)=1$. The most general form for $\mathcal{E}$ that is
physically reasonable, can be shown to be\\
\indent $\mathcal{E}(\rho)=\sum_{j}\Gamma_{j}\rho \Gamma_{j}^{\dagger}$.\\
The system operators $\Gamma_j$ , which must satisfy
$\sum_{j}\Gamma_{j}\Gamma_{j}^{\dagger}\leq I$, completely specify
the quantum operation. Formally, every quantum operation has to be
described mathematically by a completely positive complex-linear
mapping $\mathcal{E}$, which satisfies
$\mbox{tr}(\mathcal{E}(\rho))\leq 1$ for all state $\rho$. A quantum
operation is called \emph{quantum channel} if it is
trace-preserving.\\

 Given quantum operation $\mathcal{E},\mathcal{E}_{A},$ and $\mathcal{E}_{B}$ on corresponding bipartite quantum system with subsystems $A$ and $B$, subsystems $A$, and $B$, respectively, owing to Jamio{\l}kowski isomorphism, the notion of entanglement can be extended from quantum states to quantum operations. A quantum operation acting on two subsystems is said to be \emph{separable} if its action can be expressed in the Kraus form\\
\indent $\mathcal{E}(\cdot)=\sum_{k}(\Lambda_{k}^{A}\otimes\Lambda_{k}^{B})\cdot(\Lambda_{k}^{A}\otimes\Lambda_{k}^{B})^{\dagger}$,\\
where $\Lambda_{k}^{A}$ and $\Lambda_{k}^{B}$ are operators acting on each subsystem and they satisfy
that \\
\indent $\sum_{k}(\Lambda_{k}^{A}\otimes\Lambda_{k}^{B})^{\dagger}(\Lambda_{k}^{A}\otimes\Lambda_{k}^{B})\leq I_{A}\otimes I_{B}$.\\
Otherwise, it is entangled. When the equality is valid, there is a concept of \emph{separable quantum channel}.
\section{Vectorization and realignment of matrices}
\begin{definition} Representation of matrices as vectors on a
higher dimensional Hilbert space is called \emph{vectorization}. It
transforms a $p\times q$ matrix $G$ into $pq\times 1$ column vector
denoted by $|G\rangle\rangle$, this is done by ordering matrix
elements, i.e., by stacking the columns of $G$ to form a vector: for
example, with a $p\times q$ matrix $G=[g_{ij}]$, $|G\rangle\rangle$
is described as\\
\indent $
|G\rangle\rangle=\left[
                                \begin{array}{c}
                                  G(\cdot,1) \\
                                  \vdots \\
                                  G(\cdot,q) \\
                                \end{array}
                              \right]
,\ \mbox{where}\ G(\cdot,j)=\left[
                    \begin{array}{c}
                      g_{1j}\\
                      \vdots\\
                      g_{pj}\\
                      \end{array}\right](j=1,\ldots,q)$.\\
That is, $G(\cdot,j)$ is the $j$th column vector of matrix $G$.
Dually, $\langle\langle G|$ is a $1\times pq$ row vector defined as $(|G\rangle\rangle)^{\dagger}$, i.e., $\langle\langle G|=(|G\rangle\rangle)^{\dagger}$. (see \cite{Wood})
\end{definition}
\begin{remark}
\begin{enumerate}[(i)]
\item Vectorization is obviously linear: for matrices $S_{k}$ and scalars $\lambda_k$,\\ $|\sum_{k}\lambda_{k}S_{k}\rangle\rangle=\sum_{k}\lambda_{k}|S_{k}\rangle\rangle$.\\
\item Vectorization is inner-product-preserving; i.e. isometry. The Hilbert-Schmidt inner product is equivalent to the usual Euclidean inner product of vectors: for square matrices $S,T$ of the same size, $\langle S,T\rangle=\mbox{tr}(S^{\dagger}T)=\langle\langle S|T\rangle\rangle$. It is easily shown that vectorization is one-one and onto. Therefore vectorization is a unitary transformation from Hilbert-Schmidt matrix space to Hilbert vector space.\\
\item Vectorization is intrinsically related to the tensor product. Consider a square matrix of size $p\times p$, representing an operator acting on the $p$-dimensional Hilbert space $\mathcal{K}$. Let $\{|j\rangle\}_{j=1}^{p}$ be the orthonormal basis of $\mathcal{K}$ for which $|j\rangle$ is column vector with all entries 0 except for $j$th entry 1. A matrix $T=[t_{ij}]=\sum_{i,j=1}^{p}t_{ij}E_{ij}$, where $E_{ij}=|i\rangle\langle j|$, is transformed to the vector
\begin{eqnarray}
|T\rangle\rangle&=&|\sum_{i,j=1}^{p}t_{ij}E_{ij}\rangle\rangle=\sum_{i,j=1}^{p}t_{ij}|E_{ij}\rangle\rangle=\sum_{i,j=1}^{p}t_{ij}|i\rangle|j\rangle=\sum_{j=1}^{p}(\sum_{i=1}^{p}t_{ij}|i\rangle)|j\rangle\nonumber\\
             &=&\sum_{j=1}^{p}(T|j\rangle)|j\rangle=(T\otimes I_{p})(\sum_{i=1}^{p}|j\rangle|j\rangle)=(T\otimes I_{p})(\sum_{j=1}^{p}|E_{jj}\rangle\rangle)\nonumber\\
             &=&(T\otimes I_{p})|\sum_{j=1}^{p}E_{jj}\rangle\rangle=(T\otimes I_{p})|I_{p}\rangle\rangle=(I_{p}\otimes T^{\mbox{t}})|I_{p}\rangle\rangle.
\end{eqnarray}
Thus it follows from the identity above that, for any matrices $Q,X$ and $R$ of the same size $p\times p$,
\begin{eqnarray}
|QXR\rangle\rangle&=&(QXR)\otimes I_{p}|I_{p}\rangle\rangle=(Q\otimes I_{p})(X\otimes I_{p})[(R\otimes I_{p})|I_{p}\rangle\rangle]\nonumber\\
&=&(Q\otimes I_{p})(X\otimes I_{p})[(I_{p}\otimes R^{\mbox{t}})|I_{p}\rangle\rangle]
=(Q\otimes I_{p})[(X\otimes I_{p})(I_{p}\otimes R^{\mbox{t}})]|I_{p}\rangle\rangle\nonumber\\
&=&(Q\otimes I_{p})[(I_{p}\otimes R^{\mbox{t}})(X\otimes I_{p})]|I_{p}\rangle\rangle=(Q\otimes I_{p})(I_{p}\otimes R^{\mbox{t}})|X\rangle\rangle\nonumber\\
&=&Q\otimes R^{\mbox{t}}|X\rangle\rangle
\end{eqnarray}
and
\begin{eqnarray}
|XY\rangle\rangle=(X\otimes I_{p})|Y\rangle\rangle=(I_{p}\otimes Y^{\mbox{t}})|X\rangle\rangle.
\end{eqnarray}
\item For any matrix $Y$,
\begin{eqnarray}
\langle\langle Y^{\ast}|=(|Y^{\ast}\rangle\rangle)^{\dagger}=(|Y\rangle\rangle)^{\ast\dagger}=(|Y\rangle\rangle)^{\mbox{t}}.
\end{eqnarray}
\item For $S\in\mathscr{B}(\mathcal{H}_{A})$ and $T\in\mathscr{B}(\mathcal{H}_{B})$, where $\mathcal{H}_{A}=\mathcal{H}_{B}$ are $d$-dimensional Hilbert spaces. For the matrix representations $S=[s_{ij}]$ and $T=[t_{ij}](i,j=1,\ldots,d)$, we have
$\mbox{tr}_{B}(|S\rangle\rangle\langle\langle T|)=ST^{\dagger}$ and $\mbox{tr}_{A}(|S\rangle\rangle\langle\langle T|)=S^{\mbox{t}}T^{\ast}$. Indeed,
\begin{eqnarray*}
\mbox{tr}_{B}(|S\rangle\rangle\langle\langle T|)&=&\sum_{m,n,\mu,\nu=1}^{d}s_{mn}t^{\ast}_{\mu\nu}\mbox{tr}_{B}(|mn\rangle\langle\mu\nu|)=\sum_{m,n,\mu,\nu=1}^{d}s_{mn}t^{\ast}_{\mu\nu}\mbox{tr}_{B}(|m\rangle\langle\mu|\otimes |n\rangle\langle\nu|)\\
&=&\sum_{m,n,\mu,\nu=1}^{d}s_{mn}t^{\ast}_{\mu\nu}\delta_{n\nu}|m\rangle\langle\mu|=\sum_{m,n,\mu=1}^{d}s_{mn}t^{\ast}_{\mu n}|m\rangle\langle\mu|\\&=&\sum_{n=1}^{d}(\sum_{m=1}^{d}s_{mn}|m\rangle)(\sum_{\mu=1}^{d}t_{\mu n}|\mu\rangle)^{\dagger}=\sum_{n=1}^{d}(S|n\rangle)(T|n\rangle)^{\dagger}\\&=&\sum_{n=1}^{d}S|n\rangle\langle n|T^{\dagger}=ST^{\dagger}.
\end{eqnarray*}
\end{enumerate}
The other identity goes similarly.
\end{remark}
\begin{definition} Let $Z$ be an $d_{B}\times d_{B}$ block matrix with each entry of size $d_{A}\times d_{A}$; i.e. $Z=[Z_{\mu\nu}]$ represent an operator acting on $\mathcal{H}_{A}\otimes\mathcal{H}_{B}$. We define a realigned matrix $\mathcal{R}(Z)$, acting from $\mathcal{H}_{B}\otimes\mathcal{H}_{B}$ to $\mathcal{H}_{A}\otimes\mathcal{H}_{A}$, of size $d_{A}^{2}\times d_{B}^{2}$ that contains the same elements as $Z$ but in different position as\\
\indent
$\mathcal{R}(Z)=[|Z_{11}\rangle\rangle,\ldots,|Z_{d_{B}1}\rangle\rangle;\ldots;|Z_{1d_{B}}\rangle\rangle,\ldots,|Z_{d_{B}d_{B}}\rangle\rangle]$.\\
In fact, $\mathcal{R}(Z)_{\stackrel{mn}{\mu\nu}}=Z_{\stackrel{m\mu}{n\nu}}$. Similarly, we can also define another alignment $\mathcal{R}'$ as $\mathcal{R}'(Z)_{\stackrel{mn}{\mu\nu}}=Z_{\stackrel{\nu n}{\mu m}}$. Note that alignment of matrices is a one-one linear mapping from the matrix space $\mathcal{M}_{d_{A}d_{B}\times d_{A}d_{B}}(\mathbb{C})$ onto the matrix space $\mathcal{M}_{d_{A}^{2}\times d_{B}^{2}}(\mathbb{C})$.
\end{definition}
\begin{prop} \textit{For a tensor matrix $X\otimes Y$
with the factor matrix $X$ of size $d_{A}\times d_{A}$ and the
factor matrix $Y=[y_{\mu\nu}]$ of size $d_{B}\times d_{B}$,
$Z=[y_{\mu\nu}X]=[Z_{\mu\nu}]$. We have:
\begin{eqnarray}
\mathcal{R}(X\otimes
Y)=|X\rangle\rangle\langle\langle Y^{\ast}|.
\end{eqnarray}
Moreover, a nonzero
matrix $Z$ can be factorized as $X\otimes Y$ if and only if
$rank[\mathcal{R}(Z)]=1$.}
\end{prop}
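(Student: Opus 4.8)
The plan is to prove the displayed identity $\mathcal{R}(X\otimes Y)=|X\rangle\rangle\langle\langle Y^{\ast}|$ first, by comparing the two sides column by column, and then to read off the rank characterization from it together with the bijectivity of vectorization and of $\mathcal{R}$.

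\textbf{Step 1: the identity.} Start from the type-I block form $Z=X\otimes Y=[Z_{\mu\nu}]$ with $Z_{\mu\nu}=y_{\mu\nu}X$, as recalled in Section~1. By the definition of $\mathcal{R}$, the columns of $\mathcal{R}(Z)$ are the vectorizations $|Z_{\mu\nu}\rangle\rangle$, listed with $\mu$ running fastest inside each block indexed by $\nu$; and by linearity of vectorization (item~(i) of the Remark) $|Z_{\mu\nu}\rangle\rangle=y_{\mu\nu}|X\rangle\rangle$. On the other side, the column of $|X\rangle\rangle\langle\langle Y^{\ast}|$ in position $(\nu-1)d_{B}+\mu$ is $|X\rangle\rangle$ scaled by the $((\nu-1)d_{B}+\mu)$-th entry of the row vector $\langle\langle Y^{\ast}|=(|Y^{\ast}\rangle\rangle)^{\dagger}=(|Y\rangle\rangle)^{\mbox{t}}$ (item~(iv) of the Remark), and that entry is exactly $y_{\mu\nu}$. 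Since the two lists of columns coincide, the identity follows. Equivalently, one may verify it entrywise: $\mathcal{R}(Z)_{\stackrel{mn}{\mu\nu}}=Z_{\stackrel{m\mu}{n\nu}}=x_{mn}y_{\mu\nu}=(|X\rangle\rangle\langle\langle Y^{\ast}|)_{\stackrel{mn}{\mu\nu}}$, using the element-wise description of $\mathcal{R}$.

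\textbf{Step 2: the rank characterization.} For the ``only if'' direction, if $Z=X\otimes Y\neq0$ then $X\neq0$ and $Y\neq0$, so $|X\rangle\rangle$ and $\langle\langle Y^{\ast}|$ are both nonzero; hence $\mathcal{R}(Z)=|X\rangle\rangle\langle\langle Y^{\ast}|$ is a nonzero outer product and therefore has rank exactly $1$. For the converse, suppose $rank[\mathcal{R}(Z)]=1$ and write $\mathcal{R}(Z)=|u\rangle\langle v|$ with $0\neq|u\rangle\in\mathbb{C}^{d_{A}^{2}}$ and $0\neq|v\rangle\in\mathbb{C}^{d_{B}^{2}}$. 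Because vectorization is a linear bijection onto the vector space (item~(ii) of the Remark), there is a matrix $X$ of size $d_{A}\times d_{A}$ with $|X\rangle\rangle=|u\rangle$, and there is a matrix $Y$ of size $d_{B}\times d_{B}$ with $|Y^{\ast}\rangle\rangle=|v\rangle$, hence $\langle\langle Y^{\ast}|=\langle v|$. Then $\mathcal{R}(Z)=|X\rangle\rangle\langle\langle Y^{\ast}|=\mathcal{R}(X\otimes Y)$ by Step~1, and since $\mathcal{R}$ is one-one (as noted when it was defined) we conclude $Z=X\otimes Y$.

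I expect the only real care needed to be notational: one must keep the type-I ordering consistent across the block decomposition of $X\otimes Y$, the column order in the definition of $\mathcal{R}(Z)$, and the column-stacking order in $|Y\rangle\rangle$, so that the complex conjugate on $Y$ ends up on the correct factor. Once the index dictionary $(\mu,\nu)\leftrightarrow(\nu-1)d_{B}+\mu$ is pinned down, both steps are immediate from linearity of vectorization and the bijectivity of $\mathcal{R}$, all of which are already available.
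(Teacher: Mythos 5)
Your Step 1 is essentially identical to the paper's own proof: the paper also lists the columns of $\mathcal{R}(X\otimes Y)$ as $|y_{\mu\nu}X\rangle\rangle=y_{\mu\nu}|X\rangle\rangle$, factors out $|X\rangle\rangle$ to get $|X\rangle\rangle(|Y\rangle\rangle)^{\mbox{t}}$, and invokes $\langle\langle Y^{\ast}|=(|Y\rangle\rangle)^{\mbox{t}}$. The paper in fact stops there and never argues the ``Moreover'' rank-one characterization; your Step 2 (nonzero outer product has rank one, and conversely a rank-one $\mathcal{R}(Z)=|u\rangle\langle v|$ lifts to $X\otimes Y$ via surjectivity of vectorization and injectivity of $\mathcal{R}$) is a correct completion of that omitted half.
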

\begin{proof}
\begin{eqnarray*}
\mathcal{R}(X\otimes Y)
              &=&[|y_{11}X\rangle\rangle,\ldots,|y_{d_{B}1}X\rangle\rangle;\ldots;|y_{1d_{B}}X\rangle\rangle,\ldots,|y_{d_{B}d_{B}}X\rangle\rangle]\\
              &=&[y_{11}|X\rangle\rangle,\ldots,y_{d_{B}1}|X\rangle\rangle;\ldots;y_{1d_{B}}|X\rangle\rangle,\ldots,y_{d_{B}d_{B}}|X\rangle\rangle]\\
              &=&|X\rangle\rangle[y_{11},\ldots,y_{d_{B}1};\ldots;y_{1d_{B}},\ldots,y_{d_{B}d_{B}}]=|X\rangle\rangle(|Y\rangle\rangle)^{\mbox{t}}\\
              &=&|X\rangle\rangle\langle\langle Y^{\ast}|.
\end{eqnarray*}
\end{proof}
For a general block matrix $Z$, it holds that
\begin{eqnarray}
\mathcal{R}(Z)&=&\mathcal{R}(\sum_{\mu,\nu=1}^{d_{B}}Z_{\mu\nu}\otimes|\mu\rangle\langle \nu|)=\sum_{\mu,\nu=1}^{d_{B}}\mathcal{R}(Z_{\mu\nu}\otimes|\mu\rangle\langle \nu|)\nonumber\\
&=&\sum_{\mu,\nu=1}^{d_{B}}|Z_{\mu\nu}\rangle\rangle(|\mu\nu\rangle)^{\mbox{t}}=\sum_{\mu,\nu=1}^{d_{B}}|Z_{\mu\nu}\rangle\rangle\langle \mu\nu|.
\end{eqnarray}
Before the properties of realignment derived, we need to know one useful operator called \emph{swap operator}, defined as $S=\sum_{i,j=1}^{N^{2}}|ij\rangle\langle ji|$, acting on $\mathcal{H}_{N}\otimes\mathcal{H}_{N}$.
Then by simple computations, we have:
\begin{prop}
For any $X$ and $Y$ of the same size $N\times N$. We have:
\begin{enumerate}[(i)]
\item $S$ is self-adjoint, unitary, symmetric, and orthogonal;\\
\item $|X^{\mbox{t}}\rangle\rangle=S|X\rangle\rangle$, $L_{\mathcal{T}}=S$;\\
\item $S(X\otimes Y)S=Y\otimes X$.
\end{enumerate}
\end{prop}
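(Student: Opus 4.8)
The plan is to work directly from the definition $S=\sum_{i,j=1}^{N}|ij\rangle\langle ji|$, using the completeness relation $\sum_{i}|i\rangle\langle i|=I_{N}$ and the vectorization conventions of Remark (iii); throughout I write $|X\rangle\rangle=\sum_{i,j}x_{ij}|ij\rangle$, where $|ij\rangle=|i\rangle\otimes|j\rangle=|E_{ij}\rangle\rangle$. Everything here is a short index computation, so there is no genuine obstacle; the only thing that needs care is the index bookkeeping, namely keeping straight which tensor slot carries which label in $|ij\rangle\langle ji|$ and holding the vectorization convention fixed so that a transpose (and not a complex conjugate) shows up in (ii).

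For (i), I would first compute $S^{\dagger}$ and $S^{2}$. Taking the adjoint termwise gives $S^{\dagger}=\sum_{i,j}|ji\rangle\langle ij|$, and relabelling $i\leftrightarrow j$ shows $S^{\dagger}=S$, so $S$ is self-adjoint. Multiplying two copies of $S$ and using $\langle ji|kl\rangle=\delta_{jk}\delta_{il}$ collapses the quadruple sum to $\sum_{i,j}|ij\rangle\langle ij|=I_{N}\otimes I_{N}=I$, so $S^{2}=I$; together with self-adjointness this gives $S^{\dagger}S=I$, i.e. $S$ is unitary. Since the standard basis vectors $|ij\rangle$ have real entries, $S$ is a real matrix, whence $S^{\mbox{t}}=S^{\dagger}=S$ (symmetric) and $S^{\mbox{t}}S=S^{2}=I$ (orthogonal).

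For (ii), I apply $S$ to $|X\rangle\rangle$. Since $\langle lk|ij\rangle=\delta_{li}\delta_{kj}$ we have $S|ij\rangle=|ji\rangle$, hence $S|X\rangle\rangle=\sum_{i,j}x_{ij}|ji\rangle$; relabelling the summation indices, this equals $\sum_{i,j}x_{ji}|ij\rangle=\sum_{i,j}(X^{\mbox{t}})_{ij}|ij\rangle=|X^{\mbox{t}}\rangle\rangle$. Because this holds for every $X$ and the superoperator matrix is uniquely determined by its action on vectorized inputs, $S$ is precisely the matrix implementing the transposition superoperator $\mathcal{T}:X\mapsto X^{\mbox{t}}$, which is the content of $L_{\mathcal{T}}=S$.

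For (iii), the cleanest route is to verify the identity on product vectors and then invoke bilinearity. From the definition and completeness, $S(|a\rangle\otimes|b\rangle)=\sum_{i,j}|ij\rangle\langle j|a\rangle\langle i|b\rangle=\sum_{i,j}a_{j}b_{i}\,|i\rangle\otimes|j\rangle=|b\rangle\otimes|a\rangle$. Using the tensor-product rule $(X\otimes Y)(|a\rangle\otimes|b\rangle)=(X|a\rangle)\otimes(Y|b\rangle)$, this gives $S(X\otimes Y)S(|a\rangle\otimes|b\rangle)=S(X\otimes Y)(|b\rangle\otimes|a\rangle)=S\left((X|b\rangle)\otimes(Y|a\rangle)\right)=(Y|a\rangle)\otimes(X|b\rangle)=(Y\otimes X)(|a\rangle\otimes|b\rangle)$. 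Since the product vectors span $\mathcal{H}_{N}\otimes\mathcal{H}_{N}$ and all the maps involved are linear, the operator identity $S(X\otimes Y)S=Y\otimes X$ follows; alternatively one can match matrix entries, $\langle ij|S(X\otimes Y)S|kl\rangle=y_{ik}x_{jl}=\langle ij|(Y\otimes X)|kl\rangle$.
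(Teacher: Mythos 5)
Your proof is correct and is exactly the kind of direct index computation the paper alludes to with ``by simple computations'' (the paper supplies no proof of this proposition at all); your verification of $S|ij\rangle=|ji\rangle$, $S^{2}=I$, and the action on product vectors fills in precisely what was omitted. One minor point worth flagging: the paper's definition writes $S=\sum_{i,j=1}^{N^{2}}|ij\rangle\langle ji|$, where the upper limit $N^{2}$ is evidently a typo for $N$, and your reading $\sum_{i,j=1}^{N}$ is the correct one.
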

\begin{definition} With $S$ as above, the \emph{flip} transformation of matrices over a bipartite quantum system is defined as\\
\indent
$\mathcal{F}(Z)=SZS \ \mbox{with}\ \mathcal{F}(Z)_{\stackrel{m\mu}{n\nu}}=Z_{\stackrel{\mu m}{\nu n}}$.\\
Similarly, we can define two \emph{partial flips} as $\mathcal{F}_{r}(Z)=SZ\ \mbox{with}\ \mathcal{F}_{r}(Z)_{\stackrel{m\mu}{n\nu}}=Z_{\stackrel{\mu m}{n\nu}}$
and $\mathcal{F}_{c}(Z)=ZS\ \mbox{with}\ \mathcal{F}_{c}(Z)_{\stackrel{m\mu}{n\nu}}=Z_{\stackrel{m\mu}{\nu n}}$ (where `r' and `c' mean that row and column, respectively). Later, we will see that $L_{\mathcal{F}}=S\otimes S$.
\end{definition}
\begin{lem}\label{lem2.7}(\cite{Havel}) \textit{Given any two square matrices $X,Y$ of the same size, we have the following equation:
\begin{eqnarray}|X\otimes Y\rangle\rangle=(I\otimes S\otimes I)|X\rangle\rangle|Y\rangle\rangle.\end{eqnarray}}
\end{lem}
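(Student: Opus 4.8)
The plan is to verify the identity by expanding both sides in the standard product basis and matching components, using the vectorization formulas already established in Remark 2.2. First I would fix orthonormal bases $\{|i\rangle\}_{i=1}^{N}$ for each copy of $\mathcal{H}_{N}$, so that $|X\rangle\rangle = \sum_{i,j} x_{ij}|i\rangle|j\rangle$ and $|Y\rangle\rangle = \sum_{k,l} y_{kl}|k\rangle|l\rangle$. Tensoring gives $|X\rangle\rangle|Y\rangle\rangle = \sum_{i,j,k,l} x_{ij}y_{kl}\,|i\rangle|j\rangle|k\rangle|l\rangle$, a vector living in a fourfold tensor product $\mathcal{H}_{N}^{\otimes 4}$. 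Meanwhile, $X\otimes Y$ has matrix entries $(X\otimes Y)_{\stackrel{ik}{jl}} = x_{ij}y_{kl}$ in the type-I ordering, so $|X\otimes Y\rangle\rangle = \sum_{i,j,k,l} x_{ij}y_{kl}\,|ik\rangle|jl\rangle = \sum_{i,j,k,l} x_{ij}y_{kl}\,|i\rangle|k\rangle|j\rangle|l\rangle$. Comparing the two expressions, the only discrepancy is that the middle two tensor factors $|k\rangle$ and $|j\rangle$ are interchanged; that is precisely the effect of applying the swap operator $S$ to the second and third slots while leaving the first and fourth alone, i.e. applying $I\otimes S\otimes I$.

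Concretely, the key step is the computation
\begin{eqnarray*}
(I\otimes S\otimes I)|X\rangle\rangle|Y\rangle\rangle &=& (I\otimes S\otimes I)\sum_{i,j,k,l} x_{ij}y_{kl}\,|i\rangle|j\rangle|k\rangle|l\rangle\\
&=& \sum_{i,j,k,l} x_{ij}y_{kl}\,|i\rangle(S|j\rangle|k\rangle)|l\rangle = \sum_{i,j,k,l} x_{ij}y_{kl}\,|i\rangle|k\rangle|j\rangle|l\rangle,
\end{eqnarray*}
using the defining action $S|j\rangle|k\rangle = |k\rangle|j\rangle$ from the definition of the swap operator (with $N$ in place of $N^2$ as the index range, since $S$ here acts on $\mathcal{H}_N\otimes\mathcal{H}_N$). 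The right-hand side now matches the expansion of $|X\otimes Y\rangle\rangle$ term by term, which establishes the identity. One should be careful to note that the three tensor-factor groupings "$I$", "$S$", "$I$" have sizes $N$, $N^2$, $N$ respectively, so that $I\otimes S\otimes I$ acts on a space of dimension $N^4 = (N^2)^2$, matching the length of the vector $|X\otimes Y\rangle\rangle$.

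The only real subtlety — and the main thing to get right — is bookkeeping of the index orderings: one must use the type-I convention consistently when writing out $|X\otimes Y\rangle\rangle$, and one must track which tensor factor in the fourfold product corresponds to "rows of $X$", "columns of $X$", "rows of $Y$", "columns of $Y$". Once the labelling is pinned down, the proof is a one-line permutation of basis kets and nothing deeper is involved. An alternative, more structural route would be to invoke the linear-operator representation: since vectorization is linear and $\mathcal{R}$, flips, etc., all have matrix representatives (the paper writes $L_{\mathcal{T}} = S$, $L_{\mathcal{F}} = S\otimes S$), one could identify the linear map $X\otimes Y \mapsto |X\rangle\rangle|Y\rangle\rangle$ composed with vectorization and check it agrees with $I\otimes S\otimes I$ on the spanning set $\{E_{ij}\otimes E_{kl}\}$; but this reduces to the same index computation, so the direct basis expansion is the cleanest presentation.
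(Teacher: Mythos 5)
Your proof is correct. Note that the paper itself gives no argument for Lemma~\ref{lem2.7} --- it simply cites Havel --- so your direct basis expansion supplies the missing verification; it is the standard computation, it uses the paper's type-I ordering consistently (so that $|X\otimes Y\rangle\rangle=\sum_{i,j,k,l}x_{ij}y_{kl}\,|i\rangle|k\rangle|j\rangle|l\rangle$ while $|X\rangle\rangle|Y\rangle\rangle=\sum_{i,j,k,l}x_{ij}y_{kl}\,|i\rangle|j\rangle|k\rangle|l\rangle$), and you correctly flag that the swap $S$ here acts on $\mathcal{H}_N\otimes\mathcal{H}_N$ with index range $N$ (the paper's $\sum_{i,j=1}^{N^2}$ in the definition of $S$ is a typo) and that the factors of $I\otimes S\otimes I$ have dimensions $N$, $N^2$, $N$.
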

\begin{prop}
\begin{enumerate}[(i)]
 \item If $X,Y$ are matrices of the same size $N\times N$, then
\begin{eqnarray}|\mathcal{R}(X\otimes Y)\rangle\rangle=|X\rangle\rangle|Y\rangle\rangle;\end{eqnarray} i.e., the vectorization of the  matrix $|X\rangle\rangle\langle\langle Y^{\ast}|$ is $|X\rangle\rangle|Y\rangle\rangle$.\\
 \item Let $Z$ be a matrix of size $N^{2}\times N^{2}$. Then: $|\mathcal{R}(Z)\rangle\rangle=I\otimes S\otimes I|Z\rangle\rangle$, thus $L_{\mathcal{R}}=I\otimes S\otimes I$.\\
\item If $\Omega(\cdot)=\sum_{i,j=1}^{N}(I\otimes |i\rangle\langle
j|)\cdot(|i\rangle\langle j|\otimes I)$, then: for any matrices
$X,Y$ of the same size $N\times N$,
\begin{eqnarray}\Omega(|X\rangle\rangle\langle\langle Y|)=X\otimes Y^{\ast}\
\mbox{and}\  \Omega(X\otimes
Y^{\ast})=|X\rangle\rangle\langle\langle Y|=\mathcal{R}(X\otimes
Y^{\ast}).\end{eqnarray}
More generally, we have
$\Omega(Z)=\mathcal{R}(Z)$ for any matrix $Z$ of size $N^{2}\times N^{2}$.
\end{enumerate}
\end{prop}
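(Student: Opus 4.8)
The plan is to prove the three parts in sequence, obtaining (ii) and (iii) from (i) by linearity. For part~(i), I would start from the earlier Proposition on realignment of a tensor matrix, which already records $\mathcal{R}(X\otimes Y)=|X\rangle\rangle\langle\langle Y^{\ast}|$ (so the ``i.e.''\ clause in (i) is merely a restatement). Rewriting the bra by Remark~(iv), $\langle\langle Y^{\ast}|=(|Y\rangle\rangle)^{\mbox{t}}$, exhibits $\mathcal{R}(X\otimes Y)=|X\rangle\rangle\,(|Y\rangle\rangle)^{\mbox{t}}$ as a rank-one matrix, i.e.\ a dyad $uv^{\mbox{t}}$ with $u=|X\rangle\rangle$, $v=|Y\rangle\rangle$. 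The one elementary fact needed is that, in the type-I convention of this paper, vectorizing a dyad $uv^{\mbox{t}}$ ($u$ of length $p$, $v$ of length $q$) produces $u\otimes v$: the $j$th column of $uv^{\mbox{t}}$ equals $v_{j}\,u$, and stacking these columns is exactly the block pattern defining $u\otimes v$. Applying this gives $|\mathcal{R}(X\otimes Y)\rangle\rangle=|X\rangle\rangle\otimes|Y\rangle\rangle=|X\rangle\rangle|Y\rangle\rangle$.

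For part~(ii), I would combine (i) with Lemma~\ref{lem2.7}. Since the swap $S$ is self-adjoint and unitary, $I\otimes S\otimes I$ is an involution, so Lemma~\ref{lem2.7} can be read as $|X\rangle\rangle|Y\rangle\rangle=(I\otimes S\otimes I)|X\otimes Y\rangle\rangle$; feeding this into (i) yields $|\mathcal{R}(X\otimes Y)\rangle\rangle=(I\otimes S\otimes I)|X\otimes Y\rangle\rangle$ for all $X,Y$ of size $N\times N$. Both $Z\mapsto|\mathcal{R}(Z)\rangle\rangle$ and $Z\mapsto(I\otimes S\otimes I)|Z\rangle\rangle$ are linear in $Z$, and every $Z\in\mathcal{M}_{N^{2}\times N^{2}}(\mathbb{C})$ is a linear combination of tensor products $X\otimes Y$ (explicitly $Z=\sum_{\mu,\nu}Z_{\mu\nu}\otimes|\mu\rangle\langle\nu|$), so the identity extends to every $Z$; that is precisely $L_{\mathcal{R}}=I\otimes S\otimes I$.

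For part~(iii), I would compute $\Omega(|X\rangle\rangle\langle\langle Y|)$ directly: writing $|X\rangle\rangle\langle\langle Y|=\sum_{m,n,\mu,\nu}x_{mn}y_{\mu\nu}^{\ast}\,(|m\rangle\langle\mu|)\otimes(|n\rangle\langle\nu|)$ and applying $\Omega$ termwise via $(A_{1}\otimes A_{2})(B_{1}\otimes B_{2})(C_{1}\otimes C_{2})=A_{1}B_{1}C_{1}\otimes A_{2}B_{2}C_{2}$ together with $\langle j|k\rangle=\delta_{jk}$, the inner double sum over $i,j$ collapses and leaves $\sum x_{mn}y_{\mu\nu}^{\ast}\,(|m\rangle\langle n|)\otimes(|\mu\rangle\langle\nu|)=X\otimes Y^{\ast}$. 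The same bookkeeping on a single dyad shows that $\Omega$ sends $(|a\rangle\langle b|)\otimes(|c\rangle\langle d|)$ to $(|a\rangle\langle c|)\otimes(|b\rangle\langle d|)$, hence is an involution, so $\Omega(X\otimes Y^{\ast})=|X\rangle\rangle\langle\langle Y|$; and this equals $\mathcal{R}(X\otimes Y^{\ast})$ by the earlier Proposition applied with $Y$ replaced by $Y^{\ast}$ (using $(Y^{\ast})^{\ast}=Y$). Finally, $\Omega$ and $\mathcal{R}$ are linear and agree on every $X\otimes W$ (set $W=Y^{\ast}$), and such tensors span $\mathcal{M}_{N^{2}\times N^{2}}(\mathbb{C})$, so $\Omega(Z)=\mathcal{R}(Z)$ for all $Z$.

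The only genuine obstacle is keeping the ordering and conjugation conventions straight: distinguishing $\langle\langle Y|$ from $\langle\langle Y^{\ast}|$, using the transpose rather than the conjugate in Remark~(iv), and unfolding a type-I basis vector $|m\mu\rangle$ as $|m\rangle\otimes|\mu\rangle$ so that $I\otimes S\otimes I$ acts on the correct pair of legs in part~(ii). Once those conventions are fixed, every computation above is routine index manipulation.
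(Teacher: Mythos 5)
Your proposal is correct and follows essentially the same route as the paper: part (i) from $\mathcal{R}(X\otimes Y)=|X\rangle\rangle\langle\langle Y^{\ast}|$ viewed as a dyad, part (ii) from Lemma \ref{lem2.7} plus linearity, and part (iii) by an index computation with the elementary tensors $|m\rangle\langle\mu|\otimes|n\rangle\langle\nu|$. The only cosmetic difference is in the last claim of (iii): you extend $\Omega=\mathcal{R}$ from tensor products to all $Z$ by linearity and spanning, while the paper verifies the identity directly on matrix elements $\langle m\mu|\cdot|n\nu\rangle$; both are equally routine.
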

\begin{proof} (i) and (ii) follow easily from Lemma \ref{lem2.7}.\\
(iii) Together with Lemma \ref{lem2.7}, it follows from (i) that
\begin{eqnarray*}
& &|\Omega(X\otimes Y^{\ast})\rangle\rangle\\
&=&|\sum_{i,j=1}^{N}(I\otimes |i\rangle\langle j|)X\otimes Y^{\ast}(|i\rangle\langle j|\otimes I)\rangle\rangle=\sum_{i,j=1}^{N}|(I\otimes |i\rangle\langle j|)X\otimes Y^{\ast}(|i\rangle\langle j|\otimes I)\rangle\rangle\\
&=&\sum_{i,j=1}^{N}(I\otimes |i\rangle\langle j|)\otimes(|j\rangle\langle i|\otimes I)|X\otimes Y^{\ast}\rangle\rangle
=\sum_{i,j=1}^{N}(I\otimes |ij\rangle\langle ji|\otimes I)|X\otimes Y^{\ast}\rangle\rangle\\
&=&(I\otimes S\otimes I)|X\otimes Y^{\ast}\rangle\rangle=|X\rangle\rangle|Y^{\ast}\rangle\rangle=|\mathcal{R}(X\otimes Y^{\ast})\rangle\rangle.
\end{eqnarray*}
Hence $\Omega(X\otimes Y^{\ast})=\mathcal{R}(X\otimes Y^{\ast})=|X\rangle\rangle\langle\langle Y|$. By simple computations, we have also $\Omega(|X\rangle\rangle\langle\langle Y|)=X\otimes Y^{\ast}$. Since
$\langle m\mu|\mathcal{R}(Z)|n\nu\rangle=\mathcal{R}(Z)_{\stackrel{m\mu}{n\nu}}=Z_{\stackrel{mn}{\mu\nu}}$ and
\\
\indent
$\langle m\mu|\sum_{i,j=1}^{N}(I\otimes |i\rangle\langle j|)Z(|i\rangle\langle j|\otimes I)|n\nu\rangle=\sum_{i,j=1}^{N}\delta_{\mu i}\delta_{nj}\langle mj|Z|i\nu\rangle=\langle mn|Z|\mu\nu\rangle=Z_{\stackrel{mn}{\mu\nu}}
$,\\
i.e., $\Omega(Z)=\mathcal{R}(Z)$. In such a way, we obtain the explicit expression for the realignment transformation:
\begin{eqnarray}
\mathcal{R}(Z)=\sum_{i,j=1}^{N}(I\otimes |i\rangle\langle j|)Z(|i\rangle\langle j|\otimes I)
\end{eqnarray}
for any matrix $Z$ of size $N^{2}\times N^{2}$.
\end{proof}
Next the relationship among \emph{the realignment, the
transposition, and the flip} over a bipartite quantum system will be
discussed.
First recall that the transposition $\mathcal{T}$ over bipartite quantum system $\mathcal{H}_{A}\otimes\mathcal{H}_{B}$ are defined as $\mathcal{T}(Z)\equiv\mathcal{T}_{A}\otimes\mathcal{T}_{B}(Z)$ with $\mathcal{T}(Z)_{\stackrel{m\mu}{n\nu}}=Z_{\stackrel{n\nu}{m\mu}}$, where $\mathcal{T}_{A}$ and $\mathcal{T}_{B}$ are the transpositions with respect to subsystems A and B, respectively. Apparently, $\mathcal{T}_{A}(Z)_{\stackrel{m\mu}{n\nu}}=Z_{\stackrel{n\mu}{m\nu}}$ and $\mathcal{T}_{B}(Z)_{\stackrel{m\mu}{n\nu}}=Z_{\stackrel{m\nu}{n\mu}}$.
\begin{prop}
\begin{enumerate}[(i)]
 \item $\mathcal{T},\mathcal{R}$ and $\mathcal{F}$ all are involution; i.e., $\mathcal{T}\circ\mathcal{T}=\mathcal{R}\circ\mathcal{R}=\mathcal{F}\circ\mathcal{F}=\mbox{Id}$.
 \item $\mathcal{F}\circ\mathcal{T}=\mathcal{T}\circ\mathcal{F}$, $\mathcal{T}\circ\mathcal{R}\neq\mathcal{R}\circ\mathcal{T}$ and $\mathcal{F}\circ\mathcal{R}\neq\mathcal{R}\circ\mathcal{F}$, where $\circ$ stands for the composite of transformations.
 \item $\mathcal{T}\circ\mathcal{R}=\mathcal{R}\circ\mathcal{F}$ and $\mathcal{R}\circ\mathcal{T}=\mathcal{F}\circ\mathcal{R}$.
 \item $\mathcal{R}'=\mathcal{T}\circ\mathcal{R}\circ\mathcal{T}=\mathcal{F}\circ\mathcal{R}\circ\mathcal{F}$.
 \item $\mathcal{F}_{r}=\mathcal{R}\circ\mathcal{T}_{A}\circ\mathcal{R}$ and $\mathcal{F}_{c}=\mathcal{R}\circ\mathcal{T}_{B}\circ\mathcal{R}$.
 \end{enumerate}
\end{prop}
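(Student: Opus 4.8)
The plan is to reduce the whole statement to elementary bookkeeping in the symmetric group $S_4$. Put $N:=d_A=d_B$ (the standing assumption under which $\mathcal F$, and indeed $\mathcal R\circ\mathcal R$, make sense), and write a generic $N^2\times N^2$ matrix with four indices as $Z_{\stackrel{m\mu}{n\nu}}$, the four index \emph{slots} being, in order, the two components of the row label followed by the two components of the column label. Reading off the index rules recorded just before the statement, each of $\mathcal T,\mathcal R,\mathcal F$ (and also $\mathcal R',\mathcal T_A,\mathcal T_B,\mathcal F_r,\mathcal F_c$) acts by a fixed permutation of these four slots, $\mathcal A(Z)_{s_1s_2s_3s_4}=Z_{s_{\alpha(1)}s_{\alpha(2)}s_{\alpha(3)}s_{\alpha(4)}}$, with
\[
\mathcal T\leftrightarrow(1\,3)(2\,4),\quad \mathcal R\leftrightarrow(2\,3),\quad \mathcal F\leftrightarrow(1\,2)(3\,4),\quad \mathcal R'\leftrightarrow(1\,4),
\]
\[
\mathcal T_A\leftrightarrow(1\,3),\quad \mathcal T_B\leftrightarrow(2\,4),\quad \mathcal F_r\leftrightarrow(1\,2),\quad \mathcal F_c\leftrightarrow(3\,4).
\]

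First I would record two structural facts. (a) The assignment $\mathcal A\mapsto\alpha$ is a homomorphism: substituting one index rule into the other gives $(\mathcal A\circ\mathcal B)(Z)_{s_1s_2s_3s_4}=Z_{s_{(\alpha\circ\beta)(1)}\cdots s_{(\alpha\circ\beta)(4)}}$, so $\mathcal A\circ\mathcal B$ corresponds to $\alpha\circ\beta$ (one must fix once and for all that $\mathcal A\circ\mathcal B$ means ``apply $\mathcal B$ first''). (b) The assignment is injective: from $\mathcal A$ one recovers $\alpha$ by seeing where a single matrix unit is sent, distinct $\alpha$ permuting the standard basis $\{E_{\stackrel{m\mu}{n\nu}}\}$ of $\mathcal M_{N^2}(\mathbb C)$ in distinct ways. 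Granting (a) and (b), every clause becomes a finite identity or non-identity in $S_4$.

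Then the checks. (i) Each of $(1\,3)(2\,4)$, $(2\,3)$, $(1\,2)(3\,4)$ squares to the identity, so $\mathcal T\circ\mathcal T=\mathcal R\circ\mathcal R=\mathcal F\circ\mathcal F=\mathrm{Id}$ (equivalently: $\mathcal T$ is the full transposition, $\mathcal F(Z)=SZS$ with $S^2=I$, and $\mathcal R\circ\mathcal R=\mathrm{Id}$ also follows from $L_{\mathcal R}=I\otimes S\otimes I$ squaring to $I$). (ii) Since $(1\,2)(3\,4)$ and $(1\,3)(2\,4)$ both lie in the (abelian) Klein four-group they commute, giving $\mathcal F\circ\mathcal T=\mathcal T\circ\mathcal F$; while $(1\,3)(2\,4)\circ(2\,3)=(1\,3\,4\,2)$ but $(2\,3)\circ(1\,3)(2\,4)=(1\,2\,4\,3)$, and $(1\,2)(3\,4)\circ(2\,3)=(1\,2\,4\,3)$ but $(2\,3)\circ(1\,2)(3\,4)=(1\,3\,4\,2)$, so by (b) $\mathcal T\circ\mathcal R\neq\mathcal R\circ\mathcal T$ and $\mathcal F\circ\mathcal R\neq\mathcal R\circ\mathcal F$ (a single disagreeing matrix entry can be exhibited if a concrete witness is preferred). (iii) From the same computations $(1\,3)(2\,4)\circ(2\,3)=(1\,3\,4\,2)=(2\,3)\circ(1\,2)(3\,4)$ and $(2\,3)\circ(1\,3)(2\,4)=(1\,2\,4\,3)=(1\,2)(3\,4)\circ(2\,3)$, i.e.\ $\mathcal T\circ\mathcal R=\mathcal R\circ\mathcal F$ and $\mathcal R\circ\mathcal T=\mathcal F\circ\mathcal R$. (iv) Conjugating $(2\,3)$ by $(1\,3)(2\,4)$, and also by $(1\,2)(3\,4)$, yields $(1\,4)$ in both cases, the permutation of $\mathcal R'$. (v) Conjugating $(1\,3)$ by $(2\,3)$ gives $(1\,2)=\mathcal F_r$, and conjugating $(2\,4)$ by $(2\,3)$ gives $(3\,4)=\mathcal F_c$.

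The one point requiring care — the ``hard part'', modest as it is — is the very first step: extracting the correct slot-permutation from each index rule (keeping in mind that $\mathcal R$ turns one row/column split into another, so that $\mathcal R\circ\mathcal R$ is only meaningful once $d_A=d_B$, and that $\mathcal T_A,\mathcal T_B$ swap a row slot with a column slot), and pinning down the composition convention in (a) so that $\mathcal A\mapsto\alpha$ is a homomorphism rather than an anti-homomorphism. Everything after that is mechanical. As an independent sanity check, some clauses also follow from the operator pictures: using $\mathcal R(X\otimes Y)=|X\rangle\rangle\langle\langle Y^\ast|$, the relations $\langle\langle Y^\ast|=(|Y\rangle\rangle)^{\mathrm t}$ and $(|X\rangle\rangle)^{\mathrm t}=\langle\langle X^\ast|$, and $S(X\otimes Y)S=Y\otimes X$, one gets on simple tensors $\mathcal T(\mathcal R(X\otimes Y))=|Y\rangle\rangle\langle\langle X^\ast|=\mathcal R(Y\otimes X)=\mathcal R(\mathcal F(X\otimes Y))$, and (iii) then follows by linearity of $\mathcal T,\mathcal R,\mathcal F$.
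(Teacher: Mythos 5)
Your proof is correct, and at bottom it is the same index bookkeeping that the paper waves at with ``it is trivially by some computations''; what you add is the systematic packaging. Encoding each transformation as a permutation of the four index slots and checking that composition of transformations corresponds to composition of permutations turns all five clauses --- including the two non-equalities in (ii) and the conjugation identities (iv) and (v), which the paper does not even sketch --- into one-line checks in $S_{4}$. I verified your slot assignments against the index rules in the text and your cycle computations; they are all right. Two small remarks. First, your injectivity step (b) needs $N\geq 2$, but so does the statement itself: for $N=1$ every map is the identity and the inequalities in (ii) fail, so this is a standing assumption rather than a gap. Second, the convention issue you flag is real: the paper's own worked example, $[\mathcal T\circ\mathcal R(X)]_{\stackrel{m\mu}{n\nu}}=[\mathcal T(X)]_{\stackrel{mn}{\mu\nu}}$, shows the authors read $\mathcal T\circ\mathcal R$ as ``apply $\mathcal T$ first,'' opposite to your standard right-to-left convention. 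This is harmless here, because every clause is either a palindrome (squares, conjugations) or a pair of identities that the convention reversal merely swaps --- concretely, the identity your $S_{4}$ computation attaches to the label $\mathcal T\circ\mathcal R=\mathcal R\circ\mathcal F$ is the one the paper's convention would call $\mathcal R\circ\mathcal T=\mathcal F\circ\mathcal R$, and vice versa, so the proposition as a whole is proved either way. Your closing operator-level check of (iii) on simple tensors, via $\mathcal R(X\otimes Y)=|X\rangle\rangle\langle\langle Y^{\ast}|$ and $S(X\otimes Y)S=Y\otimes X$, is also correct and is a genuinely independent confirmation that the paper does not offer.
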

\begin{proof} It is trivially by some computations. For example, $[\mathcal{T}\circ\mathcal{R}(X)]_{\stackrel{m\mu}{n\nu}}=
[\mathcal{T}(X)]_{\stackrel{mn}{\mu\nu}}=X_{\stackrel{\mu\nu}{mn}}$ and $[\mathcal{R}\circ\mathcal{F}(X)]_{\stackrel{m\mu}{n\nu}}=
[\mathcal{R}(X)]_{\stackrel{\mu m}{\nu n}}=X_{\stackrel{\mu\nu}{mn}}$; i.e.,
$[\mathcal{T}\circ\mathcal{R}(X)]_{\stackrel{m\mu}{n\nu}}=[\mathcal{R}\circ\mathcal{F}(X)]_{\stackrel{m\mu}{n\nu}}$
 which means that $\mathcal{T}\circ\mathcal{R}=\mathcal{R}\circ\mathcal{F}$. Others go similarly.
\end{proof}
\section{Dynamical matrices for quantum operations}
A density matrix
\begin{eqnarray}
\rho=\left[
                         \begin{array}{ccc}
                           \rho_{11} & \cdots & \rho_{1d_{B}} \\
                           \vdots & \vdots & \vdots \\
                           \rho_{d_{B}1} & \cdots & \rho_{d_{B}d_{B}} \\
                         \end{array}
                       \right]=[\rho_{\mu\nu}]
\end{eqnarray}
of size $d_{B}\times d_{B}$ may be treated as a vector
\begin{eqnarray}
|\rho\rangle\rangle=\left[
                                                               \begin{array}{c}
                                                                 \rho(\cdot,1) \\
                                                                 \vdots \\
                                                                 \rho(\cdot,d_{B}) \\
                                                               \end{array}
                                                             \right]
, \mbox{where}\ \rho(\cdot,\nu)=\left[
         \begin{array}{c}
           \rho_{1\nu} \\
           \vdots \\
           \rho_{d_{B}\nu} \\
         \end{array}
       \right](\nu=1,\ldots,d_{B}).
\end{eqnarray}
Suppose that $\rho$ and $\sigma$ act on $\mathcal{H}_{B}$ and
$\mathcal{H}_{A}$, respectively. The action of a linear
super-operator $\Phi:\rho\rightarrow
\sigma=\Phi(\rho)=[\sigma_{mn}]$ may thus be represented by a matrix
$L_{\Phi}\equiv L$ of size $d_{A}^2\times d_{B}^2$:
\begin{eqnarray}
|\sigma\rangle\rangle=|\Phi(\rho)\rangle\rangle=L|\rho\rangle\rangle\ \mbox{or}\ \sigma_{mn}=\sum_{\mu,\nu=1}^{d_{B}}L_{\stackrel{mn}{\mu\nu}}\rho_{\mu\nu}.
\end{eqnarray}
It can be written concretely as the equation of multiplicity of a \emph{supermatrix} and a \emph{supervector}:
\begin{eqnarray}
\left[\begin{array}{c}
        \sigma(\cdot,1) \\
        \vdots \\
        \sigma(\cdot,n) \\
        \vdots \\
        \sigma(\cdot, d_{A})
      \end{array}
\right]=\left[\begin{array}{ccccc}
                L_{11} & \cdots & L_{1\nu} & \cdots & L_{1d_{B}} \\
                \vdots & \vdots & \vdots & \vdots & \vdots \\
                L_{n1} & \cdots & L_{n\nu} & \cdots & L_{nd_{B}} \\
                \vdots & \vdots & \vdots & \vdots & \vdots \\
                 L_{d_{A}1} & \cdots & L_{d_{A}\nu} & \cdots & L_{d_{A}d_{B}}
              \end{array}
\right]\left[\begin{array}{c}
        \rho(\cdot,1) \\
        \vdots \\
        \rho(\cdot,\nu) \\
        \vdots \\
        \rho(\cdot, d_{B})
      \end{array}
\right],
\end{eqnarray}
where
\begin{eqnarray}
L_{n\nu}=[L_{\stackrel{mn}{\mu\nu}}]=\left[\begin{array}{ccc}
                                                     L_{\stackrel{1n}{1\nu}} & \cdots & L_{\stackrel{1n}{d_{B}\nu}} \\
                                                     \vdots & \vdots & \vdots \\
                                                     L_{\stackrel{d_{A}n}{1\nu}} & \cdots & L_{\stackrel{d_{A}n}{d_{B}\nu}}
                                                   \end{array}
\right].
\end{eqnarray}
One must be caution here that $n$ and $\nu$ stand for the block row index and the block column index, respectively; while $m$ and $\mu$ stand for row index and column index of each block. Now we give a simple example for a qubit map for later use as follows:
\begin{eqnarray}
\left[\begin{array}{c}
   \sigma_{11} \\
   \sigma_{21} \\
   \sigma_{12} \\
   \sigma_{22}
 \end{array}
\right]=\left[\begin{array}{cccc}
                L_{\stackrel{11}{11}} & L_{\stackrel{11}{21}} & L_{\stackrel{11}{12}} & L_{\stackrel{11}{22}} \\
                L_{\stackrel{21}{11}} & L_{\stackrel{21}{21}} & L_{\stackrel{21}{12}} & L_{\stackrel{21}{22}} \\
                L_{\stackrel{12}{11}} & L_{\stackrel{12}{21}} & L_{\stackrel{12}{12}} & L_{\stackrel{12}{22}} \\
                L_{\stackrel{22}{11}} & L_{\stackrel{22}{21}} & L_{\stackrel{22}{12}} & L_{\stackrel{22}{22}}
              \end{array}
\right]\left[\begin{array}{c}
   \rho_{11} \\
   \rho_{21} \\
   \rho_{12} \\
   \rho_{22}
 \end{array}
\right].
\end{eqnarray}
\begin{thrm} The requirement that the image $\sigma$ is a
density matrix, so it is Hermitian, positive with unit trace, impose
constraints on the matrix $L$:
\begin{enumerate}[(i)]
 \item $\sigma^{\dagger}=\sigma\ \Longrightarrow L^{\ast}_{\stackrel{mn}{\mu\nu}}=L_{\stackrel{nm}{\nu\mu}}$.
 \item $\sigma\geq 0\ \Longrightarrow\ [\sum_{\mu,\nu=1}^{d_{B}}L_{\stackrel{mn}{\mu\nu}}\rho_{\mu\nu}]\geq0$ for any state $\rho=[\rho_{\mu\nu}]$.
\item $\mbox{tr}(\sigma)=1\ \Longrightarrow\ \sum_{m=1}^{d_{A}}L_{\stackrel{mm}{\mu\nu}}=\delta_{\mu\nu}$.
\end{enumerate}
\end{thrm}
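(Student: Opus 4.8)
The plan is to read each of the three constraints directly off the component identity $\sigma_{mn}=\sum_{\mu,\nu=1}^{d_{B}}L_{\stackrel{mn}{\mu\nu}}\rho_{\mu\nu}$, which is linear in the entries of $\rho$, and then to extract the conditions on $L$ by testing the requirement ``$\sigma$ is a density matrix'' on sufficiently many input states $\rho$. The one mild subtlety is that density matrices do not form a linear space; but the rank-one projectors $|\mu\rangle\langle\mu|$ together with those onto $\tfrac{1}{\sqrt2}(|\mu\rangle\pm|\nu\rangle)$ and $\tfrac{1}{\sqrt2}(|\mu\rangle\pm i|\nu\rangle)$ are all legitimate states, and the values of a linear functional of $\rho$ on this family determine all of its coefficients (equivalently, these states affinely span the trace-one Hermitian hyperplane, on which a homogeneous linear functional that also vanishes at one point must vanish identically). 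I would isolate this ``coefficient-matching'' observation once and invoke it in parts (i) and (iii).

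For (i), start from $\sigma^{\dagger}=\sigma$, i.e. $\sigma_{mn}=\sigma_{nm}^{\ast}$; substitute the component identity on both sides and use the Hermiticity of $\rho$ in the form $\rho_{\mu\nu}^{\ast}=\rho_{\nu\mu}$. After relabelling the dummy indices $\mu\leftrightarrow\nu$ in one of the sums, this becomes $\sum_{\mu,\nu}\big(L_{\stackrel{mn}{\mu\nu}}-L_{\stackrel{nm}{\nu\mu}}^{\ast}\big)\rho_{\mu\nu}=0$ for every state $\rho$, and coefficient matching yields $L_{\stackrel{mn}{\mu\nu}}^{\ast}=L_{\stackrel{nm}{\nu\mu}}$, which is the claim. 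Part (ii) needs essentially no argument: $\sigma\geq 0$ means by definition that the matrix $[\sigma_{mn}]$ is positive semidefinite, and inserting $\sigma_{mn}=\sum_{\mu,\nu}L_{\stackrel{mn}{\mu\nu}}\rho_{\mu\nu}$ is verbatim the asserted inequality $\big[\sum_{\mu,\nu}L_{\stackrel{mn}{\mu\nu}}\rho_{\mu\nu}\big]\geq 0$; one only notes that it must hold for every state because $\sigma=\Phi(\rho)$ is assumed to be a density matrix whenever $\rho$ is.

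For (iii), compute $\mathrm{tr}(\sigma)=\sum_{m=1}^{d_{A}}\sigma_{mm}=\sum_{\mu,\nu=1}^{d_{B}}\big(\sum_{m=1}^{d_{A}}L_{\stackrel{mm}{\mu\nu}}\big)\rho_{\mu\nu}$ and impose that this equals $1=\mathrm{tr}(\rho)=\sum_{\mu}\rho_{\mu\mu}$ for all density matrices $\rho$; equivalently, the linear functional with coefficients $\big(\sum_{m}L_{\stackrel{mm}{\mu\nu}}\big)-\delta_{\mu\nu}$ annihilates every state. Coefficient matching then forces $\sum_{m=1}^{d_{A}}L_{\stackrel{mm}{\mu\nu}}=\delta_{\mu\nu}$. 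The only real obstacle is bookkeeping: in (i) one must keep the four indices and the complex conjugations straight through the relabelling, and in all three parts one must make the coefficient-matching step airtight given that the quantification is over density matrices rather than over a vector space — both become routine once the test family above is fixed.
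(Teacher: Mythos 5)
Your proposal is correct and follows essentially the same route as the paper: both extract the constraints on $L$ by evaluating the linear relation $\sigma_{mn}=\sum_{\mu,\nu}L_{\stackrel{mn}{\mu\nu}}\rho_{\mu\nu}$ on the rank-one projectors $|\gamma\rangle\langle\gamma|$ and on the projectors onto $\tfrac{1}{\sqrt2}(|\alpha\rangle+|\beta\rangle)$ and $\tfrac{1}{\sqrt2}(|\alpha\rangle+i|\beta\rangle)$, then combining the resulting equations (your ``coefficient matching'' is exactly the paper's Steps 1--3, just stated once as a spanning lemma). No gaps.
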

\begin{proof} (i)
\textbf{Step 1:} For state $\rho=|\gamma\rangle\langle\gamma|(\gamma\in\{1,\ldots,d_{B}\})$,
 \begin{eqnarray}
 \rho_{\mu\nu}=\langle \mu|\rho|\nu\rangle=\langle\mu|\gamma\rangle\langle\gamma|\nu\rangle=\delta_{\mu\gamma}\delta_{\nu\gamma}.
 \end{eqnarray}
 Then
\begin{eqnarray}
\sigma_{mn}=\sum_{\mu,\nu=1}^{d_{B}}L_{\stackrel{mn}{\mu\nu}}\delta_{\mu\gamma}\delta_{\nu\gamma}=L_{\stackrel{mn}{\gamma\gamma}}.
\end{eqnarray}
Since $\ \sigma^{\dagger}=\sigma$, it implies that $\sigma_{mn}=\sigma^{\ast}_{nm}$; i.e., $L_{\stackrel{mn}{\gamma\gamma}}=L^{\ast}_{\stackrel{nm}{\gamma\gamma}}(\gamma\in\{1,\ldots,d_{B}\})$.\\
\textbf{Step 2:} Setting
\begin{eqnarray}
\rho=\frac{1}{2}[|\alpha\rangle\langle\alpha|+|\beta\rangle\langle\beta|+|\alpha\rangle\langle
\beta|+|\beta\rangle\langle\alpha|]
(\alpha,\beta=1,\ldots,d_{B};\alpha\neq\beta),\label{state1}
\end{eqnarray}
we have
\begin{eqnarray}
\rho_{\mu\nu}=\frac{1}{2}[\delta_{\mu\alpha}\delta_{\nu\alpha}+
\delta_{\mu\beta}\delta_{\nu\beta}+\delta_{\mu\alpha}\delta_{\nu\beta}+\delta_{\mu\beta}\delta_{\nu\alpha}].\label{state2}
\end{eqnarray}
Hence
\begin{eqnarray}
\sigma_{mn}=\frac{1}{2}[L_{\stackrel{mn}{\alpha\alpha}}+L_{\stackrel{mn}{\beta\beta}}+L_{\stackrel{mn}{\alpha\beta}}+L_{\stackrel{mn}{\beta\alpha}}].
\end{eqnarray}

From the equation $\sigma_{mn}=\sigma^{\ast}_{nm}$, we know
$$L_{\stackrel{mn}{\alpha\alpha}}+L_{\stackrel{mn}{\beta\beta}}+L_{\stackrel{mn}{\alpha\beta}}+L_{\stackrel{mn}{\beta\alpha}}
=L^{\ast}_{\stackrel{nm}{\alpha\alpha}}+L^{\ast}_{\stackrel{nm}{\beta\beta}}+L^{\ast}_{\stackrel{nm}{\alpha\beta}}+L^{\ast}_{\stackrel{nm}{\beta\alpha}};$$
i.e.,
\begin{eqnarray}
L_{\stackrel{mn}{\alpha\beta}}+L_{\stackrel{mn}{\beta\alpha}}=L^{\ast}_{\stackrel{nm}{\alpha\beta}}+L^{\ast}_{\stackrel{nm}{\beta\alpha}}.\label{I}
\end{eqnarray}
\textbf{Step 3:} Letting
\begin{eqnarray}
\rho=\frac{1}{2}[|\alpha\rangle\langle\alpha|+|\beta\rangle\langle\beta|+\sqrt{-1}|\alpha\rangle\langle
\beta|-\sqrt{-1}|\beta\rangle\langle\alpha|],\label{state3}
\end{eqnarray}
we have
\begin{eqnarray}
\rho_{\mu\nu}=\frac{1}{2}[\delta_{\mu\alpha}\delta_{\nu\alpha}+\delta_{\mu\beta}\delta_{\nu\beta}+\sqrt{-1}\delta_{\mu\alpha}\delta_{\nu\beta}-\sqrt{-1}\delta_{\mu\beta}\delta_{\nu\alpha}].\label{state4}
\end{eqnarray}
Hence
\begin{eqnarray}\sigma_{mn}=\frac{1}{2}[L_{\stackrel{mn}{\alpha\alpha}}+L_{\stackrel{mn}{\beta\beta}}+\sqrt{-1}L_{\stackrel{mn}{\alpha\beta}}-\sqrt{-1}L_{\stackrel{mn}{\beta\alpha}}],
\end{eqnarray}
which implies that
\begin{eqnarray}\sigma^{\ast}_{nm}=\frac{1}{2}[L^{\ast}_{\stackrel{nm}{\alpha\alpha}}+L^{\ast}_{\stackrel{nm}{\beta\beta}}-\sqrt{-1}L^{\ast}_{\stackrel{nm}{\alpha\beta}}+\sqrt{-1}L^{\ast}_{\stackrel{nm}{\beta\alpha}}].
\end{eqnarray}
This gives rise to:
\begin{eqnarray}
L_{\stackrel{mn}{\alpha\beta}}-L_{\stackrel{mn}{\beta\alpha}}=-L^{\ast}_{\stackrel{nm}{\alpha\beta}}+L^{\ast}_{\stackrel{nm}{\beta\alpha}}\label{II}
\end{eqnarray}
Combing (\ref{I}) with (\ref{II}) gives that $L_{\stackrel{mn}{\alpha\beta}}=L^{\ast}_{\stackrel{nm}{\beta\alpha}}$.\\
(ii) is trivial.\\
(iii) Because $\mbox{tr}(\sigma)=1$, that is,
\begin{eqnarray}
1=\sum_{m=1}^{d_{A}}\sigma_{mm}=\sum_{m=1}^{d_{A}}\sum_{\mu,\nu=1}^{d_{B}}L_{\stackrel{mm}{\mu\nu}}\rho_{\mu\nu}.\label{III}
\end{eqnarray}
\textbf{Step 1:} Given
$\rho=|\gamma\rangle\langle\gamma|(\gamma\in\{1,\ldots,d_{B}\})$. So
$\rho_{\mu\nu}=\langle
\mu|\rho|\nu\rangle=\delta_{\mu\gamma}\delta_{\nu\gamma}.$ From the
equation (\ref{III}), we have that
\begin{eqnarray}
1=\sum_{m=1}^{d_{A}}\sum_{\mu,\nu=1}^{d_{B}}L_{\stackrel{mm}{\mu\nu}}\delta_{\mu\gamma}\delta_{\nu\gamma}=\sum_{m=1}^{d_{A}}L_{\stackrel{mm}{\gamma\gamma}}(\gamma\in\{1,\ldots,d_{B}\}).
\end{eqnarray}
\textbf{Step 2:} From the equation (\ref{state1}), (\ref{state2}) and
(\ref{III}), we have that
$$1=\frac{1}{2}\left[\sum_{m=1}^{d_{A}}L_{\stackrel{mm}{\alpha\alpha}}+\sum_{m=1}^{d_{A}}L_{\stackrel{mm}{\beta\beta}}+\sum_{m=1}^{d_{A}}L_{\stackrel{mm}{\alpha\beta}}+\sum_{m=1}^{d_{A}}L_{\stackrel{mm}{\beta\alpha}}\right];$$
i.e.,
\begin{eqnarray}
\sum_{m=1}^{d_{A}}L_{\stackrel{mm}{\alpha\beta}}+\sum_{m=1}^{d_{A}}L_{\stackrel{mm}{\beta\alpha}}=0.\label{A1}
\end{eqnarray}
\textbf{Step 3:} It follows from the equation (\ref{state3}) and
(\ref{state4}) that
\begin{eqnarray*}
1=\frac{1}{2}\left[\sum_{m=1}^{d_{A}}L_{\stackrel{mm}{\alpha\alpha}}+\sum_{m=1}^{d_{A}}L_{\stackrel{mm}{\beta\beta}}+\sqrt{-1}\sum_{m=1}^{d_{A}}L_{\stackrel{mm}{\alpha\beta}}-\sqrt{-1}\sum_{m=1}^{d_{A}}L_{\stackrel{mm}{\beta\alpha}}\right];
\end{eqnarray*}
i.e.,
\begin{eqnarray}
\sum_{m=1}^{d_{A}}L_{\stackrel{mm}{\alpha\beta}}-\sum_{m=1}^{d_{A}}L_{\stackrel{mm}{\beta\alpha}}=0.\label{A2}
\end{eqnarray}
From the equations (\ref{A1}) and (\ref{A2}), we get $\sum_{m=1}^{d_{A}}L_{\stackrel{mm}{\alpha\beta}}=0(\alpha\neq\beta)$. In summary, $\sum_{m=1}^{d_{A}}L_{\stackrel{mm}{\mu\nu}}=\delta_{\mu\nu}$.
\end{proof}
Note that the property (i) of the proposition 3.1. is not the
condition of Hermicity, and in general the matrix $L$ representing
the super-operator $\Phi$ is not Hermitian. However, by the
definition of matrix realignment we can define the
\emph{\textbf{dynamical matrix or Choi matrix}} (see \cite{Sudar,Karol}):
\\
\indent $D_{\Phi}\equiv \mathcal{R}(L)\ \mbox{with}\ D_{\stackrel{m\mu}{n\nu}}=L_{\stackrel{mn}{\mu\nu}}$.\\
In particular, the mapping $\mathcal{J}: \Phi\mapsto D_{\Phi}$ is called \emph{Choi-Jamio{\l}kowski isomorphism}.
\begin{prop} For a quantum channel $\Phi$, its dynamical matrix $D_{\Phi}$ enjoy the properties that follow:
\begin{enumerate}[(i)]
\item $D^{\dagger}_{\Phi}=D_{\Phi}$;
\item $D_{\Phi}\geq 0$;
\item $\mbox{tr}_{A}(D_{\Phi})=I_{B}$, $\mbox{Tr}(D_{\Phi})=N$;
 \item $|L_{\Phi}\rangle\rangle=(I\otimes S\otimes I)|D_{\Phi}\rangle\rangle; \langle\langle L_{\Phi}|L_{\Psi}\rangle\rangle=\langle\langle D_{\Phi}|D_{\Psi}\rangle\rangle; \langle L_{\Phi},L_{\Psi}\rangle=\langle D_{\Phi},D_{\Psi}\rangle$;
\item $\langle\Phi(X),Y\rangle=\langle D_{\Phi},Y\otimes X^{\ast}\rangle$ for any $X,Y$.
\end{enumerate}
\end{prop}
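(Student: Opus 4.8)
The plan is to treat the five items by combining the structural results already established: Theorem 3.1 (the Hermiticity/trace constraints on $L_\Phi$), the fact that $D_\Phi = \mathcal{R}(L_\Phi)$, and the realignment identities from Section 2, especially $|\mathcal{R}(Z)\rangle\rangle = (I\otimes S\otimes I)|Z\rangle\rangle$ (Proposition 2.9(ii)), $\mathcal{R}(X\otimes Y) = |X\rangle\rangle\langle\langle Y^\ast|$ (Proposition 2.4), and $\mathrm{tr}_B(|S\rangle\rangle\langle\langle T|) = ST^\dagger$ (Remark 2.2(v)). For (i), the index dictionary $D_{\stackrel{m\mu}{n\nu}} = L_{\stackrel{mn}{\mu\nu}}$ converts Theorem 3.1(i), namely $L^\ast_{\stackrel{mn}{\mu\nu}} = L_{\stackrel{nm}{\nu\mu}}$, directly into $D^\ast_{\stackrel{m\mu}{n\nu}} = D_{\stackrel{n\nu}{m\mu}}$, which is exactly $D_\Phi^\dagger = D_\Phi$; this is pure bookkeeping. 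For (iii), I would use the general block formula $\mathrm{Tr}_A(Z) = [\mathrm{tr}(Z_{\mu\nu})]$ applied to $D_\Phi$: the $(\mu,\nu)$ block of $D_\Phi = \mathcal{R}(L_\Phi)$ has $(m,n)$ entry $D_{\stackrel{m\mu}{n\nu}} = L_{\stackrel{mn}{\mu\nu}}$, so its trace over $m=n$ is $\sum_m L_{\stackrel{mm}{\mu\nu}} = \delta_{\mu\nu}$ by Theorem 3.1(iii); hence $\mathrm{tr}_A(D_\Phi) = I_B$, and then $\mathrm{Tr}(D_\Phi) = \mathrm{tr}(I_B) = d_B = N$ (taking $d_A=d_B=N$ in the channel setting). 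Item (iv) is immediate: the first equation is Proposition 2.9(ii) with $Z = L_\Phi$ and the definition $D_\Phi = \mathcal{R}(L_\Phi)$; the second follows because $I\otimes S\otimes I$ is unitary (Proposition 2.6(i) gives $S$ unitary, and tensor products of unitaries are unitary) so it preserves the Euclidean inner product of the vectorizations; the third is just the Hilbert--Schmidt reformulation of the second via Remark 2.2(ii).

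The substantive items are (ii) and (v), and I expect (ii) to be the main obstacle. For (v), I would first establish it for rank-one $X = |a\rangle\langle b|$, $Y = |c\rangle\langle d|$: the left side is $\langle \Phi(|a\rangle\langle b|), |c\rangle\langle d|\rangle = \langle d|\Phi(|a\rangle\langle b|)|c\rangle = \sum_{mn} \overline{d_m}\, c_n\, \sigma_{mn}$ where $\sigma = \Phi(X)$ and $\sigma_{mn} = \sum_{\mu\nu} L_{\stackrel{mn}{\mu\nu}} \rho_{\mu\nu}$ with $\rho_{\mu\nu} = a_\mu \overline{b_\nu}$; the right side is $\langle D_\Phi, Y\otimes X^\ast\rangle = \mathrm{tr}(D_\Phi^\dagger (Y\otimes X^\ast))$, and $Y\otimes X^\ast = |c\rangle\langle d| \otimes |a^\ast\rangle\langle b^\ast|$ has $\stackrel{m\mu}{n\nu}$ entry $c_n \overline{d_m} \cdot \overline{a_\nu} b_\mu$, so using $D^\ast_{\stackrel{m\mu}{n\nu}} = D_{\stackrel{n\nu}{m\mu}} = L_{\stackrel{nm}{\nu\mu}}$ one checks both sides reduce to $\sum L_{\stackrel{mn}{\mu\nu}} a_\mu \overline{b_\nu}\, c_n \overline{d_m}$; then extend by bilinearity (antilinearity in the appropriate slots) since every matrix is a linear combination of rank-one matrices, and both sides of (v) are sesquilinear in $(X,Y)$. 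Alternatively, and more cleanly, I would derive (v) from the identity $\langle\langle L_\Phi | c\rangle\rangle$-type manipulations: writing $\langle \Phi(X), Y\rangle = \langle\langle Y | \Phi(X)\rangle\rangle = \langle\langle Y | L_\Phi | X\rangle\rangle$ and then using $D_\Phi = \mathcal{R}(L_\Phi)$ together with the explicit $\Omega$-form $\mathcal{R}(L) = \sum_{ij}(I\otimes|i\rangle\langle j|) L (|i\rangle\langle j|\otimes I)$ from Proposition 2.9(iii) to reorganize the contraction — but the rank-one-then-extend route is the safest to write out.

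For (ii), positivity, the cleanest argument uses the Choi--Jamio{\l}kowski picture: I would show $D_\Phi \geq 0$ is equivalent to complete positivity of $\Phi$, which holds because $\Phi$ is a quantum channel (hence CP by hypothesis — a channel is by definition completely positive and trace-preserving). Concretely, for any $|\psi\rangle = \sum_{n\mu} \psi_{n\mu} |n\mu\rangle \in \mathcal{H}_A\otimes\mathcal{H}_B$ (here using $d_A = d_B = N$), compute
\begin{eqnarray*}
\langle\psi| D_\Phi |\psi\rangle = \sum_{\stackrel{m\mu}{n\nu}} \overline{\psi_{m\mu}}\, \psi_{n\nu}\, D_{\stackrel{m\mu}{n\nu}} = \sum_{\stackrel{m\mu}{n\nu}} \overline{\psi_{m\mu}}\, \psi_{n\nu}\, L_{\stackrel{mn}{\mu\nu}}.
\end{eqnarray*}
The idea is to recognize the right-hand side as $\langle \Phi(\widehat{\psi}), \widehat{\psi}'\rangle$-type expression, or more directly: set $R = [\psi_{n\mu}]$ viewed as an $N\times N$ matrix, and observe that the sum equals $\mathrm{tr}\big( R^\dagger \,\Phi(\text{something built from } R)\big)$ up to a transpose; tracking indices shows $\langle\psi|D_\Phi|\psi\rangle = \langle \Phi(\overline{R}^{\,\mathrm{t}}), \text{related matrix}\rangle$ — but the robust approach is to invoke that $D_\Phi = (\Phi\otimes\mathrm{Id}_N)(|I_N\rangle\rangle\langle\langle I_N|)$, which follows from identity (1.1)/(2.1) in the excerpt expressing $|I_p\rangle\rangle = \sum_j |jj\rangle$ as the (unnormalized) maximally entangled vector and the definition $D_\Phi = \mathcal{R}(L_\Phi)$ together with Proposition 2.9. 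Since $|I_N\rangle\rangle\langle\langle I_N| \geq 0$ and $\Phi\otimes\mathrm{Id}_N$ is positive (as $\Phi$ is completely positive, in particular $N$-positive), we conclude $D_\Phi \geq 0$. The main obstacle is making the identification $D_\Phi = (\Phi\otimes\mathrm{Id})(|I\rangle\rangle\langle\langle I|)$ precise from the index conventions in play; once that bridge is built, positivity is a one-line consequence of complete positivity, and items (i)--(v) all fall into place.
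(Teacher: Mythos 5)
Your proposal is correct, and for items (i)--(iii) it follows the paper's own proof exactly: (i) is the index bookkeeping $D^{\ast}_{\stackrel{m\mu}{n\nu}}=L^{\ast}_{\stackrel{mn}{\mu\nu}}=L_{\stackrel{nm}{\nu\mu}}=D_{\stackrel{n\nu}{m\mu}}$, (iii) is Theorem 3.1(iii) read through the block structure of $\mathcal{R}(L_{\Phi})$, and for (ii) the paper does precisely what you call the ``robust approach'': it expands $\langle z|(\Phi\otimes\mbox{Id}_{N})(|I\rangle\rangle\langle\langle I|)|z\rangle$ and identifies it with $\sum z^{\ast}_{m\mu}D_{\stackrel{m\mu}{n\nu}}z_{n\nu}$ via the bridge identity $\langle m|\Phi(|\mu\rangle\langle\nu|)|n\rangle=L_{\stackrel{mn}{\mu\nu}}$ --- the one step you flag as the main obstacle, and which is indeed the only computation of substance there. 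Where you genuinely diverge is in (iv) and (v). For (iv) the paper first invokes the operator-sum representation to write $L_{\Phi}=\sum_{j}\Gamma_{j}\otimes\Gamma^{\ast}_{j}$ and then applies Lemma 2.7 term by term; your route of applying Proposition 2.9(ii) directly to $Z=L_{\Phi}$ and using $(I\otimes S\otimes I)^{2}=I$ is shorter and avoids the (nontrivial) appeal to the Kraus theorem, so it is arguably an improvement. For (v) the paper uses your ``alternative'' route: $\langle Y,\Phi(X)\rangle=\mbox{Tr}[L_{\Phi}|X\rangle\rangle\langle\langle Y|]=\langle|Y\rangle\rangle\langle\langle X|,L_{\Phi}\rangle$, then the fact that $\mathcal{R}$ is an entry permutation (hence a Hilbert--Schmidt isometry) together with $\mathcal{R}(|Y\rangle\rangle\langle\langle X|)=Y\otimes X^{\ast}$; this is cleaner than your primary rank-one-then-extend computation, though the latter is also valid since both sides are linear in $Y$ and antilinear in $X$, so checking on the generators $|a\rangle\langle b|$, $|c\rangle\langle d|$ does suffice. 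No gaps; only the choice of which identities to lean on differs.
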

\begin{proof} Write $D_{\Phi}=D=[D_{\stackrel{m\mu}{n\nu}}]$.\\
(i) $D^{\dagger}=[D_{\stackrel{m\mu}{n\nu}}]^{\dagger}=[D^{\ast}_{\stackrel{m\mu}{n\nu}}]^{\mbox{t}}=[D^{\ast}_{\stackrel{n\nu}{m\mu}}]
=[L^{\ast}_{\stackrel{nm}{\nu\mu}}]=[L_{\stackrel{mn}{\mu\nu}}]=[D_{\stackrel{m\mu}{n\nu}}]=D$.\\
(ii) Let $|z\rangle=\sum_{n,\nu=1}^{N}z_{n\nu}|n\nu\rangle\rangle$. Then $\langle z|=\sum_{m,\mu=1}^{N}z^{\ast}_{m\mu}\langle\langle m\mu|$. Hence\\
\indent
$\langle z|D|z\rangle=\sum_{m,\mu,n,\nu=1}^{N}z^{\ast}_{m\mu}D_{\stackrel{m\mu}{n\nu}}z_{n\nu}$.\\
$|I\rangle\rangle=\sum_{m=1}^{N}|\mu\mu\rangle$ is called a maximally entangled state. So we have \\
\indent$|I\rangle\rangle\langle\langle I|=\sum_{\mu,\nu=1}^{N}|\mu\mu\rangle\langle \nu\nu|=\sum_{\mu,\nu=1}^{N}|\mu\rangle\langle\nu|\otimes|\mu\rangle\langle\nu|$.\\
Since $\Phi$ is completely positive map, $\Phi\otimes\mbox{Id}_{k}\geq 0(\forall \mbox{non-negative integer}\ k)$, in particular, $(\Phi\otimes\mbox{Id}_{N})(|I\rangle\rangle\langle\langle I|)\geq 0$, we get that
\begin{eqnarray*}
0&\leq&\langle z|(\Phi\otimes\mbox{Id}_{N})(|I\rangle\rangle\langle\langle I|)|z\rangle\\
 &=& \sum_{\mu,\nu=1}^{N}\langle z|[\Phi(|\mu\rangle\langle \nu|)\otimes|\mu\rangle\langle \nu|]|z\rangle\\
 &=& \sum_{\mu,\nu=1}^{N}\sum_{m,\alpha,n,\beta=1}^{N}z^{\ast}_{m\alpha}z_{n\beta}\langle m|\Phi(|\mu\rangle\langle \nu|)|n\rangle\cdot\langle\alpha|\mu\rangle\langle \nu|\beta\rangle\\
 &=& \sum_{m,\mu,n,\nu=1}z^{\ast}_{m\mu}z_{n\nu}L_{\stackrel{mn}{\mu\nu}}=\sum_{m,\mu,n,\nu=1}z^{\ast}_{m\mu}D_{\stackrel{m\mu}{n\nu}}z_{n\nu}.\Box
\end{eqnarray*}
Obviously, there is an identity in the proof: if
$\mathscr{D}(\mathcal{H}_{B})\stackrel{\Phi}{\longrightarrow}\mathscr{D}(\mathcal{H}_{A})$,
then
$\mathscr{D}(\mathcal{H}_{B}\otimes\mathcal{H}_{B})\stackrel{\Phi\otimes\mbox{\scriptsize
Id}_{N}}{\longrightarrow}\mathscr{D}(\mathcal{H}_{A}\otimes\mathcal{H}_{B})$,
\begin{eqnarray}
D_{\Phi}=(\Phi\otimes\mbox{Id}_{N})(|I\rangle\rangle\langle\langle
I|),
\Phi(\rho)=\mbox{Tr}_{A}[D_{\Phi}(I_{A}\otimes\rho^{\mbox{t}})].
\end{eqnarray}
Notes: If $X=|\mu\rangle\langle \nu|$, then
$|X\rangle\rangle=|\mu\rangle|\nu\rangle\equiv|\mu\nu\rangle$, from
which it follows that
$|\Phi(X)\rangle\rangle=L|X\rangle\rangle=L|\mu\nu\rangle=\sum_{i,j=1}^{N}|ij\rangle\langle
ij|
L|\mu\nu\rangle=\sum_{i,j=1}^{N}L_{\stackrel{ij}{\mu\nu}}|ij\rangle$.
Therefore,
$\Phi(X)=\sum_{i,j=1}^{N}L_{\stackrel{ij}{\mu\nu}}|i\rangle\langle
j|$, and
\begin{eqnarray*}
\langle m|\Phi(|\mu\rangle\langle \nu|)|n\rangle
    &=&\langle m|\Phi(X)|n\rangle=\sum_{i,j=1}^{N}L_{\stackrel{ij}{\mu\nu}}\langle m|i\rangle\langle j|n\rangle\\
    &=&\sum_{i,j=1}^{N}L_{\stackrel{ij}{\mu\nu}}\delta_{mi}\delta_{nj}=L_{\stackrel{mn}{\mu\nu}}.
\end{eqnarray*}
Since $\rho_{\mu\nu}=\langle\mu|\rho|\nu\rangle=\mbox{tr}(\rho|\nu\rangle\langle\mu|)=\mbox{tr}(|\mu\rangle\langle \nu|\rho^{\mbox{t}})$, we have:
\begin{eqnarray*}
\Phi(\rho)&=&\sum_{\mu,\nu}\rho_{\mu\nu}\Phi(|\mu\rangle\langle\nu|)=\sum_{\mu,\nu}\Phi(|\mu\rangle\langle \nu|)\mbox{tr}(|\mu\rangle\langle\nu|\rho^{\mbox{t}})=\sum_{\mu,\nu}\mbox{Tr}_{A}(\Phi(|\mu\rangle\langle \nu|)\otimes(|\mu\rangle\langle\nu|\rho^{\mbox{t}}))\\
&=&\mbox{Tr}_{A}[(\Phi\otimes\mbox{Id}_{N})(|I_{B}\rangle\rangle\langle\langle I_{B}|)(I_{A}\otimes\rho^{\mbox{t}})]
=\mbox{Tr}_{A}[D_{\Phi}(I_{A}\otimes\rho^{\mbox{t}})].
\end{eqnarray*}
(iii) Since $D=[D_{\mu\nu}]$, where $D_{\mu\nu}=[D_{\stackrel{m\mu}{n\nu}}]$, $\mbox{tr}_{A}D=[\mbox{tr}D_{\mu\nu}]$.
Because $\mbox{tr}D_{\mu\nu}=\sum_{m=1}^{N}D_{\stackrel{m\mu}{m\nu}}=\sum_{m=1}^{N}L_{\stackrel{mm}{\mu\nu}}=\delta_{\mu\nu}$,
thus we have $\mbox{tr}_{A}D=[\delta_{\mu\nu}]=I_{B}.$ Furthermore, $\mbox{Tr}(D)=N$ is trivially.\\
(iv) By the operator-sum representation theorem, we have $\Phi(\rho)=\sum_{j}\Gamma_{j}\rho\Gamma^{\dagger}_{j}$, thus
\begin{eqnarray*}
|L_{\Phi}\rangle\rangle&=&|\sum_{j}\Gamma_{j}\otimes\Gamma^{\ast}_{j}\rangle\rangle=\sum_{j}|\Gamma_{j}\otimes\Gamma^{\ast}_{j}\rangle\rangle
=\sum_{j}(I\otimes S\otimes I)|\Gamma_{j}\rangle\rangle|\Gamma^{\ast}_{j}\rangle\rangle\\
                       &=&\sum_{j}(I\otimes S\otimes I)||\Gamma_{j}\rangle\rangle\langle\langle\Gamma_{j}|\rangle\rangle=(I\otimes S\otimes I)|\sum_{j}\mathcal{R}(\Gamma_{j}\otimes\Gamma^{\ast}_{j})\rangle\rangle\\
&=&(I\otimes S\otimes I)|\mathcal{R}(\sum_{j}\Gamma_{j}\otimes\Gamma^{\ast}_{j})\rangle\rangle=(I\otimes S\otimes I)|\mathcal{R}(L)\rangle\rangle\\&=&(I\otimes S\otimes I)|D_{\Phi}\rangle\rangle.
\end{eqnarray*}
Therefore $\langle\langle L_{\Phi}|L_{\Psi}\rangle\rangle=
\langle\langle D_{\Phi}|(I\otimes S\otimes I)^{2}|D_{\Psi}\rangle\rangle=\langle\langle D_{\Phi}|D_{\Psi}\rangle\rangle$; that is $\langle L_{\Phi},L_{\Psi}\rangle=\langle D_{\Phi},D_{\Psi}\rangle$.\\
(v)\begin{eqnarray*}
\langle Y,\Phi(X)\rangle&=&\langle\langle Y|\Phi(X)\rangle\rangle=\langle\langle Y|L_{\Phi}|X\rangle\rangle=\mbox{Tr}[L_{\Phi}|X\rangle\rangle\langle\langle Y|]\\
&=&\mbox{Tr}[(|Y\rangle\rangle\langle\langle X|)^{\dagger}L_{\Phi}]=\langle |Y\rangle\rangle\langle\langle X|,L_{\Phi}\rangle=\langle\mathcal{R}(|Y\rangle\rangle\langle\langle X|),\mathcal{R}(L_{\Phi})\rangle\\
&=&\langle Y\otimes X^{\ast}, D_{\Phi}\rangle.
\end{eqnarray*}
\end{proof}
\noindent For any quantum channel $\Phi$, it induces its \emph{dual channel} $\Phi^{\dagger}$ in the following sense:\\
\indent$\langle\Phi(\rho),\sigma\rangle=\langle\rho,\Phi^{\dagger}(\sigma)\rangle\ \mbox{for any states}\ \rho\ \mbox{and}\ \sigma$.\\
If a CP map is given by the Kraus form $\Phi(\rho)=\sum_{j}\Gamma_{j}\rho\Gamma^{\dagger}_{j}$,
then the dual maps reads $\Phi^{\dagger}(\sigma)=\sum_{j}\Gamma^{\dagger}_{j}\rho\Gamma_{j}$.
Therefore, we have the following proposition into which the most useful results are summarized:
\begin{prop}
\begin{enumerate}[(i)]
\item $L_{\Phi}=\sum_{j}\Gamma_{j}\otimes\Gamma^{\ast}_{j}$, or $D_{\Phi}=\sum_{j}|\Gamma_{j}\rangle\rangle\langle\langle\Gamma_{j}|$ for $\Phi(\cdot)=\sum_{j}\Gamma_{j}\cdot\Gamma^{\dagger}_{j}$.
\item If $\Phi^{\dagger}$ is the dual channel of a quantum channel $\Phi$, then
$L_{\Phi^{\dagger}}=\mathcal{F}\circ\mathcal{T}(L_{\Phi})=L^{\dagger}_{\Phi}$, or $D_{\Phi^{\dagger}}=\mathcal{F}\circ\mathcal{T}(D_{\Phi})$.
\item $L_{r\Phi+s\Psi}=rL_{\Phi}+sL_{\Psi}$, or $D_{r\Phi+s\Psi}=rD_{\Phi}+sD_{\Psi}$.
\item the composition $\Phi\circ\Psi$ of two maps $\Phi$ and $\Psi$ means that $L_{\Phi\circ\Psi}=L_{\Phi}L_{\Psi}$, or $D_{\Phi\circ\Psi}=\mathcal{R}(\mathcal{R}(D_{\Phi})\mathcal{R}(D_{\Psi}))$.
\item $L_{\mathcal{T}\circ\Phi}=\mathcal{F}_{r}(L_{\Phi})$, or $D_{\mathcal{T}\circ\Phi}=\mathcal{T}_{A}(D_{\Phi})$; $L_{\Phi\circ\mathcal{T}}=\mathcal{F}_{c}(L_{\Phi})$, or $D_{\Phi\circ\mathcal{T}}=\mathcal{T}_{B}(D_{\Phi})$.
\item $L_{\mathcal{T}\circ\Phi\circ\mathcal{T}}=\mathcal{F}(L_{\Phi})=L^{\ast}_{\Phi}$, or $D_{\mathcal{T}\circ\Phi\circ\mathcal{T}}=\mathcal{T}(D_{\Phi})=D^{\mbox{t}}_{\Phi}=D^{\ast}_{\Phi}$.
\end{enumerate}
\end{prop}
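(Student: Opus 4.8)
The plan is to take part (i) as the basic identity and deduce (ii)--(vi) from it, using only that $\mathcal{R}$ is a linear involution together with the composite relations among $\mathcal{T},\mathcal{R},\mathcal{F},\mathcal{F}_{r},\mathcal{F}_{c}$ already established. For (i), vectorize the Kraus form: $|\Phi(\rho)\rangle\rangle=\sum_{j}|\Gamma_{j}\rho\Gamma_{j}^{\dagger}\rangle\rangle$, and apply the vectorization identity $|QXR\rangle\rangle=(Q\otimes R^{\mbox{t}})|X\rangle\rangle$ with $Q=\Gamma_{j}$, $X=\rho$, $R=\Gamma_{j}^{\dagger}$; since $(\Gamma_{j}^{\dagger})^{\mbox{t}}=\Gamma_{j}^{\ast}$ this gives $|\Phi(\rho)\rangle\rangle=(\sum_{j}\Gamma_{j}\otimes\Gamma_{j}^{\ast})|\rho\rangle\rangle$ for every $\rho$, hence $L_{\Phi}=\sum_{j}\Gamma_{j}\otimes\Gamma_{j}^{\ast}$. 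Then $D_{\Phi}=\mathcal{R}(L_{\Phi})=\sum_{j}\mathcal{R}(\Gamma_{j}\otimes\Gamma_{j}^{\ast})$, and by $\mathcal{R}(X\otimes Y)=|X\rangle\rangle\langle\langle Y^{\ast}|$ applied with $Y=\Gamma_{j}^{\ast}$ (so $Y^{\ast}=\Gamma_{j}$) this is $\sum_{j}|\Gamma_{j}\rangle\rangle\langle\langle\Gamma_{j}|$.

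Parts (iii) and (iv) follow by evaluating on an arbitrary $|\rho\rangle\rangle$. Since $(r\Phi+s\Psi)(\rho)=r\Phi(\rho)+s\Psi(\rho)$ for all $\rho$, one reads off $L_{r\Phi+s\Psi}=rL_{\Phi}+sL_{\Psi}$, and applying the linear map $\mathcal{R}$ gives $D_{r\Phi+s\Psi}=rD_{\Phi}+sD_{\Psi}$. Since $|(\Phi\circ\Psi)(\rho)\rangle\rangle=L_{\Phi}|\Psi(\rho)\rangle\rangle=L_{\Phi}L_{\Psi}|\rho\rangle\rangle$, we get $L_{\Phi\circ\Psi}=L_{\Phi}L_{\Psi}$; and because $\mathcal{R}$ is an involution, $L_{\Phi}=\mathcal{R}(D_{\Phi})$ and $L_{\Psi}=\mathcal{R}(D_{\Psi})$, whence $D_{\Phi\circ\Psi}=\mathcal{R}(L_{\Phi}L_{\Psi})=\mathcal{R}(\mathcal{R}(D_{\Phi})\mathcal{R}(D_{\Psi}))$.

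For the transposition statements (v), use $(\mathcal{T}\circ\Phi)(\rho)=\Phi(\rho)^{\mbox{t}}$, $(\Phi\circ\mathcal{T})(\rho)=\Phi(\rho^{\mbox{t}})$ together with $|X^{\mbox{t}}\rangle\rangle=S|X\rangle\rangle$: this gives $L_{\mathcal{T}\circ\Phi}=SL_{\Phi}=\mathcal{F}_{r}(L_{\Phi})$ and $L_{\Phi\circ\mathcal{T}}=L_{\Phi}S=\mathcal{F}_{c}(L_{\Phi})$ directly from the definitions of the partial flips. Passing to $D$, apply $\mathcal{R}$ and use $\mathcal{F}_{r}=\mathcal{R}\circ\mathcal{T}_{A}\circ\mathcal{R}$, $\mathcal{F}_{c}=\mathcal{R}\circ\mathcal{T}_{B}\circ\mathcal{R}$ in the equivalent form $\mathcal{R}\circ\mathcal{F}_{r}=\mathcal{T}_{A}\circ\mathcal{R}$, $\mathcal{R}\circ\mathcal{F}_{c}=\mathcal{T}_{B}\circ\mathcal{R}$, obtaining $D_{\mathcal{T}\circ\Phi}=\mathcal{T}_{A}(D_{\Phi})$ and $D_{\Phi\circ\mathcal{T}}=\mathcal{T}_{B}(D_{\Phi})$. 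For (vi), composing the two halves of (v) gives $L_{\mathcal{T}\circ\Phi\circ\mathcal{T}}=\mathcal{F}_{c}(\mathcal{F}_{r}(L_{\Phi}))=SL_{\Phi}S=\mathcal{F}(L_{\Phi})$, which by the Kraus form and $S(A\otimes B)S=B\otimes A$ equals $\sum_{j}\Gamma_{j}^{\ast}\otimes\Gamma_{j}=L_{\Phi}^{\ast}$; likewise $D_{\mathcal{T}\circ\Phi\circ\mathcal{T}}=\mathcal{T}_{B}(\mathcal{T}_{A}(D_{\Phi}))=\mathcal{T}(D_{\Phi})=D_{\Phi}^{\mbox{t}}$, which equals $D_{\Phi}^{\ast}$ since $D_{\Phi}$ is Hermitian. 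Finally (ii): from the Kraus form $\Phi^{\dagger}(\sigma)=\sum_{j}\Gamma_{j}^{\dagger}\sigma\Gamma_{j}$ and part (i), $L_{\Phi^{\dagger}}=\sum_{j}\Gamma_{j}^{\dagger}\otimes(\Gamma_{j}^{\dagger})^{\ast}=\sum_{j}\Gamma_{j}^{\dagger}\otimes\Gamma_{j}^{\mbox{t}}$, which is visibly $(\sum_{j}\Gamma_{j}\otimes\Gamma_{j}^{\ast})^{\dagger}=L_{\Phi}^{\dagger}$ and also, via $S(A\otimes B)S=B\otimes A$, equals $SL_{\Phi}^{\mbox{t}}S=\mathcal{F}\circ\mathcal{T}(L_{\Phi})$; then apply $\mathcal{R}$ and the operator identity $\mathcal{R}\circ\mathcal{F}\circ\mathcal{T}=\mathcal{F}\circ\mathcal{T}\circ\mathcal{R}$ (both sides reduce to $\mathcal{R}'$ using $\mathcal{T}\circ\mathcal{R}=\mathcal{R}\circ\mathcal{F}$ and $\mathcal{F}\circ\mathcal{R}\circ\mathcal{F}=\mathcal{T}\circ\mathcal{R}\circ\mathcal{T}$) to conclude $D_{\Phi^{\dagger}}=\mathcal{F}\circ\mathcal{T}(D_{\Phi})$.

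The argument is essentially all bookkeeping rather than insight, and the main thing needing care is tracking which copy of the swap operator is acting: the one on the domain tensor factor $\mathcal{H}_{B}\otimes\mathcal{H}_{B}$ (producing $\mathcal{F}_{c}$, placed on the right) versus the one on the codomain factor $\mathcal{H}_{A}\otimes\mathcal{H}_{A}$ (producing $\mathcal{F}_{r}$, placed on the left), and making sure every $D$-level identity is obtained by conjugating the corresponding $L$-level identity by $\mathcal{R}$ using exactly the composite relations among $\mathcal{T},\mathcal{R},\mathcal{F},\mathcal{F}_{r},\mathcal{F}_{c}$ rather than recomputing with matrix indices. The standing assumption $d_{A}=d_{B}=N$, implicit whenever $S$, $\mathcal{T}$ or $\mathcal{F}$ is applied on both sides, should be recorded.
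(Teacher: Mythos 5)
Your proposal is correct and follows essentially the same route as the paper: part (i) via the vectorization identity $|QXR\rangle\rangle=(Q\otimes R^{\mathrm{t}})|X\rangle\rangle$, and parts (ii)--(vi) by reading off the $L$-level identities from the Kraus form or from $L_{\mathcal{T}}=S$ and then conjugating by $\mathcal{R}$ using the composite relations among $\mathcal{T},\mathcal{R},\mathcal{F},\mathcal{F}_{r},\mathcal{F}_{c}$. The only (harmless) divergences are that in (ii) you pass to $D$ via the commutation $\mathcal{R}\circ\mathcal{F}\circ\mathcal{T}=\mathcal{F}\circ\mathcal{T}\circ\mathcal{R}$ where the paper instead uses $[\mathcal{F}(D_{\Phi})]^{\ast}=\mathcal{F}\circ\mathcal{T}(D_{\Phi})$ together with the Hermiticity of $D_{\Phi}$, and in (vi) you actually justify $\mathcal{F}(L_{\Phi})=L_{\Phi}^{\ast}$, which the paper asserts without comment.
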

\begin{proof} (i) Since $L_{\Phi}|X\rangle\rangle=|\Phi(X)\rangle\rangle=|\sum_{j}\Gamma_{j}X\Gamma^{\dagger}_{j}\rangle\rangle=\sum_{j}\Gamma_{j}\otimes\Gamma^{\ast}_{j}|X\rangle\rangle$,
 $L_{\Phi}=\sum_{j}\Gamma_{j}\otimes\Gamma^{\ast}_{j}$. $D_{\Phi}=\mathcal{R}(L_{\Phi})=\sum_{j}\mathcal{R}(\Gamma_{j}\otimes\Gamma^{\ast}_{j})=\sum_{j}|\Gamma_{j}\rangle\rangle\langle\langle\Gamma_{j}|$.\\
(ii) Obviously, $L_{\Phi^{\dagger}}=\sum_{j}\Gamma_{j}^{\dagger}\otimes\Gamma^{\mbox{t}}_{j}=(\sum_{j}\Gamma_{j}\otimes\Gamma^{\ast}_{j})^{\dagger}=L^{\dagger}_{\Phi}$. Thus it follows from (3) of Proposition 2.9. that
\begin{eqnarray*}
D_{\Phi^{\dagger}}&=&\mathcal{R}(L_{\Phi^{\dagger}})=\mathcal{R}(L^{\dagger}_{\Phi})=\mathcal{R}\circ\mathcal{T}(L^{\ast}_{\Phi})\\
&=&[\mathcal{R}\circ\mathcal{T}(L_{\Phi})]^{\ast}=[\mathcal{F}\circ\mathcal{R}(L_{\Phi})]^{\ast}=[\mathcal{F}(D_{\Phi})]^{\ast}=\mathcal{F}(D^{\ast}_{\Phi})=\mathcal{F}\circ\mathcal{T}(D_{\Phi}).
\end{eqnarray*}
(iii) It is trivially because
\begin{eqnarray*}
L_{r\Phi+s\Psi}|X\rangle\rangle&=&|(r\Phi+s\Psi)(X)\rangle\rangle=r|\Phi(X)\rangle\rangle+s|\Psi(X)\rangle\rangle
\\&=&rL_{\Phi}|X\rangle\rangle+sL_{\Psi}|X\rangle\rangle=(rL_{\Phi}+sL_{\Psi})|X\rangle\rangle.
\end{eqnarray*}
$D_{r\Phi+s\Psi}=rD_{\Phi}+sD_{\Psi}$ holds since the reshuffle transformation is linear.\\
(iv) $L_{\Phi\circ\Psi}|X\rangle\rangle=|\Phi\circ\Psi(X)\rangle\rangle=L_{\Phi}|\Psi(X)\rangle\rangle=L_{\Phi}L_{\Psi}|X\rangle\rangle$. This implies that $D_{\Phi\circ\Psi}=\mathcal{R}(\mathcal{R}(D_{\Phi})\mathcal{R}(D_{\Psi}))$.\\
(v) $L_{\mathcal{T}\circ\Phi}=L_{\mathcal{T}}L_{\Phi}=SL_{\Phi}=\mathcal{F}_{r}(L_{\Phi})$; similarly, $L_{\Phi\circ\mathcal{T}}=L_{\Phi}L_{\mathcal{T}}=L_{\Phi}S=\mathcal{F}_{c}(L_{\Phi})$. Thus\\
\indent
$
D_{\mathcal{T}\circ\Phi}=\mathcal{R}(L_{\mathcal{T}\circ\Phi})=\mathcal{R}\circ\mathcal{F}_{r}(L_{\Phi})=\mathcal{R}\circ\mathcal{F}_{r}\circ\mathcal{R}(D_{\Phi})=\mathcal{T}_{A}(D_{\Phi})
$\\
and\\
\indent
$D_{\Phi\circ\mathcal{T}}=\mathcal{R}(L_{\Phi\circ\mathcal{T}})=\mathcal{R}\circ\mathcal{F}_{c}(L_{\Phi})=\mathcal{R}\circ\mathcal{F}_{c}\circ\mathcal{R}(D_{\Phi})=\mathcal{T}_{B}(D_{\Phi}).
$\\
(vi) $L_{\mathcal{T}\circ\Phi\circ\mathcal{T}}=L_{\mathcal{T}}L_{\Phi}L_{\mathcal{T}}=SL_{\Phi}S=\mathcal{F}(L_{\Phi})$. Thus\\
\indent
$D_{\mathcal{T}\circ\Phi\circ\mathcal{T}}=\mathcal{R}(L_{\mathcal{T}\circ\Phi\circ\mathcal{T}})=\mathcal{R}\circ\mathcal{F}(L_{\Phi})
=\mathcal{R}\circ\mathcal{F}\circ\mathcal{R}(D_{\Phi})=\mathcal{T}(D_{\Phi})=D^{\mbox{t}}_{\Phi}=D^{\ast}_{\Phi}.
$
\end{proof}
\begin{prop} For two quantum operations $\Phi,\Psi$ on the $N$-dimensional identical  subsystems, $\mathcal{H}_{A},\mathcal{H}_{B}$ of a bipartite quantum system $\mathcal{H}_{A}\otimes\mathcal{H}_{B}$, respectively.
Then:
\begin{eqnarray}
L_{\Phi\otimes\Psi}=(I\otimes S\otimes I)(L_{\Phi}\otimes L_{\Psi})(I\otimes S\otimes I).
\end{eqnarray}
\end{prop}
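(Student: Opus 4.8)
The plan is to prove the matrix identity by letting both sides act on a spanning family of vectors, namely the vectorizations $|X\otimes Y\rangle\rangle$ of simple tensors, and then invoking Lemma~\ref{lem2.7}. Recall that $L_{\Phi\otimes\Psi}$ is defined as the $N^{4}\times N^{4}$ matrix with $|(\Phi\otimes\Psi)(\rho)\rangle\rangle=L_{\Phi\otimes\Psi}|\rho\rangle\rangle$ for all operators $\rho$ on $\mathcal{H}_{A}\otimes\mathcal{H}_{B}$, and that vectorization is a linear bijection of $\mathcal{M}_{N^{2}\times N^{2}}(\mathbb{C})$ onto $\mathbb{C}^{N^{4}}$. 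Since the matrices $X\otimes Y$ with $X,Y\in\mathcal{M}_{N\times N}(\mathbb{C})$ span $\mathcal{M}_{N^{2}\times N^{2}}(\mathbb{C})$, the vectors $|X\otimes Y\rangle\rangle$ span $\mathbb{C}^{N^{4}}$, so it suffices to show the two matrices in the statement send every $|X\otimes Y\rangle\rangle$ to the same vector.

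First I would evaluate the left-hand side on such an input. By definition of the tensor product of the maps $\Phi$ and $\Psi$ one has $(\Phi\otimes\Psi)(X\otimes Y)=\Phi(X)\otimes\Psi(Y)$, hence $L_{\Phi\otimes\Psi}|X\otimes Y\rangle\rangle=|\Phi(X)\otimes\Psi(Y)\rangle\rangle$. Applying Lemma~\ref{lem2.7} to the $N\times N$ matrices $\Phi(X)$ and $\Psi(Y)$ gives $|\Phi(X)\otimes\Psi(Y)\rangle\rangle=(I\otimes S\otimes I)|\Phi(X)\rangle\rangle|\Psi(Y)\rangle\rangle$, where the juxtaposition denotes the tensor product of the two length-$N^{2}$ vectors. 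Substituting $|\Phi(X)\rangle\rangle=L_{\Phi}|X\rangle\rangle$ and $|\Psi(Y)\rangle\rangle=L_{\Psi}|Y\rangle\rangle$ and using the mixed-product rule $(L_{\Phi}\otimes L_{\Psi})(u\otimes v)=(L_{\Phi}u)\otimes(L_{\Psi}v)$, this becomes $(I\otimes S\otimes I)(L_{\Phi}\otimes L_{\Psi})(|X\rangle\rangle\otimes|Y\rangle\rangle)$.

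It then remains to recognize $|X\rangle\rangle\otimes|Y\rangle\rangle$ as $(I\otimes S\otimes I)|X\otimes Y\rangle\rangle$. Since $S$ is self-adjoint and unitary (see Proposition~2.5), $S^{2}=I$ and hence $(I\otimes S\otimes I)^{2}=I$; applying $(I\otimes S\otimes I)$ to both sides of the identity in Lemma~\ref{lem2.7} yields exactly $|X\rangle\rangle\otimes|Y\rangle\rangle=(I\otimes S\otimes I)|X\otimes Y\rangle\rangle$. Plugging this in gives $L_{\Phi\otimes\Psi}|X\otimes Y\rangle\rangle=(I\otimes S\otimes I)(L_{\Phi}\otimes L_{\Psi})(I\otimes S\otimes I)|X\otimes Y\rangle\rangle$, and since these vectors span the whole space, the matrix identity follows.

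The step I expect to be the real (if only technical) obstacle is the tensor-leg bookkeeping: one must check that the operator $I\otimes S\otimes I$ coming out of Lemma~\ref{lem2.7} --- with the flanking identities acting on the two $N$-dimensional single-system factors and $S$ interchanging the two middle ones --- is literally the operator written in the statement, and that the mixed-product step aligns the four $N$-dimensional legs in the order forced by the type-I convention used throughout. A parallel route that makes this bookkeeping explicit is to start from Kraus forms $\Phi(\cdot)=\sum_{j}\Gamma_{j}\cdot\Gamma_{j}^{\dagger}$ and $\Psi(\cdot)=\sum_{k}\Lambda_{k}\cdot\Lambda_{k}^{\dagger}$: by Proposition~3.3(i), $L_{\Phi}=\sum_{j}\Gamma_{j}\otimes\Gamma_{j}^{\ast}$, $L_{\Psi}=\sum_{k}\Lambda_{k}\otimes\Lambda_{k}^{\ast}$ and $L_{\Phi\otimes\Psi}=\sum_{j,k}(\Gamma_{j}\otimes\Lambda_{k})\otimes(\Gamma_{j}\otimes\Lambda_{k})^{\ast}$; then conjugating $L_{\Phi}\otimes L_{\Psi}=\sum_{j,k}\Gamma_{j}\otimes\Gamma_{j}^{\ast}\otimes\Lambda_{k}\otimes\Lambda_{k}^{\ast}$ by $I\otimes S\otimes I$ and using $S(X\otimes Y)S=Y\otimes X$ interchanges the two middle legs, producing $\sum_{j,k}\Gamma_{j}\otimes\Lambda_{k}\otimes\Gamma_{j}^{\ast}\otimes\Lambda_{k}^{\ast}$, which reassociates to $L_{\Phi\otimes\Psi}$. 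Here the care is exactly in the reassociation $\Gamma_{j}\otimes\Lambda_{k}\otimes\Gamma_{j}^{\ast}\otimes\Lambda_{k}^{\ast}=(\Gamma_{j}\otimes\Lambda_{k})\otimes(\Gamma_{j}\otimes\Lambda_{k})^{\ast}$ and in the precise placement of $S$.
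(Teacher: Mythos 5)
Your proposal is correct and follows essentially the same route as the paper: the paper writes a general $\rho$ as $\sum_{\mu,\nu}\rho_{\mu\nu}\otimes|\mu\rangle\langle\nu|$ (a particular simple-tensor decomposition) and then runs exactly your chain --- definition of $L$, Lemma~\ref{lem2.7}, the mixed-product rule, and the involution $(I\otimes S\otimes I)^{2}=I$. Your version merely replaces that concrete block decomposition with a spanning argument over arbitrary $X\otimes Y$, which changes nothing essential.
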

\begin{proof} $\rho=[\rho_{\mu\nu}]=\sum_{\mu,\nu=1}^{N}\rho_{\mu\nu}\otimes |\mu\rangle\langle\nu|,\ \mbox{where}\ \rho_{\mu\nu}=[\rho_{\stackrel{m\mu}{n\nu}}]$,
is a $N\times N$ block density matrix whose entries being $N\times N$ scalar matrices. Since
\begin{eqnarray*}
(I\otimes S\otimes I)|\rho\rangle\rangle&=&(I\otimes S\otimes I)\sum_{\mu,\nu=1}^{N}|\rho_{\mu\nu}\otimes|\mu\rangle\langle\nu|\rangle\rangle\\
&=&\sum_{\mu,\nu=1}^{N}(I\otimes S\otimes I)^{2}|\rho_{\mu\nu}\rangle\rangle\otimes|\mu\nu\rangle\\
&=&\sum_{\mu,\nu=1}^{N}|\rho_{\mu\nu}\rangle\rangle\otimes|\mu\nu\rangle,
\end{eqnarray*}
we can get that
\begin{eqnarray*}
L_{\Phi\otimes\Psi}|\rho\rangle\rangle &=&|(\Phi\otimes\Psi)(\rho)\rangle\rangle=\sum_{\mu,\nu=1}^{N}|\Phi(\rho_{\mu\nu})\otimes\Psi(|\mu\rangle\langle\nu|)\rangle\rangle\\ &=&\sum_{\mu,\nu=1}^{N}(I\otimes S\otimes I)[|\Phi(\rho_{\mu\nu})\rangle\rangle\otimes|\Psi(|\mu\rangle\langle\nu|)\rangle\rangle]\\
&=&\sum_{\mu,\nu=1}^{N}(I\otimes S\otimes I)[L_{\Phi}|\rho_{\mu\nu}\rangle\rangle\otimes L_{\Psi}|\mu\nu\rangle]\\
&=&\sum_{\mu,\nu=1}^{N}(I\otimes S\otimes I)(L_{\Phi}\otimes L_{\Psi})[|\rho_{\mu\nu}\rangle\rangle\otimes |\mu\nu\rangle]\\
&=&(I\otimes S\otimes I)(L_{\Phi}\otimes L_{\Psi})[\sum_{\mu,\nu=1}^{N}|\rho_{\mu\nu}\rangle\rangle\otimes |\mu\nu\rangle]\\
&=&(I\otimes S\otimes I)(L_{\Phi}\otimes L_{\Psi})(I\otimes S\otimes I)|\rho\rangle\rangle.
\end{eqnarray*}
\end{proof}
\begin{prop} Let $\Phi,\Psi$ be two quantum operations on $\mathcal{H}_{N}$. If $\rho,\sigma$ are states in
$\mathcal{H}_{N}\otimes\mathcal{H}_{N}$ and $\sigma=(\Phi\otimes\Psi)(\rho)$, then:
\begin{eqnarray}\mathcal{R}(\sigma)=L_{\Phi}\mathcal{R}(\rho)L^{\mbox{t}}_{\Psi}.\end{eqnarray}
\end{prop}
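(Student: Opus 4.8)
The plan is to exploit multilinearity and reduce everything to the case of a single product operator. Each side of the claimed identity is linear in $\rho$: the realignment $\mathcal{R}$ is linear by Definition 2.3, the map $\rho\mapsto(\Phi\otimes\Psi)(\rho)$ is linear because quantum operations are linear and $\otimes$ is bilinear, and $Z\mapsto L_{\Phi}ZL^{\mbox{t}}_{\Psi}$ is linear. Since every operator on $\mathcal{H}_{N}\otimes\mathcal{H}_{N}$ is a linear combination of product operators $X\otimes Y$ with $X,Y$ of size $N\times N$, it suffices to prove
\[\mathcal{R}\big((\Phi\otimes\Psi)(X\otimes Y)\big)=L_{\Phi}\,\mathcal{R}(X\otimes Y)\,L^{\mbox{t}}_{\Psi}\]
for all such $X,Y$, and then extend by linearity to the general $\rho$ (in particular to a state).

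For a product operator one has $(\Phi\otimes\Psi)(X\otimes Y)=\Phi(X)\otimes\Psi(Y)$, so Proposition 2.4 gives $\mathcal{R}\big((\Phi\otimes\Psi)(X\otimes Y)\big)=|\Phi(X)\rangle\rangle\langle\langle\Psi(Y)^{\ast}|$. I then rewrite the two factors separately. For the ``ket'' factor, the defining relation of the super-matrix (Section 3) yields $|\Phi(X)\rangle\rangle=L_{\Phi}|X\rangle\rangle$. For the ``bra'' factor, Remark 2.2(iv) gives $\langle\langle W^{\ast}|=(|W\rangle\rangle)^{\mbox{t}}$ for every matrix $W$; taking $W=\Psi(Y)$ and using $|\Psi(Y)\rangle\rangle=L_{\Psi}|Y\rangle\rangle$ produces $\langle\langle\Psi(Y)^{\ast}|=(L_{\Psi}|Y\rangle\rangle)^{\mbox{t}}=(|Y\rangle\rangle)^{\mbox{t}}L^{\mbox{t}}_{\Psi}=\langle\langle Y^{\ast}|L^{\mbox{t}}_{\Psi}$. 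Substituting both expressions and invoking Proposition 2.4 once more in the form $|X\rangle\rangle\langle\langle Y^{\ast}|=\mathcal{R}(X\otimes Y)$ gives $\mathcal{R}\big((\Phi\otimes\Psi)(X\otimes Y)\big)=L_{\Phi}|X\rangle\rangle\langle\langle Y^{\ast}|L^{\mbox{t}}_{\Psi}=L_{\Phi}\,\mathcal{R}(X\otimes Y)\,L^{\mbox{t}}_{\Psi}$, which is the required identity.

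I do not expect a genuine obstacle; the work is bookkeeping, and there are exactly two places where care is needed. First, one must keep the paper's type-I convention straight: in $\rho=\sum_{\mu\nu}\rho_{\mu\nu}\otimes|\mu\rangle\langle\nu|$ the inner blocks $\rho_{\mu\nu}$ sit in the first tensor factor, on which $\Phi$ acts (exactly as in the proof of Proposition 3.5), so Proposition 2.4 must be applied with $X$ playing the role of the $\Phi$-factor and $Y$ that of the $\Psi$-factor. Second, the transpose bookkeeping must be exact: it is precisely the complex conjugate appearing in $\mathcal{R}(X\otimes Y)=|X\rangle\rangle\langle\langle Y^{\ast}|$ that, via Remark 2.2(iv) together with $(AB)^{\mbox{t}}=B^{\mbox{t}}A^{\mbox{t}}$, turns $L_{\Psi}$ into $L^{\mbox{t}}_{\Psi}$ acting on the right. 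Since only linearity of $\Phi$ and $\Psi$ is used, no positivity or trace-preservation hypotheses enter, and the conclusion in fact holds verbatim for arbitrary operators $\rho$.
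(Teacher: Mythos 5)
Your argument is correct, but it is not the route the paper takes. The paper vectorizes the whole identity: it writes $|\mathcal{R}(\sigma)\rangle\rangle=(I\otimes S\otimes I)|\sigma\rangle\rangle=(I\otimes S\otimes I)L_{\Phi\otimes\Psi}|\rho\rangle\rangle$, substitutes the factorization $L_{\Phi\otimes\Psi}=(I\otimes S\otimes I)(L_{\Phi}\otimes L_{\Psi})(I\otimes S\otimes I)$ from Proposition 3.4, uses $(I\otimes S\otimes I)^{2}=I$ to get $(L_{\Phi}\otimes L_{\Psi})|\mathcal{R}(\rho)\rangle\rangle$, and then converts back with $|QXR\rangle\rangle=(Q\otimes R^{\mbox{t}})|X\rangle\rangle$. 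You instead reduce by linearity to product operators $X\otimes Y$ and apply the rank-one formula $\mathcal{R}(X\otimes Y)=|X\rangle\rangle\langle\langle Y^{\ast}|$ on both the input and the output, together with $|\Phi(X)\rangle\rangle=L_{\Phi}|X\rangle\rangle$ and Remark 2.2(iv) for the transpose. Your version is more elementary and self-contained: it bypasses the swap-operator machinery and Proposition 3.4 entirely, and it makes transparent that the identity holds for arbitrary operators $\rho$, with no positivity or normalization needed (the paper's proof also has this feature, but less visibly). What the paper's route buys is economy given its prior results --- once $L_{\Phi\otimes\Psi}$ and $L_{\mathcal{R}}=I\otimes S\otimes I$ are in hand, the proposition is a two-line cancellation --- and it exhibits the structural point that conjugation by $I\otimes S\otimes I$ intertwines $L_{\Phi\otimes\Psi}$ with $L_{\Phi}\otimes L_{\Psi}$. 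Your two flagged points of care (which factor $\Phi$ acts on under the type-I convention, and the conjugate/transpose bookkeeping that produces $L^{\mbox{t}}_{\Psi}$ on the right) are exactly the right ones, and you handle both correctly.
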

\begin{proof}
\begin{eqnarray*}
|\mathcal{R}(\sigma)\rangle\rangle&=&(I\otimes S\otimes I)|\sigma\rangle\rangle=(I\otimes S\otimes I)|(\Phi\otimes\Psi)(\rho)\rangle\rangle=(I\otimes S\otimes I)L_{\Phi\otimes\Psi}|\rho\rangle\rangle\\
&=&(I\otimes S\otimes I)(I\otimes S\otimes I)(L_{\Phi}\otimes L_{\Psi})(I\otimes S\otimes I)|\rho\rangle\rangle=(L_{\Phi}\otimes L_{\Psi})|\mathcal{R}(\rho)\rangle\rangle\\
&=&|L_{\Phi}\mathcal{R}(\rho)L^{\mbox{t}}_{\Psi}\rangle\rangle.
\end{eqnarray*}
\end{proof}
\begin{lem} The composition of two completely positive linear super-operators $\Phi\circ\Psi$ is again completely positive.
\end{lem}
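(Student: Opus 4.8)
The plan is to reduce the statement to the operator-sum (Kraus) representation, exactly as it was used in the proof of the preceding proposition. First I would invoke the operator-sum representation theorem to write $\Phi(\cdot)=\sum_{j}\Gamma_{j}\cdot\Gamma_{j}^{\dagger}$ and $\Psi(\cdot)=\sum_{k}\Lambda_{k}\cdot\Lambda_{k}^{\dagger}$ for suitable finite families of operators $\{\Gamma_{j}\}$ and $\{\Lambda_{k}\}$. Composing the two maps and using bilinearity of matrix multiplication, for every operator $X$ one obtains
\begin{eqnarray*}
(\Phi\circ\Psi)(X)=\Phi\Big(\sum_{k}\Lambda_{k}X\Lambda_{k}^{\dagger}\Big)=\sum_{j,k}\Gamma_{j}\Lambda_{k}\,X\,\Lambda_{k}^{\dagger}\Gamma_{j}^{\dagger}=\sum_{j,k}(\Gamma_{j}\Lambda_{k})\,X\,(\Gamma_{j}\Lambda_{k})^{\dagger}.
\end{eqnarray*}
This is again a Kraus form, now with Kraus operators $\{\Gamma_{j}\Lambda_{k}\}_{j,k}$, and any super-operator admitting such a representation is completely positive; hence $\Phi\circ\Psi$ is completely positive. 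This is the argument I would write down as the proof.

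For completeness I would also mention a route that avoids the Kraus form entirely: for each $k\in\mathbb{N}$ one has the identity $(\Phi\circ\Psi)\otimes\mbox{Id}_{k}=(\Phi\otimes\mbox{Id}_{k})\circ(\Psi\otimes\mbox{Id}_{k})$; since $\Phi$ and $\Psi$ are completely positive, both factors on the right-hand side are positive maps, and the composition of two positive maps is positive, so $(\Phi\circ\Psi)\otimes\mbox{Id}_{k}\geq 0$ for every $k$, which is exactly the definition of complete positivity of $\Phi\circ\Psi$. A third possibility, staying inside the dynamical-matrix formalism developed above, is to combine the identity $D_{\Phi\circ\Psi}=\mathcal{R}(\mathcal{R}(D_{\Phi})\mathcal{R}(D_{\Psi}))$ with the positivity criterion $D_{\Phi}\geq 0\Longleftrightarrow\Phi$ completely positive; but this would require unwinding the realignment identities and is less transparent than the two routes above.

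There is essentially no genuine obstacle in this lemma: the only ingredient that must be in place is the Kraus decomposition of each factor, which is precisely the operator-sum representation theorem already cited in this section, and closure of the Kraus form under composition is then an immediate computation. The only point requiring a little care is the bookkeeping — checking that the double-indexed family $\{\Gamma_{j}\Lambda_{k}\}_{j,k}$ is indeed the correct set of Kraus operators for $\Phi\circ\Psi$, and, if one additionally wants the composition of two \emph{channels} to be a channel, verifying that this family still satisfies the trace-preservation (or trace non-increasing) normalization condition.
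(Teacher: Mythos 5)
Your argument is correct, but your primary route is not the one the paper takes. The paper's proof is exactly your ``for completeness'' alternative: it writes $(\Phi\circ\Psi)\otimes\mbox{Id}_{k}=(\Phi\otimes\mbox{Id}_{k})\circ(\Psi\otimes\mbox{Id}_{k})$, notes that each factor is positive because $\Phi$ and $\Psi$ are completely positive, and concludes that the composition is positive for every $k$. That argument uses nothing beyond the definition of complete positivity and the fact that a composition of positive maps is positive, which is why the paper can place this lemma before it has fully developed the Kraus/dynamical-matrix machinery in this logical thread; indeed the lemma is then used to \emph{derive} Corollary 3.7 about positivity of $\mathcal{R}(\mathcal{R}(A)\mathcal{R}(B))$. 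Your main route instead leans on the operator-sum representation theorem in both directions: the forward direction to write $\Phi$ and $\Psi$ in Kraus form, and the converse (any map of the form $X\mapsto\sum_{j}K_{j}XK_{j}^{\dagger}$ is completely positive) to conclude. That is a perfectly valid and very common proof, and it has the advantage of producing an explicit Kraus family $\{\Gamma_{j}\Lambda_{k}\}_{j,k}$ for the composite, which is useful if one later needs the composite's operational elements; but it imports the full strength of the Stinespring/Kraus/Choi structure theorem where the paper's argument needs only the definition. Both proofs are complete; your remark about checking the normalization condition for channels is a sensible extra observation, though not needed for the lemma as stated.
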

\begin{proof} To see that $\Phi\circ\Psi$ is completely positive,
it suffices to show $(\Phi\circ\Psi)\otimes\mbox{Id}_{k}$ is
positive for any $k\in\mathbb{N}$. Obviously,
$(\Phi\circ\Psi)\otimes\mbox{Id}_{k}=(\Phi\otimes\mbox{Id}_{k})\circ(\Psi\otimes\mbox{Id}_{k})$.
Since $\Phi$ and $\Psi$ are CP maps, $\Phi\otimes\mbox{Id}_{k}$ and
$\Psi\otimes\mbox{Id}_{k}$ are positive for any $k\in\mathbb{N}$,
which implies that the composition
$(\Phi\otimes\mbox{Id}_{k})\circ(\Psi\otimes\mbox{Id}_{k})$ is positive for any $k\in\mathbb{N}$.
\end{proof}
\begin{cor} Given two Hermitian matrices $A,B$
of the same size $N^{2}\times N^{2}$. If $A,B\geq0$,
then $\mathcal{R}(\mathcal{R}(A)\mathcal{R}(B))\geq0$.
\end{cor}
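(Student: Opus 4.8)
The plan is to exploit the Choi--Jamio{\l}kowski correspondence between super-operators and dynamical matrices, together with the fact, just established, that the composition of two completely positive maps is completely positive. Since $\mathcal{J}:\Phi\mapsto D_{\Phi}=\mathcal{R}(L_{\Phi})$ is a linear bijection onto the space of $N^{2}\times N^{2}$ matrices, I can choose super-operators $\Phi$ and $\Psi$ acting on $\mathcal{H}_{N}$ with $D_{\Phi}=A$ and $D_{\Psi}=B$. Because $A=D_{\Phi}\geq0$, Choi's theorem (mentioned in the introduction) guarantees that $\Phi$ admits a Kraus form and is therefore completely positive, and likewise for $\Psi$. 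By the preceding lemma $\Phi\circ\Psi$ is completely positive, so $D_{\Phi\circ\Psi}\geq0$; and the identity $D_{\Phi\circ\Psi}=\mathcal{R}(\mathcal{R}(D_{\Phi})\mathcal{R}(D_{\Psi}))$ established for compositions above then gives $\mathcal{R}(\mathcal{R}(A)\mathcal{R}(B))\geq0$.

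For a proof that does not invoke Choi's theorem, I would instead argue directly from the realignment identities of Section~2. Diagonalizing the positive semidefinite matrices and absorbing the eigenvalues into the eigenvectors, write $A=\sum_{k}|A_{k}\rangle\rangle\langle\langle A_{k}|$ and $B=\sum_{\ell}|B_{\ell}\rangle\rangle\langle\langle B_{\ell}|$; this is legitimate because vectorization maps the $N\times N$ matrices onto all of $\mathcal{H}_{N}\otimes\mathcal{H}_{N}$. Using linearity of $\mathcal{R}$ and the identity $\mathcal{R}(|X\rangle\rangle\langle\langle Y|)=X\otimes Y^{\ast}$, one gets $\mathcal{R}(A)=\sum_{k}A_{k}\otimes A_{k}^{\ast}$ and $\mathcal{R}(B)=\sum_{\ell}B_{\ell}\otimes B_{\ell}^{\ast}$; multiplying these and using $(S\otimes T)(X\otimes Y)=SX\otimes TY$ together with $(MN)^{\ast}=M^{\ast}N^{\ast}$ yields $\mathcal{R}(A)\mathcal{R}(B)=\sum_{k,\ell}(A_{k}B_{\ell})\otimes(A_{k}B_{\ell})^{\ast}$. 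Applying $\mathcal{R}$ a second time, now through $\mathcal{R}(X\otimes Y)=|X\rangle\rangle\langle\langle Y^{\ast}|$ with $X=A_{k}B_{\ell}$ and $Y=(A_{k}B_{\ell})^{\ast}$, produces $\mathcal{R}(\mathcal{R}(A)\mathcal{R}(B))=\sum_{k,\ell}|A_{k}B_{\ell}\rangle\rangle\langle\langle A_{k}B_{\ell}|$, which is a sum of positive semidefinite rank-one operators and hence positive.

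Neither route needs an estimate or a limiting argument, so there is no real obstacle; the only thing that requires care is the bookkeeping of complex conjugates, so that the outgoing use of $\mathcal{R}(X\otimes Y)=|X\rangle\rangle\langle\langle Y^{\ast}|$ and the incoming use of $\mathcal{R}(|X\rangle\rangle\langle\langle Y|)=X\otimes Y^{\ast}$ mesh correctly. With the substitution $Y\mapsto A_{k}$ at the first step and $Y\mapsto(A_{k}B_{\ell})^{\ast}$ at the last, the conjugations cancel in pairs and the final sum is manifestly positive semidefinite. All of the genuinely structural content---that a realignment of a product of realignments is precisely the dynamical matrix of a composition---has already been recorded in the propositions above, so the corollary is in essence just a restatement of the fact that composition preserves complete positivity.
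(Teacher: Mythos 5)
Your first argument is exactly the paper's proof: regard $A$ and $B$ as the dynamical matrices of super-operators $\Phi_A$, $\Phi_B$, deduce complete positivity from positivity of the Choi matrices, invoke the lemma that composition preserves complete positivity, and read off positivity of $D_{\Phi_A\circ\Phi_B}=\mathcal{R}(\mathcal{R}(A)\mathcal{R}(B))$. Your second argument, however, is a genuinely different and arguably preferable route. It is a direct computation: writing $A=\sum_{k}|A_{k}\rangle\rangle\langle\langle A_{k}|$ and $B=\sum_{\ell}|B_{\ell}\rangle\rangle\langle\langle B_{\ell}|$ from the spectral decompositions, applying $\mathcal{R}(|X\rangle\rangle\langle\langle Y|)=X\otimes Y^{\ast}$ and its inverse, and arriving at $\mathcal{R}(\mathcal{R}(A)\mathcal{R}(B))=\sum_{k,\ell}|A_{k}B_{\ell}\rangle\rangle\langle\langle A_{k}B_{\ell}|\geq 0$. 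The conjugate bookkeeping you describe checks out. What this buys you is self-containment: the paper's route silently uses the converse direction of Choi's theorem (that $D_{\Phi}\geq 0$ forces $\Phi$ to be completely positive), which is cited in the introduction but never proved in the text --- Proposition 3.2(ii) only establishes the forward implication. Your second argument needs nothing beyond the realignment identities of Section 2 and standard linear algebra, and as a bonus it exhibits explicit Kraus operators $A_{k}B_{\ell}$ for the composition, which is essentially the content of the ``complicated proof'' the paper attributes to Havel, here rendered quite cleanly.
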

\begin{proof} We can consider two non-negative matrices $A$ and
$B$ of the same size $N^{2}\times N^{2}$ as the dynamical matrices
for two linear super-operators $\Phi_{A}$ and $\Phi_{B}$, both
acting from $\mathbb{M}_{N}$ to $\mathbb{M}_{N}$, respectively. Now
$A,B\geq0$ imply that $\Phi_{A}$ and $\Phi_{B}$ are CP maps. Thus
their composition $\Phi_{A}\circ\Phi_{B}$ is CP map by Lemma 3.6. It
follows from this that, for $\Phi_{A}\circ\Phi_{B}$, its dynamical
matrix
$D_{\Phi_{A}\circ\Phi_{B}}=\mathcal{R}(\mathcal{R}(A)\mathcal{R}(B))$
is non-negative. The result that follows immediately. Another complicated proof on the present corollary can be found in \cite{Havel}.
\end{proof}
\begin{cor} Given a finite set of Hermitian matrices $\{D_{j}: j=1,\ldots,n\}$ of the same size $N^{2}\times N^{2}$.
If $D_{j}\geq0$ for all $j$, then $\mathcal{R}(\mathcal{R}(D_{n})\mathcal{R}(D_{n-1})\cdots \mathcal{R}(D_{1}))\geq0$.
\end{cor}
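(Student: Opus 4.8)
\textit{The plan.} I would mimic the proof of Corollary 3.7 and pass to completely positive super-operators: realize each $D_{j}$ as the dynamical matrix of a CP map, compose those maps, and use that composition of CP maps is CP (Lemma 3.6, iterated) together with the multiplicativity rule $L_{\Phi\circ\Psi}=L_{\Phi}L_{\Psi}$ from Proposition 3.3(iv). Induction on $n$ is implicit in this iteration; the base case $n=1$ is just $\mathcal{R}(\mathcal{R}(D_{1}))=D_{1}\geq 0$ by the involution property.

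First I would realize each $D_{j}$ as a dynamical matrix. Since $D_{j}=D_{j}^{\dagger}\geq 0$ acts on $\mathbb{C}^{N^{2}}$, it has a spectral decomposition $D_{j}=\sum_{k}|\Gamma_{jk}\rangle\rangle\langle\langle\Gamma_{jk}|$, where each $\Gamma_{jk}$ is the $N\times N$ matrix obtained by de-vectorizing a scaled eigenvector of $D_{j}$; this is legitimate because $|\cdot\rangle\rangle$ is a bijection from $\mathbb{M}_{N}$ onto $\mathbb{C}^{N^{2}}$ (Remark 2.2(ii)). Setting $\Phi_{j}(\cdot)=\sum_{k}\Gamma_{jk}\cdot\Gamma_{jk}^{\dagger}$ gives a CP map on $\mathbb{M}_{N}$, and by Proposition 3.3(i) its dynamical matrix is exactly $D_{\Phi_{j}}=D_{j}$. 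This is the one step where the positivity hypothesis is actually used; everything afterwards is formal.

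Next I would compose. Put $\Phi:=\Phi_{n}\circ\Phi_{n-1}\circ\cdots\circ\Phi_{1}$. By Lemma 3.6 and an obvious induction, any finite composition of CP maps is CP, so $\Phi$ is completely positive and hence $D_{\Phi}\geq 0$ (the argument for $D_{\Phi}\geq 0$ in the proof of Proposition 3.2 uses only complete positivity, through $D_{\Phi}=(\Phi\otimes\mbox{Id}_{N})(|I\rangle\rangle\langle\langle I|)\geq 0$). It remains to identify $D_{\Phi}$ with the matrix in the statement. Multiplicativity gives $L_{\Phi}=L_{\Phi_{n}}L_{\Phi_{n-1}}\cdots L_{\Phi_{1}}$, and since $\mathcal{R}$ is an involution (Proposition 2.9(i)), $L_{\Phi_{j}}=\mathcal{R}(\mathcal{R}(L_{\Phi_{j}}))=\mathcal{R}(D_{\Phi_{j}})=\mathcal{R}(D_{j})$. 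Hence
\[
D_{\Phi}=\mathcal{R}(L_{\Phi})=\mathcal{R}\bigl(\mathcal{R}(D_{n})\mathcal{R}(D_{n-1})\cdots\mathcal{R}(D_{1})\bigr)\geq 0,
\]
which is the assertion. (Equivalently one can induct directly on Corollary 3.7: with $E:=\mathcal{R}(\mathcal{R}(D_{n-1})\cdots\mathcal{R}(D_{1}))$, which is $\geq 0$ by the inductive hypothesis and hence Hermitian, the involution property yields $\mathcal{R}(E)=\mathcal{R}(D_{n-1})\cdots\mathcal{R}(D_{1})$, so $\mathcal{R}(\mathcal{R}(D_{n})\cdots\mathcal{R}(D_{1}))=\mathcal{R}(\mathcal{R}(D_{n})\mathcal{R}(E))\geq 0$ by Corollary 3.7 applied to $D_{n}$ and $E$.)

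The main obstacle is not any hard estimate but careful bookkeeping: one must keep the order straight, so that $\mathcal{R}(D_{n})\cdots\mathcal{R}(D_{1})$ corresponds to $\Phi_{n}\circ\cdots\circ\Phi_{1}$ and not its reverse, and one must note that every $\mathcal{R}$ occurring here is an endomorphism of $\mathbb{M}_{N^{2}}$ because $d_{A}=d_{B}=N$. The only genuinely non-formal ingredient --- producing a CP map with prescribed positive Hermitian dynamical matrix --- is already packaged in the Choi-Jamio{\l}kowski isomorphism introduced just before Proposition 3.2, so no new machinery is required.
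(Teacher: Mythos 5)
Your proposal is correct and follows essentially the same route as the paper: realize each $D_{j}$ as the Choi/dynamical matrix of a CP map $\Phi_{j}$, invoke Lemma 3.6 iteratively to get complete positivity of $\Phi_{n}\circ\cdots\circ\Phi_{1}$, and identify its dynamical matrix with $\mathcal{R}(\mathcal{R}(D_{n})\cdots\mathcal{R}(D_{1}))$ via $L_{\Phi\circ\Psi}=L_{\Phi}L_{\Psi}$ and the involutivity of $\mathcal{R}$. You merely spell out details (the spectral decomposition and the identification $L_{\Phi_{j}}=\mathcal{R}(D_{j})$) that the paper leaves implicit.
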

\begin{proof} For each positive matrix $D_j$ of size $N^{2}\times N^{2}$, linear super-operator $\Phi_j$
determined by  $D_j$ is completely positive.  Thus the composition of $n$ completely positive
linear super-operators $\{D_{j}: j=1,\ldots,n\}$ is denoted by $\Phi=\Phi_{n}\circ\cdots\circ\Phi_1$. Therefore the dynamical matrix
$D_{\Phi}$ for $\Phi$ is equal to $\mathcal{R}(\mathcal{R}(D_{n})\mathcal{R}(D_{n-1})\cdots \mathcal{R}(D_{1}))$.
Since composition preserves completely positivity by the above lemma, $\Phi$ is completely positive,
therefore $D_{\Phi}\geq0$.
\end{proof}
\begin{prop} The Hilbert-Schmidt inner product, i.e., $\langle X,Y\rangle=\mbox{Tr}(X^{\dagger}Y)$,  on the matrix space
$\mathcal{M}_{N}$ induces another inner product in the space of linear maps $\mathscr{L}(\mathcal{M}_{N}, \mathcal{M}_{N})$.
\end{prop}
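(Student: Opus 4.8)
The plan is to transport the Hilbert--Schmidt structure of $\mathcal{M}_{N}$ to $\mathscr{L}(\mathcal{M}_{N},\mathcal{M}_{N})$ by the averaging formula
\[
\langle\Phi,\Psi\rangle:=\sum_{k=1}^{N^{2}}\langle\Phi(F_{k}),\Psi(F_{k})\rangle ,
\]
where $\{F_{k}\}_{k=1}^{N^{2}}$ is any orthonormal basis of $\mathcal{M}_{N}$ for the Hilbert--Schmidt inner product; a convenient explicit choice is $F_{k}=E_{ij}=|i\rangle\langle j|$, which is orthonormal since $\langle E_{ij},E_{kl}\rangle=\mathrm{tr}(E_{ij}^{\dagger}E_{kl})=\delta_{ik}\delta_{jl}$ and there are exactly $N^{2}=\dim\mathcal{M}_{N}$ of them. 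Equivalently, one may recall that $\Phi\mapsto L_{\Phi}$ is a linear isomorphism of $\mathscr{L}(\mathcal{M}_{N},\mathcal{M}_{N})$ onto $\mathcal{M}_{N^{2}}$ and simply pull the Hilbert--Schmidt inner product of $\mathcal{M}_{N^{2}}$ back along it. I would present the averaging formula as the intrinsic object and then check it agrees with the pulled-back form.

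First I would establish that the averaging formula does not depend on the chosen orthonormal basis: if $\{G_{k}\}$ is a second Hilbert--Schmidt-orthonormal basis, write $G_{k}=\sum_{l}u_{kl}F_{l}$ with $U=[u_{kl}]$ unitary, and then linearity of $\Phi,\Psi$ together with sesquilinearity of $\langle\cdot,\cdot\rangle$ on $\mathcal{M}_{N}$ gives
\[
\sum_{k}\langle\Phi(G_{k}),\Psi(G_{k})\rangle=\sum_{k,l,m}\overline{u_{kl}}\,u_{km}\langle\Phi(F_{l}),\Psi(F_{m})\rangle=\sum_{l,m}(U^{\dagger}U)_{lm}\langle\Phi(F_{l}),\Psi(F_{m})\rangle=\sum_{l}\langle\Phi(F_{l}),\Psi(F_{l})\rangle .
\]
With well-definedness in hand, the three axioms are immediate. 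Conjugate symmetry and sesquilinearity pass termwise from $\mathcal{M}_{N}$ using the linearity of the maps, so $\langle\Psi,\Phi\rangle=\overline{\langle\Phi,\Psi\rangle}$ and $\langle\Phi,a\Psi_{1}+b\Psi_{2}\rangle=a\langle\Phi,\Psi_{1}\rangle+b\langle\Phi,\Psi_{2}\rangle$. For positive-definiteness, $\langle\Phi,\Phi\rangle=\sum_{k}\|\Phi(F_{k})\|^{2}\geq 0$, and equality forces $\Phi(F_{k})=0$ for every $k$; since the $F_{k}$ span $\mathcal{M}_{N}$ and $\Phi$ is linear, this yields $\Phi=0$.

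Finally I would tie the construction to the matrix picture of Section 3: choosing $F_{k}=E_{ij}$ and using $|\Phi(E_{ij})\rangle\rangle=L_{\Phi}|ij\rangle$ together with $\langle S,T\rangle=\langle\langle S|T\rangle\rangle$ from Remark 2.2, one gets $\langle\Phi(E_{ij}),\Psi(E_{ij})\rangle=\langle ij|L_{\Phi}^{\dagger}L_{\Psi}|ij\rangle$, hence $\langle\Phi,\Psi\rangle=\mathrm{Tr}(L_{\Phi}^{\dagger}L_{\Psi})=\langle L_{\Phi},L_{\Psi}\rangle$, which by the identity $\langle L_{\Phi},L_{\Psi}\rangle=\langle D_{\Phi},D_{\Psi}\rangle$ already established equals $\langle D_{\Phi},D_{\Psi}\rangle$ as well. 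Thus the induced inner product is nothing but the Hilbert--Schmidt inner product read off on the matrix representations $L_{\Phi}$, or on the dynamical matrices $D_{\Phi}$. The only point needing genuine care is the basis-independence of the averaging formula; conceptually, positive-definiteness of the Hilbert--Schmidt form on $\mathcal{M}_{N}$ is precisely what makes $\langle\Phi,\Phi\rangle=0\Rightarrow\Phi=0$ work, but beyond that this is a routine transfer of structure along a linear isomorphism, so I do not anticipate a real obstacle.
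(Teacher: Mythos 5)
Your proposal is correct and follows essentially the same route as the paper: both define $\langle\Phi,\Psi\rangle=\sum_{\alpha}\mathrm{Tr}\,\Phi(E_{\alpha})^{\dagger}\Psi(E_{\alpha})$ and reduce the whole matter to basis-independence, the paper via the completeness relation $\sum_{\alpha}|E_{\alpha}\rangle\rangle\langle\langle E_{\alpha}|=I$ and you via the equivalent statement $U^{\dagger}U=I$ for the change-of-basis matrix. Your explicit verification of the inner-product axioms and the identification $\langle\Phi,\Psi\rangle=\mathrm{Tr}(L_{\Phi}^{\dagger}L_{\Psi})=\langle D_{\Phi},D_{\Psi}\rangle$ are correct additions that the paper leaves implicit (the latter appearing separately in Proposition 3.2(iv)).
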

\begin{proof} Let $\{E_{\alpha}:\alpha=1,\ldots,N^{2}\}$ and $\{F_{\alpha}:\alpha=1,\ldots,N^{2}\}$ be orthonormal bases in $\mathcal{M}_{N}$, where $\langle E_{\alpha},E_{\beta}\rangle=\langle F_{\alpha},F_{\beta}\rangle=\delta_{\alpha\beta}$. We need only to prove that\\
\indent$\sum_{\alpha=1}^{N^{2}}\mbox{Tr}\Phi(E_{\alpha})^{\dagger}\Psi(E_{\alpha})=\sum_{\alpha=1}^{N^{2}}\mbox{Tr}\Phi(F_{\alpha})^{\dagger}\Psi(F_{\alpha})$.\\
Since $|E_{\alpha}\rangle\rangle=\sum_{\beta=1}^{N^{2}}|F_{\beta}\rangle\rangle\langle\langle F_{\beta}|E_{\alpha}\rangle\rangle=\sum_{\beta=1}^{N^{2}}\langle\langle F_{\beta}|E_{\alpha}\rangle\rangle|F_{\beta}\rangle\rangle$, $E_{\alpha}=\sum_{\beta=1}^{N^{2}}\langle\langle F_{\beta}|E_{\alpha}\rangle\rangle F_{\beta}$.
\begin{eqnarray*}
\sum_{\alpha=1}^{N^{2}}\mbox{Tr}\Phi(E_{\alpha})^{\dagger}\Psi(E_{\alpha})&=&\sum_{\alpha=1}^{N^{2}}\sum_{\beta,\gamma=1}^{N^{2}}\overline{\langle\langle F_{\beta}|E_{\alpha}\rangle\rangle}\langle\langle F_{\gamma}|E_{\alpha}\rangle\rangle\mbox{Tr}\Phi(F_{\beta})^{\dagger}\Psi(F_{\gamma})\\
&=&\sum_{\alpha=1}^{N^{2}}\sum_{\beta,\gamma=1}^{N^{2}}\langle\langle F_{\gamma}|E_{\alpha}\rangle\rangle\langle\langle E_{\alpha}|F_{\beta}\rangle\rangle\mbox{Tr}\Phi(F_{\beta})^{\dagger}\Psi(F_{\gamma})\\
&=&\sum_{\beta,\gamma=1}^{N^{2}}\langle\langle F_{\gamma}|(\sum_{\alpha=1}^{N^{2}}|E_{\alpha}\rangle\rangle\langle\langle E_{\alpha}|)|F_{\beta}\rangle\rangle\mbox{Tr}\Phi(F_{\beta})^{\dagger}\Psi(F_{\gamma})\\
&=&\sum_{\beta,\gamma=1}^{N^{2}}\langle\langle F_{\gamma}|F_{\beta}\rangle\rangle\mbox{Tr}\Phi(F_{\beta})^{\dagger}\Psi(F_{\gamma})=\sum_{\beta,\gamma=1}^{N^{2}}\delta_{\beta\gamma}\mbox{Tr}\Phi(F_{\beta})^{\dagger}\Psi(F_{\gamma})\\
&=&\sum_{\alpha=1}^{N^{2}}\mbox{Tr}\Phi(F_{\alpha})^{\dagger}\Psi(F_{\alpha}).
\end{eqnarray*}
\end{proof}
Now we define the inner product of two linear super-operators $\Phi$ and $\Psi$ (see \cite{Asorey}) as follows:\\
\begin{eqnarray}
\langle\Phi,\Psi\rangle\equiv\sum_{\alpha=1}^{N^{2}}\mbox{Tr}\Phi(E_{\alpha})^{\dagger}\Psi(E_{\alpha}).
\end{eqnarray}
Using this correspondence it is possible to introduce two different bases, associated to the bases $\{E_{\alpha}\}_{\alpha=1}^{N^{2}},\{F_{\beta}\}_{\beta=1}^{N^{2}}$,  in the space of linear maps:

\begin{enumerate}[1)]
\item \textbf{Type--I basis $\{\Delta_{\alpha\beta}\}$ in
$\mathscr{L}(\mathcal{M}_{N}, \mathcal{M}_{N})$} is defined by
\begin{eqnarray}
\Delta_{\alpha\beta}(X)=E_{\alpha}\langle
F_{\beta},X\rangle=E_{\alpha}\mbox{Tr}F^{\dagger}_{\beta}X,\
X\in\mathbb{M}_{N};
\end{eqnarray}
and
\item \textbf{Type--II basis $\{\Theta_{\alpha\beta}\}$ in $\mathscr{L}(\mathcal{M}_{N}, \mathcal{M}_{N})$} is defined by
\begin{eqnarray}
\Theta_{\alpha\beta}(X)=E_{\alpha}XF^{\dagger}_{\beta},\ X\in\mathbb{M}_{N}.
\end{eqnarray}
\end{enumerate}
Indeed, 1) Let
$\sum_{\alpha,\beta=1}^{N^{2}}c_{\alpha\beta}\Delta_{\alpha\beta}=0$
for some scalars $c_{\alpha\beta}\in\mathbb{C}$. This implies that
$\sum_{\alpha,\beta=1}^{N^{2}}c_{\alpha\beta}\Delta_{\alpha\beta}(X)=0$,
in particular, for $X=F_{\gamma}(\gamma=1,\ldots,N^{2})$, we have:\\
\indent$
0=\sum_{\alpha,\beta=1}^{N^{2}}c_{\alpha\beta}\Delta_{\alpha\beta}(F_{\gamma})=\sum_{\alpha,\beta=1}^{N^{2}}c_{\alpha\beta}\delta_{\beta\gamma}E_{\alpha}=\sum_{\alpha=1}^{N^{2}}c_{\alpha\gamma}E_{\alpha}
$\\
Since $\{E_{\alpha}\}$ is linearly independent,
$c_{\alpha\gamma}=0(\alpha,\gamma=1,\ldots,N^{2})$. We have also
that
$\langle\Delta_{\alpha\beta},\Delta_{\mu\nu}\rangle=\sum_{i,j,k,l=1}^{N^{2}}\mbox{Tr}[\Delta_{\alpha\beta}(|i\rangle\langle
j|)^{\dagger}\Delta_{\mu\nu}(|k\rangle\langle
l|)]=\delta_{\alpha\mu}\delta_{\beta\nu}$. Furthermore,
$L_{\Delta_{\alpha\beta}}=|E_{\alpha}\rangle\rangle\langle\langle
F_{\beta}|$.\\
2) Let
$\sum_{\alpha,\beta=1}^{N^{2}}c_{\alpha\beta}\Theta_{\alpha\beta}=0$
for some scalars $c_{\alpha\beta}\in\mathbb{C}$. This implies that
$\sum_{\alpha,\beta=1}^{N^{2}}c_{\alpha\beta}\Theta_{\alpha\beta}(X)=0$,
we have:\\
\indent$0=\sum_{\alpha,\beta=1}^{N^{2}}c_{\alpha\beta}\Theta_{\alpha\beta}(X)=\sum_{\alpha,\beta=1}^{N^{2}}c_{\alpha\beta}E_{\alpha}XF^{\dagger}_{\beta}$,\\
which means that
\begin{eqnarray*}
0&=&|\sum_{\alpha,\beta=1}^{N^{2}}c_{\alpha\beta}\Theta_{\alpha\beta}(X)\rangle\rangle=
\sum_{\alpha,\beta=1}^{N^{2}}c_{\alpha\beta}|\Theta_{\alpha\beta}(X)\rangle\rangle=\sum_{\alpha,\beta=1}^{N^{2}}c_{\alpha\beta}|E_{\alpha}XF^{\dagger}_{\beta}\rangle\rangle\\
 &=&\left(\sum_{\alpha,\beta=1}^{N^{2}}c_{\alpha\beta}E_{\alpha}\otimes F^{\ast}_{\beta}\right)|X\rangle\rangle;
\end{eqnarray*}
i.e., $\sum_{\alpha,\beta=1}^{N^{2}}c_{\alpha\beta}E_{\alpha}\otimes
F^{\ast}_{\beta}=0$ since $X$ is arbitrary. Because of the
independence of the set $\{E_{\alpha}\otimes
F^{\ast}_{\beta}\}^{N^{2}}_{\alpha,\beta=1}$, this implies that
$c_{\alpha\beta}=0(\alpha,\beta=1,\ldots,N^{2})$. And we have also
that
$\langle\Theta_{\alpha\beta},\Theta_{\mu\nu}\rangle=\delta_{\alpha\mu}\delta_{\beta\nu}$.
Furthermore, $L_{\Theta_{\alpha\beta}}=E_{\alpha}\otimes
F^{\ast}_{\beta}.\Box$
\begin{remark} Therefore, according to two kind of the
above-mentioned bases, we can expanding any mapping
$\Phi\in\mathscr{L}(\mathcal{M}_{N}, \mathcal{M}_{N})$ with respect to
Type--I and Type--II bases, respectively, to get two expressions
that follow:
\begin{eqnarray}
\Phi=\sum_{\alpha,\beta=1}^{N^{2}}p_{\alpha\beta}\Delta_{\alpha\beta}=\sum_{\alpha,\beta=1}^{N^{2}}q_{\alpha\beta}\Theta_{\alpha\beta}.
\end{eqnarray}
Now $L_{\Phi}=\sum_{\alpha,\beta=1}^{N^{2}}p_{\alpha\beta}|E_{\alpha}\rangle\rangle\langle\langle F_{\beta}|=\sum_{\alpha,\beta=1}^{N^{2}}q_{\alpha\beta}E_{\alpha}\otimes F^{\ast}_{\beta}$. We write $P=[p_{\alpha\beta}], Q=[q_{\alpha\beta}]$. \\
There is natural question to be asked: what is the relationships among these matrices $P, Q$? (see \cite{Nambu})
\end{remark}
\begin{prop} With the above notations,
\begin{eqnarray}
\langle\Delta_{\alpha\beta},\Theta_{\mu\nu}\rangle=\langle\Theta_{\alpha\beta},\Delta_{\mu\nu}\rangle=\mbox{Tr}(E^{\dagger}_{\alpha}E_{\mu}F_{\beta}F^{\dagger}_{\nu}).
\end{eqnarray}
Thus
\begin{enumerate}[(i)]
\item
$p_{\alpha\beta}=\sum_{\mu,\nu=1}^{N^{2}}\mbox{Tr}(E^{\dagger}_{\alpha}E_{\mu}F_{\beta}F^{\dagger}_{\nu})q_{\mu\nu}$;
\item
$q_{\alpha\beta}=\sum_{\mu,\nu=1}^{N^{2}}\mbox{Tr}(E^{\dagger}_{\alpha}E_{\mu}F_{\beta}F^{\dagger}_{\nu})p_{\mu\nu}$.
\end{enumerate}
\end{prop}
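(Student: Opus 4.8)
The plan is to compute $\langle\Delta_{\alpha\beta},\Theta_{\mu\nu}\rangle$ and $\langle\Theta_{\alpha\beta},\Delta_{\mu\nu}\rangle$ directly from the definition $\langle\Phi,\Psi\rangle=\sum_{\gamma=1}^{N^{2}}\mbox{Tr}\,\Phi(E_{\gamma})^{\dagger}\Psi(E_{\gamma})$, legitimately summing over the orthonormal basis $\{E_{\gamma}\}$ since, as shown in the preceding proposition, this value is independent of the orthonormal basis chosen, and then to deduce (i) and (ii) by expanding $\Phi$ in each of the two bases and pairing with $\Delta_{\alpha\beta}$ resp. $\Theta_{\alpha\beta}$, using the orthonormality relations $\langle\Delta_{\alpha\beta},\Delta_{\mu\nu}\rangle=\langle\Theta_{\alpha\beta},\Theta_{\mu\nu}\rangle=\delta_{\alpha\mu}\delta_{\beta\nu}$ already recorded above.

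For the first identity: since $\Delta_{\alpha\beta}(E_{\gamma})=E_{\alpha}\,\mbox{Tr}(F^{\dagger}_{\beta}E_{\gamma})$ and $\Theta_{\mu\nu}(E_{\gamma})=E_{\mu}E_{\gamma}F^{\dagger}_{\nu}$, taking the adjoint of the matrix $\Delta_{\alpha\beta}(E_{\gamma})$ (a scalar multiple of $E_{\alpha}$) and using $\overline{\mbox{Tr}(F^{\dagger}_{\beta}E_{\gamma})}=\mbox{Tr}(E^{\dagger}_{\gamma}F_{\beta})=\langle E_{\gamma},F_{\beta}\rangle$ yields
$$\langle\Delta_{\alpha\beta},\Theta_{\mu\nu}\rangle=\sum_{\gamma=1}^{N^{2}}\langle E_{\gamma},F_{\beta}\rangle\,\mbox{Tr}\big(E^{\dagger}_{\alpha}E_{\mu}E_{\gamma}F^{\dagger}_{\nu}\big).$$
The decisive step is the completeness relation in the Hilbert-Schmidt space $\mathcal{M}_{N}$, namely $\sum_{\gamma}\langle E_{\gamma},F_{\beta}\rangle E_{\gamma}=F_{\beta}$ (this is simply the expansion of $F_{\beta}$ in the orthonormal basis $\{E_{\gamma}\}$); moving the $\gamma$-sum inside the trace collapses it to $\mbox{Tr}(E^{\dagger}_{\alpha}E_{\mu}F_{\beta}F^{\dagger}_{\nu})$. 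The computation of $\langle\Theta_{\alpha\beta},\Delta_{\mu\nu}\rangle$ runs symmetrically: here $\Theta_{\alpha\beta}(E_{\gamma})^{\dagger}=F_{\beta}E^{\dagger}_{\gamma}E^{\dagger}_{\alpha}$ and $\Delta_{\mu\nu}(E_{\gamma})=E_{\mu}\,\mbox{Tr}(F^{\dagger}_{\nu}E_{\gamma})$, and the completeness relation in the adjointed form $\sum_{\gamma}\mbox{Tr}(F^{\dagger}_{\nu}E_{\gamma})\,E^{\dagger}_{\gamma}=F^{\dagger}_{\nu}$ together with cyclicity of the trace again produces $\mbox{Tr}(E^{\dagger}_{\alpha}E_{\mu}F_{\beta}F^{\dagger}_{\nu})$. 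An alternative is to use $L_{\Delta_{\alpha\beta}}=|E_{\alpha}\rangle\rangle\langle\langle F_{\beta}|$ and $L_{\Theta_{\mu\nu}}=E_{\mu}\otimes F^{\ast}_{\nu}$ together with the one-line identity $\langle\Phi,\Psi\rangle=\mbox{Tr}(L^{\dagger}_{\Phi}L_{\Psi})$, which reduces the claim to evaluating $\langle\langle E_{\alpha}|(E_{\mu}\otimes F^{\ast}_{\nu})|F_{\beta}\rangle\rangle$ by means of the rule $|QXR\rangle\rangle=(Q\otimes R^{\mbox{t}})|X\rangle\rangle$.

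For (i) and (ii), write $\Phi=\sum_{\mu,\nu}p_{\mu\nu}\Delta_{\mu\nu}=\sum_{\mu,\nu}q_{\mu\nu}\Theta_{\mu\nu}$. Since $\{\Delta_{\mu\nu}\}$ is orthonormal, $\langle\Delta_{\alpha\beta},\Phi\rangle=p_{\alpha\beta}$; on the other hand, by linearity of $\langle\cdot,\cdot\rangle$ in its second argument, $\langle\Delta_{\alpha\beta},\Phi\rangle=\sum_{\mu,\nu}q_{\mu\nu}\langle\Delta_{\alpha\beta},\Theta_{\mu\nu}\rangle$, and inserting the identity just proved gives (i); pairing instead with $\Theta_{\alpha\beta}$ and using $\langle\Theta_{\alpha\beta},\Theta_{\mu\nu}\rangle=\delta_{\alpha\mu}\delta_{\beta\nu}$ together with $\langle\Theta_{\alpha\beta},\Delta_{\mu\nu}\rangle=\mbox{Tr}(E^{\dagger}_{\alpha}E_{\mu}F_{\beta}F^{\dagger}_{\nu})$ gives (ii). I do not expect a genuine obstacle here; the only points needing care are the bookkeeping of complex conjugates when adjointing the scalars $\mbox{Tr}(F^{\dagger}_{\beta}E_{\gamma})$ so that they reappear as the coefficients $\langle E_{\gamma},F_{\beta}\rangle$, and invoking the completeness relation for exactly the basis $\{E_{\gamma}\}$ over which one sums, so that the $\gamma$-sum truly telescopes into a single $F_{\beta}$ (respectively $F^{\dagger}_{\nu}$) rather than into an expression involving $\{F_{\gamma}\}$.
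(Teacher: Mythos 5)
Your proof is correct and follows essentially the same route as the paper: evaluate $\langle\Delta_{\alpha\beta},\Theta_{\mu\nu}\rangle$ directly from the defining sum over an orthonormal basis of $\mathcal{M}_{N}$, then obtain (i) and (ii) by expanding $\Phi$ in the two bases and using their orthonormality. The only (cosmetic) difference is that the paper sums over the matrix units $|i\rangle\langle j|$ and collapses the sum by explicit index manipulation, whereas you sum over $\{E_{\gamma}\}$ and collapse it with the completeness relation $\sum_{\gamma}\langle E_{\gamma},F_{\beta}\rangle E_{\gamma}=F_{\beta}$.
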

\begin{proof} By the definition of the inner product in the space
of linear maps,
\begin{eqnarray*}
\langle\Delta_{\alpha\beta},\Theta_{\mu\nu}\rangle&=&\sum_{i,j=1}^{N}\mbox{Tr}((\Delta_{\alpha\beta}(|i\rangle\langle
j|))^{\dagger}\Theta_{\mu\nu}(|i\rangle\langle
j|))=\sum_{i,j=1}^{N}\overline{\langle
j|F_{\beta}^{\dagger}|i\rangle}\mbox{Tr}(E^{\dagger}_{\alpha}E_{\mu}|i\rangle\langle
j|F^{\dagger}_{\nu})\\
&=&\sum_{i,j=1}^{N}\langle i|F_{\beta}|j\rangle\cdot\langle
j|F^{\dagger}_{\nu}
E^{\dagger}_{\alpha}E_{\mu}|i\rangle=\sum_{i=1}^{N}\langle
i|F_{\beta}F^{\dagger}_{\nu}
E^{\dagger}_{\alpha}E_{\mu}|i\rangle=\mbox{Tr}(F_{\beta}F^{\dagger}_{\nu}
E^{\dagger}_{\alpha}E_{\mu})\\
&=&\mbox{Tr}(E^{\dagger}_{\alpha}E_{\mu}F_{\beta}F^{\dagger}_{\nu}).
\end{eqnarray*}
Similarly, we have also:
$\langle\Theta_{\alpha\beta},\Delta_{\mu\nu}\rangle=\mbox{Tr}(E^{\dagger}_{\alpha}E_{\mu}F_{\beta}F^{\dagger}_{\nu})$.
Since
\begin{eqnarray*}
p_{\alpha\beta}&=&\langle\Delta_{\alpha\beta},\Phi\rangle=\sum_{\mu,\nu=1}^{N^{2}}
\langle\Delta_{\alpha\beta},\Theta_{\mu\nu}\rangle\langle\Theta_{\mu\nu},\Phi\rangle\\&
=&\sum_{\mu,\nu=1}^{N^{2}}\langle\Delta_{\alpha\beta},\Theta_{\mu\nu}\rangle
q_{\mu\nu},
\end{eqnarray*}
1) and 2) is trivial.
\end{proof}
\begin{remark}
\begin{enumerate}[(i)]
\item A special case is provided by the choice
$E_{\alpha}=F_{\alpha}$ (or $E_{\alpha}=F_{\alpha}=|i\rangle\langle
j|$, where $\{|i\rangle\}_{i=1}^{N}$ are an orthonormal basis for
$\mathbb{C}^{N}$) (see \cite{Pablo}).\\
 \item Since $|I\rangle\rangle=|\sum_{i}|i\rangle\langle i|\rangle\rangle=\sum_{i}|ii\rangle$, $|I\rangle\rangle\langle\langle I|=\sum_{i,j}|ii\rangle\rangle\langle\langle jj|=\sum_{i,j}|i\rangle\langle j|\otimes |i\rangle\langle j|$. We know that $I\otimes I=\sum_{\alpha=1}^{N^{2}}|E_{\alpha}\rangle\rangle\langle\langle E_{\alpha}|$ when $\{E_{\alpha}\}_{\alpha=1}^{N^{2}}$ is an orthonormal basis for $\mathcal{M}_{N}$. Thus we have:
$|I\rangle\rangle\langle\langle I|=\mathcal{R}(I\otimes I)=\sum_{\alpha=1}^{N^{2}}\mathcal{R}(|E_{\alpha}\rangle\rangle\langle\langle E_{\alpha}|)=\sum_{\alpha=1}^{N^{2}}E_{\alpha}\otimes E^{\ast}_{\alpha}$.
If there is another orthonormal basis $\{F_{\beta}\}_{\beta=1}^{N^{2}}$ for $\mathcal{M}_{N}$,
we still have: $|I\rangle\rangle\langle\langle I|=\sum_{\beta=1}^{N^{2}}F_{\beta}\otimes F^{\ast}_{\beta}$.
Therefore, $\sum_{\alpha=1}^{N^{2}}E_{\alpha}\otimes E^{\ast}_{\alpha}=\sum_{\beta=1}^{N^{2}}F_{\beta}\otimes F^{\ast}_{\beta}=\sum^{N}_{i,j=1}|i\rangle\langle j|\otimes |i\rangle\langle j|=|I\rangle\rangle\langle\langle I|$.
Furthermore, we have the swap operator $S=\sum_{i,j=1}^{N}|ij\rangle\langle ji|=\sum_{\alpha=1}^{N^{2}}E_{\alpha}\otimes E^{\dagger}_{\alpha}=\sum_{\beta=1}^{N^{2}}F_{\beta}\otimes F^{\dagger}_{\beta}$.\\

\item In fact, given two orthonormal bases
$\{E_{\alpha}\}_{\alpha=1}^{N^{2}}$ and
$\{F_{\alpha}\}_{\alpha=1}^{N^{2}}$ in $\mathcal{M}_{N}$, the
relation
\begin{eqnarray}
\mathscr{L}(\mathcal{M}_{N},\mathcal{M}_{N})\ni \Phi\longrightarrow
\Lambda_{\Phi} =\sum_{\alpha=1}^{N^{2}}\Phi(E_{\alpha})\otimes
F_{\alpha}\in \mathcal{M}_{N}\otimes\mathcal{M}_{N}
\end{eqnarray}
defines an isomorphism between
$\mathscr{L}(\mathcal{M}_{N},\mathcal{M}_{N})$ and
$\mathcal{M}_{N}\otimes\mathcal{M}_{N}$. The isomorphism is an
isometry:
\begin{eqnarray*}
\langle
\Lambda_{\Phi},\Lambda_{\Psi}\rangle&=&\langle\sum_{\alpha=1}^{N^{2}}\Phi(E_{\alpha})\otimes
F_{\alpha},\sum_{\beta=1}^{N^{2}}\Psi(E_{\beta})\otimes
F_{\beta}\rangle=\sum_{\alpha,\beta=1}^{N^{2}}\langle\Phi(E_{\alpha})\otimes
F_{\alpha},\Psi(E_{\beta})\otimes
F_{\beta}\rangle\\&=&\sum_{\alpha,\beta=1}^{N^{2}}\langle\Phi(E_{\alpha}),\Psi(E_{\beta})\rangle
\langle
F_{\alpha},F_{\beta}\rangle=\sum_{\alpha,\beta=1}^{N^{2}}\langle\Phi(E_{\alpha}),\Psi(E_{\beta})\rangle
\delta_{\alpha\beta}\\&=&\sum_{\alpha=1}^{N^{2}}\langle\Phi(E_{\alpha}),\Psi(E_{\alpha})\rangle
=\langle\Phi,\Psi\rangle; \mbox{i.e.}, \langle
\Lambda_{\Phi},\Lambda_{\Psi}\rangle=\langle\Phi,\Psi\rangle.
\end{eqnarray*}
\end{enumerate}
\end{remark}
\section{Best separable approximation for states}
In this section we recall the so-called optimal and the best separability approximation(OSA and BSA respectively).
Although the results below have been proven in \cite{Lewen,Sinisa}, we give the framework for our convenience. Other results involved can be found in \cite{Wellens}. In the method of BSA, for any density matrix $\rho$ there exist a ``optimal" separable matrix $\rho^{\ast}_{s}$ and ``optimal" non-negative scalar $\Lambda$ such that $\rho-\Lambda\rho^{\ast}_{s}\geq 0$. We describe these results involved that follow:

\begin{definition} A non-negative parameter $\Lambda$ is called maximal with respect to a (not necessarily
normalized) density matrix $\rho$, and the projection operator $P=|\psi\rangle\langle\psi|$ if $\rho-\Lambda P\geq0$, and
for every $\epsilon\geq0$, the matrix $\rho-(\Lambda+\epsilon)P$ is not positive definite.
\end{definition}
\begin{definition} A pair of non-negative $(\Lambda_{1},\Lambda_{2})$ is called maximal with respect to $\rho$ and a pair of projection operators $P_{1}=|\psi_{1}\rangle\langle\psi_{1}|, P_{2}=|\psi_{2}\rangle\langle\psi_{2}|$, if $\rho-\Lambda_{1}P_{1}-\Lambda_{2}P_{2}\geq0$, $\Lambda_{1}$ is
maximal with respect to $\rho-\Lambda_{2}P_{2}$ and to the projector $P_{1}$,$\Lambda_{2}$ is
maximal with respect to $\rho-\Lambda_{1}P_{1}$ and to the projector $P_{2}$ and and the sum $\Lambda_{1}+\Lambda_{2}$ is maximal.
\end{definition}
\begin{thrm} For any density matrix $\rho$ (separable, or not) and for any (fixed) countable set $V$ of product vectors belonging to the range of $\rho$, there exist $\Lambda(V)\geq0$ and a separable matrix
$$\rho^{\ast}_{s}(V)=\sum_{\alpha}\Lambda_{\alpha}P_{\alpha}$$
where each projector $P_{\alpha}$ is generated by some product vector in $R(\rho)$, and all $\Lambda_{\alpha}\geq0$, such that $\delta\rho=\rho-\Lambda\rho^{\ast}_{s}\geq0$, and that $\rho^{\ast}_{s}(V)$ provides the optimal separable approximation (OSA) to $\rho$ since $\mbox{Tr}(\delta\rho)$ is minimal or, equivalently, $\Lambda$ is maximal. There exists also the best separable approximation $\rho^{\ast}_{s}$ for which $\Lambda=\max_{V}\Lambda(V)$. Obviously, $\Lambda(V)\leq\Lambda(V')$ when $V'\subset V$.
\end{thrm}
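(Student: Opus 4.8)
The plan is to establish the claim in three stages of increasing generality — a single product projector, an arbitrary finite subset of $V$, and finally countable $V$ together with the supremum over all admissible $V$ — reducing everything to elementary compactness in the finite-dimensional cone of positive semidefinite operators. First I would settle Definition 4.1: for a unit product vector $|\psi\rangle\in R(\rho)$ and $P=|\psi\rangle\langle\psi|$, the set $I_{\psi}=\{\Lambda\ge0:\rho-\Lambda P\ge0\}$ contains $0$, is an interval (if $0\le\Lambda'\le\Lambda\in I_{\psi}$ then $\rho-\Lambda'P\ge\rho-\Lambda P\ge0$), is bounded (evaluating the quadratic form at $|\psi\rangle$ gives $\Lambda\le\langle\psi|\rho|\psi\rangle$), and is closed since the positive semidefinite cone is closed and $\Lambda\mapsto\rho-\Lambda P$ is continuous; hence $\Lambda_{\max}(\rho,P)=\max I_{\psi}$ exists, and a Schur-complement computation on $R(\rho)$ would even give $\Lambda_{\max}=\langle\psi|\rho^{+}|\psi\rangle^{-1}$, with $\rho^{+}$ the pseudoinverse. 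Since $\rho-\Lambda_{\max}P$ is then singular and no larger $\Lambda$ lies in $I_{\psi}$, this $\Lambda_{\max}$ is maximal in the sense required.

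For a finite family $V=\{P_{1},\dots,P_{M}\}$ with $P_{i}=|\psi_{i}\rangle\langle\psi_{i}|$ and $|\psi_{i}\rangle$ unit product vectors in $R(\rho)$, I would consider
\[
K_{V}=\Bigl\{(\Lambda_{1},\dots,\Lambda_{M})\in\mathbb{R}^{M}:\Lambda_{i}\ge0,\ \rho-\sum_{i=1}^{M}\Lambda_{i}P_{i}\ge0\Bigr\}.
\]
This set is convex, nonempty (it contains $0$), closed (positive semidefinite cone closed, the constraint affine), and bounded, because $\rho-\sum_{i}\Lambda_{i}P_{i}\ge0$ forces $\sum_{i}\Lambda_{i}\le\mbox{Tr}(\rho)$ on taking traces. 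Hence the linear functional $\sum_{i}\Lambda_{i}$ attains a maximum $\Lambda(V)$ on $K_{V}$ at some $(\Lambda_{1}^{\ast},\dots,\Lambda_{M}^{\ast})$; put $X=\sum_{i}\Lambda_{i}^{\ast}P_{i}$ and, when $\Lambda(V)=\mbox{Tr}(X)>0$, $\rho_{s}^{\ast}(V)=X/\mbox{Tr}(X)$, a separable density matrix, so that $\delta\rho=\rho-X=\rho-\Lambda(V)\rho_{s}^{\ast}(V)\ge0$ and (using $\mbox{Tr}(P_{i})=1$) $\mbox{Tr}(\delta\rho)=\mbox{Tr}(\rho)-\Lambda(V)$ is minimal. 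The componentwise maximality clauses of Definitions 4.1–4.2 are then automatic: if, say, $\Lambda_{1}^{\ast}$ were not maximal relative to $\rho-\sum_{i\ge2}\Lambda_{i}^{\ast}P_{i}$ and $P_{1}$, one could enlarge $\Lambda_{1}^{\ast}$ with the other coordinates fixed and stay inside $K_{V}$, contradicting maximality of the sum.

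For countable $V=\{P_{1},P_{2},\dots\}$, set $V_{n}=\{P_{1},\dots,P_{n}\}$; padding with zeros gives $K_{V_{n}}\subseteq K_{V_{n+1}}$, so $\Lambda(V_{n})$ is nondecreasing and bounded by $\mbox{Tr}(\rho)$ and $\Lambda(V)=\lim_{n}\Lambda(V_{n})$ exists, and truncating any admissible family drawn from $V$ to its finite initial segments shows $\Lambda(V)$ is the supremal separable weight obtainable from product vectors of $V$. For attainment I would take optimal $X_{n}=\sum_{i\le n}\Lambda_{i}^{(n)}P_{i}$ for $V_{n}$, so $0\le X_{n}\le\rho$ and $\mbox{Tr}(X_{n})=\Lambda(V_{n})$; by Carath\'eodory in the $(\dim\mathcal{H})^{2}$-dimensional real space of Hermitian operators, each $X_{n}$ is a nonnegative combination of at most $(\dim\mathcal{H})^{2}$ rank-one projectors onto unit product vectors in $R(\rho)$, with coefficients bounded by $\mbox{Tr}(\rho)$. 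Since the unit product vectors in $R(\rho)$ form a compact set (the image of the product of the two unit spheres under the continuous tensor map, intersected with the closed subspace $R(\rho)$), along a subsequence both the coefficients and the vectors converge, and the limit $X=\lim_{n}X_{n}$ is again a nonnegative combination of rank-one product projectors in $R(\rho)$ with $0\le X\le\rho$ and $\mbox{Tr}(X)=\Lambda(V)$; then $\rho_{s}^{\ast}=X/\mbox{Tr}(X)$ (arbitrary separable if $\Lambda(V)=0$) is the desired OSA. Finally $V\subseteq V'$ forces $K_{V}\subseteq K_{V'}$, hence $\Lambda$ is monotone in $V$, so $\Lambda=\sup_{V}\Lambda(V)$ over all countable sets of product vectors in $R(\rho)$ is finite and is realized on the largest such set (all product vectors of $R(\rho)$, through finite subsets); the same Carath\'eodory-plus-compactness argument then yields an $X^{\ast}$ attaining it, and $\rho_{s}^{\ast}=X^{\ast}/\mbox{Tr}(X^{\ast})$ is the best separable approximation.

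The routine part is the finite-$V$ optimization; the real difficulty is the attainment of the optimum in the countable and BSA cases, where one must simultaneously bound the number of product projectors that can occur — via Carath\'eodory, exploiting finite-dimensionality — and keep the limiting vectors under control. The facts needing care there are that the set of product vectors is closed and that $R(\rho)$ is a fixed closed subspace; once these are in hand, the remaining normalization bookkeeping (the scalar weight $\Lambda$ versus the internal weights $\Lambda_{\alpha}$, and the degenerate case $\Lambda=0$) is straightforward.
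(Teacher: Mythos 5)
The paper does not prove this theorem at all: Section 4 explicitly defers to the cited works of Lewenstein--Sanpera and Karnas--Lewenstein and merely restates the results ``for our convenience,'' so there is no in-paper argument to compare against. Your proof supplies what is missing, and it is essentially the standard compactness argument from those references: closedness and boundedness of $I_{\psi}$ for a single projector, linear programming over the compact convex set $K_{V}$ for finite $V$ (with the trace bound $\sum_{i}\Lambda_{i}\le\mbox{Tr}(\rho)$ and the observation that global maximality of the sum forces the componentwise and pairwise maximality of Definitions 4.1--4.2), and Carath\'eodory plus compactness of the set of unit product vectors in $R(\rho)$ to pass to countable $V$ and to the supremum over all $V$. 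This is correct, and you also silently repair the paper's inverted inclusion in the last sentence (it should read $\Lambda(V)\le\Lambda(V')$ when $V\subset V'$). One point deserves an explicit remark in your write-up: in the countable case your limiting projectors are limits of projectors drawn from $V$, hence lie in the closure of $\{P_{\alpha}:\alpha\in V\}$ inside the compact set of product projectors in $R(\rho)$, but not necessarily in $V$ itself --- the cone generated by a countable family of rank-one projectors need not be closed, so the supremum $\Lambda(V)$ may fail to be attained by combinations of members of $V$ alone. As the theorem is phrased here (each $P_{\alpha}$ only required to come from a product vector in $R(\rho)$) your conclusion matches the statement, but in the original formulation, where the OSA relative to $V$ is built from $V$, this step needs either the hypothesis that $V$ is closed or a reformulation with a supremum; you should flag that choice rather than write $X=\lim_{n}X_{n}$ as if the full sequence converged (only the extracted subsequence does, which suffices).
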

\begin{thrm} Given the set $V$ of product vectors in the range $R(\rho)$ of $\rho$, the matrix $\rho^{\ast}_{s}=\sum_{\alpha}\Lambda_{\alpha}P_{\alpha}$ is the optimal separable approximation(OSA) of $\rho$ if:\\
1) all $\Lambda_{\alpha}$ are maximal with respect to $\rho_{\alpha}=\rho-\sum_{\alpha\neq\alpha}\Lambda_{\alpha'}P_{\alpha'}$, and to the projector $P_{\alpha'}$;\\
2) all pairs $(\Lambda_{\alpha},\Lambda_{\beta})$ are maximal with respect to $\rho_{\alpha\beta}=\rho-\sum_{\alpha\neq\alpha,\beta}\Lambda_{\alpha'}P_{\alpha'}$, and to the projection operators $(P_{\alpha},P_{\beta})$.
\end{thrm}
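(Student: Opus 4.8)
The plan is to read this theorem as a statement about the convex optimization problem that the OSA solves, and to prove it by contradiction: an arbitrary ``improving'' perturbation of $\rho^{\ast}_{s}$ will be reduced to one that changes only one or two of the coefficients $\Lambda_{\alpha}$, at which point the maximality hypotheses 1) and 2) apply directly.

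First I would set up the problem. Fix the set $V=\{P_{\alpha}=|\psi_{\alpha}\rangle\langle\psi_{\alpha}|\}$ of product projectors in $R(\rho)$; since $\mathrm{Tr}(P_{\alpha})=1$, a separable matrix $\sum_{\alpha}\Lambda_{\alpha}P_{\alpha}$ satisfies $\mathrm{Tr}(\delta\rho)=1-\sum_{\alpha}\Lambda_{\alpha}$ with $\delta\rho:=\rho-\sum_{\alpha}\Lambda_{\alpha}P_{\alpha}$, so producing the OSA supported on $V$ is the same as maximising $\Lambda_{\mathrm{tot}}(\vec\Lambda):=\sum_{\alpha}\Lambda_{\alpha}$ over
\[
\mathcal C_{V}:=\left\{\vec\Lambda:\ \Lambda_{\alpha}\ge 0\ \ \forall\alpha,\quad \rho-\sum_{\alpha}\Lambda_{\alpha}P_{\alpha}\ge 0\right\}.
\]
The set $\mathcal C_{V}$ is convex (an orthant intersected with the spectrahedron cut out by a linear matrix inequality), and by the preceding theorem the maximum $\Lambda(V)$ is attained, say at $\vec\Lambda^{\mathrm{opt}}$. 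That 1) and 2) are \emph{necessary} for optimality is immediate: if $\Lambda_{\alpha}$ alone, or a pair sum $\Lambda_{\alpha}+\Lambda_{\beta}$, could be increased with the other coefficients frozen while remaining in $\mathcal C_{V}$, then $\Lambda_{\mathrm{tot}}$ would strictly increase.

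For sufficiency, suppose $\vec\Lambda$ satisfies 1) and 2) but $\Lambda_{\mathrm{tot}}(\vec\Lambda)<\Lambda(V)$. By convexity, for a small $t>0$ the point $\vec\Lambda+\vec h$ with $\vec h:=t(\vec\Lambda^{\mathrm{opt}}-\vec\Lambda)$ lies in $\mathcal C_{V}$ and has $\sum_{\alpha}h_{\alpha}>0$. If every $h_{\alpha}\ge 0$, pick $\alpha_{0}$ with $h_{\alpha_{0}}>0$; since $\sum_{\gamma\ne\alpha_{0}}h_{\gamma}P_{\gamma}\ge 0$,
\[
\rho_{\alpha_{0}}-(\Lambda_{\alpha_{0}}+h_{\alpha_{0}})P_{\alpha_{0}}=\Bigl(\rho-\sum_{\gamma}(\Lambda_{\gamma}+h_{\gamma})P_{\gamma}\Bigr)+\sum_{\gamma\ne\alpha_{0}}h_{\gamma}P_{\gamma}\ \ge\ 0,
\]
contradicting the maximality of $\Lambda_{\alpha_{0}}$ with respect to $\rho_{\alpha_{0}}$ and $P_{\alpha_{0}}$ in 1). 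Hence $\vec h$ has a negative coordinate; then $A:=\{\alpha:h_{\alpha}>0\}$ and $B:=\{\beta:h_{\beta}<0\}$ are nonempty, $\sum_{\alpha\in A}h_{\alpha}>\sum_{\beta\in B}(-h_{\beta})$, and $\Lambda_{\beta}>0$ for $\beta\in B$ (otherwise $\vec\Lambda+\vec h$ would leave the orthant). The remaining job is to turn $\vec h$ into a feasible perturbation supported on a single pair $\{\alpha^{\ast},\beta^{\ast}\}$ that still raises $\Lambda_{\mathrm{tot}}$: I would decompose the ``transport part'' of $\vec h$ into a nonnegative combination of elementary pair moves $x_{\alpha\beta}\vec e_{\alpha}-y_{\alpha\beta}\vec e_{\beta}$ ($\alpha\in A$, $\beta\in B$) that carries the surplus $\sum_{A}h_{\alpha}-\sum_{B}(-h_{\beta})$, and then use the convexity of $\mathcal C_{V}$ together with the rank-one structure of the $P_{\alpha}$ to single out one summand that is itself a feasible perturbation at $\vec\Lambda$; this summand contradicts the maximality of the pair $(\Lambda_{\alpha^{\ast}},\Lambda_{\beta^{\ast}})$ in 2). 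The contradiction forces $\Lambda_{\mathrm{tot}}(\vec\Lambda)=\Lambda(V)$, i.e. $\rho^{\ast}_{s}=\sum_{\alpha}\Lambda_{\alpha}P_{\alpha}$ is the OSA of $\rho$.

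The step I expect to be the main obstacle is exactly this last reduction: for a general convex body, optimality on every one- and two-dimensional coordinate slice through a point does \emph{not} imply global optimality, so the argument must genuinely exploit the spectrahedral form of $\mathcal C_{V}$ and the rank-one generators $P_{\alpha}=|\psi_{\alpha}\rangle\langle\psi_{\alpha}|$ (this is the origin of the ``pairs suffice, triples are unnecessary'' phenomenon). The mechanism is transparent when $\dim\ker(\delta\rho)=1$: a perturbation $\vec g$ is feasible essentially iff $\sum_{\gamma}g_{\gamma}c_{\gamma}\le 0$ for the fixed positive weights $c_{\gamma}=|\langle\psi_{\gamma}|e\rangle|^{2}$ furnished by the kernel vector $|e\rangle$ (condition 1 being precisely $c_{\gamma}\ne 0$ for all $\gamma$), and an improving feasible $\vec g$ forces $c_{\beta}>c_{\alpha}$ for some $\alpha$ and some $\beta$ with $\Lambda_{\beta}>0$, so the pair $(\alpha,\beta)$ already admits an improving feasible move and condition 2 is violated. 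For higher-dimensional kernels, where feasibility of a perturbation is governed by a Schur-complement condition involving both $\ker(\delta\rho)$ and its coupling to $R(\delta\rho)$, I would instead pass to Lagrangian (semidefinite) duality: the dual program is $\min\{\mathrm{Tr}(\rho W):\ W\ge 0,\ \langle\psi_{\alpha}|W|\psi_{\alpha}\rangle\ge 1\ \forall\alpha\}$, optimality of $\vec\Lambda$ is equivalent to the existence of a dual-feasible $W$ with $R(W)\subseteq\ker(\delta\rho)$ and $\langle\psi_{\alpha}|W|\psi_{\alpha}\rangle=1$ whenever $\Lambda_{\alpha}>0$, and the real content of hypotheses 1)--2) is that this single- and pairwise-maximality data already certifies feasibility of the reduced semidefinite problem for $W$; complementary slackness then closes the argument.
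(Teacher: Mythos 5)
The paper itself offers no proof of this theorem: Section 4 explicitly states that the results are recalled from \cite{Lewen,Sinisa} ``for our convenience,'' and Theorems 4.3--4.5 are left unproved. So your proposal cannot be measured against an argument in the text; it has to stand on its own, and as written it does not yet close.

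Your setup is right (maximising $\sum_{\alpha}\Lambda_{\alpha}$ over the spectrahedron $\mathcal C_{V}$), necessity of 1) and 2) is indeed immediate, and the case where the improving direction $\vec h$ has all nonnegative coordinates is handled correctly (the identity $\rho_{\alpha_{0}}-(\Lambda_{\alpha_{0}}+h_{\alpha_{0}})P_{\alpha_{0}}=\bigl(\rho-\sum_{\gamma}(\Lambda_{\gamma}+h_{\gamma})P_{\gamma}\bigr)+\sum_{\gamma\neq\alpha_{0}}h_{\gamma}P_{\gamma}\geq 0$ does contradict 1)). But the entire content of the theorem sits in the step you flag yourself: showing that a mixed-sign improving perturbation forces some \emph{single pair} $(\alpha,\beta)$ to admit an improving feasible move. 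You correctly observe that this fails for a general convex body, you verify it only when $\dim\ker(\delta\rho)=1$, and for the general case your duality paragraph asserts rather than proves the key implication --- ``the real content of hypotheses 1)--2) is that this \ldots data already certifies feasibility of the reduced semidefinite problem for $W$'' is precisely the statement to be established, not a step in its proof. In the literature this gap is filled not by an abstract decomposition of $\vec h$ into pair moves but by the explicit closed-form expressions for single and pairwise maximality (e.g.\ $\Lambda=\langle\psi|\rho^{-1}|\psi\rangle^{-1}$ for $|\psi\rangle\in R(\rho)$, and the corresponding $2\times 2$ determinant condition for pairs, as in Karnas--Lewenstein), which convert conditions 1) and 2) into concrete algebraic constraints on the kernel of $\delta\rho$ from which global optimality is extracted. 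Without supplying either that computation or a completed dual certificate, the proposal is a plausible programme but not a proof.
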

\begin{thrm}\textbf{(The uniqueness of the BSA)} Any density matrix $\rho$ has a unique decomposition
$\rho=\Lambda\rho_{s} + (1-\Lambda)\delta\rho$, where $\rho_s$ is a separable density matrix, $\delta\rho$ is a inseparable matrix with
no product vectors in its range, and $\Lambda$ is maximal.
\end{thrm}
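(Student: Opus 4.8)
\emph{Strategy and existence.} The plan is to obtain existence from the machinery already in place and then handle uniqueness separately; throughout I write $R(\cdot)$ for the range, and I may assume $\rho$ is not separable, since if it is then Theorem 4.3 gives $\Lambda=1$ and the asserted decomposition is just $\rho=\rho_s$. For existence, Theorem 4.3 provides the best separable approximation $\rho_s$ with $\Lambda=\max_V\Lambda(V)\in[0,1)$ and $\rho-\Lambda\rho_s\geq0$; set $\delta\rho=\tfrac{1}{1-\Lambda}(\rho-\Lambda\rho_s)$, a density matrix. I would verify that $R(\delta\rho)$ contains no product vector, which in particular makes $\delta\rho$ inseparable, since a nonzero separable matrix has its range spanned by product vectors. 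Indeed, if a product vector $|e,f\rangle$ lay in $R(\delta\rho)$, then $\delta\rho-\epsilon|e,f\rangle\langle e,f|\geq0$ for small $\epsilon>0$ (a vector in the range of a positive operator can be subtracted with a sufficiently small coefficient), so $\rho-\bigl(\Lambda\rho_s+(1-\Lambda)\epsilon|e,f\rangle\langle e,f|\bigr)\geq0$; but $\Lambda\rho_s+(1-\Lambda)\epsilon|e,f\rangle\langle e,f|=\Lambda'\rho_s'$ with $\Lambda'=\Lambda+(1-\Lambda)\epsilon>\Lambda$ and $\rho_s'$ a convex combination of separable matrices, hence separable, contradicting the maximality of $\Lambda$. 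So the BSA furnishes a decomposition of the required type.

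\emph{Uniqueness.} Suppose $\rho=\Lambda_1\rho_{s,1}+(1-\Lambda_1)\delta\rho_1=\Lambda_2\rho_{s,2}+(1-\Lambda_2)\delta\rho_2$ are two decompositions of the stated type. Since maximality of $\Lambda$ forces $\Lambda_i=\max_V\Lambda(V)$, we have $\Lambda_1=\Lambda_2=:\Lambda$, which we may take in $(0,1)$. I would then exploit convexity: the set $\mathcal{S}$ of separable $\sigma$ with $\rho-\Lambda\sigma\geq0$ is convex, because $\tfrac12(\sigma+\sigma')$ is separable and $\rho-\tfrac{\Lambda}{2}(\sigma+\sigma')=\tfrac12\bigl((\rho-\Lambda\sigma)+(\rho-\Lambda\sigma')\bigr)\geq0$. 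It contains $\rho_{s,1}$ and $\rho_{s,2}$, hence also $\bar\rho_s=\tfrac12(\rho_{s,1}+\rho_{s,2})$, whose remainder $\bar\delta=\tfrac12(\delta\rho_1+\delta\rho_2)$ has range $R(\delta\rho_1)+R(\delta\rho_2)$; were a product vector to lie there, the existence-step trick would again raise $\Lambda$, so $R(\delta\rho_1)+R(\delta\rho_2)$ contains no product vector. The remaining task is to upgrade this to $\rho_{s,1}=\rho_{s,2}$. I would first show that every maximal-weight admissible $\sigma\in\mathcal{S}$ has the same remainder range: the ranges $R(\delta_\sigma)$ of the remainders $\delta_\sigma=\tfrac{1}{1-\Lambda}(\rho-\Lambda\sigma)$ are closed under sums by the same computation and all lie in the fixed space $R(\rho)$, so in finite dimension they have a largest element $W$, attained at some $\sigma^\ast\in\mathcal{S}$; then $R(\rho)=R(\sigma^\ast)+W=R(\rho_{s,i})+R(\delta\rho_i)$ with $R(\delta\rho_i)\subseteq W$, and using that $R(\rho_{s,i})$ is spanned by product vectors while $W$ carries none, one concludes $R(\delta\rho_i)=W$. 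With the remainder ranges equal, the identity $\Lambda(\rho_{s,1}-\rho_{s,2})=(1-\Lambda)(\delta\rho_2-\delta\rho_1)$ together with the maximality-of-pairs analysis of Theorem 4.4, as carried out in \cite{Lewen,Sinisa}, forces $\rho_{s,1}=\rho_{s,2}$, and hence $\delta\rho_1=\delta\rho_2$.

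\emph{Main obstacle.} The delicate point is this last step. The convexity argument and the ``absorb a product state to raise $\Lambda$'' device are straightforward, but converting ``the combined remainder range carries no product vector'' into equality of the separable parts requires genuinely controlling how $R(\rho)$ splits into its product-vector-spanned part $R(\rho_s)$ and its product-vector-free part $R(\delta\rho)$ --- in particular ruling out that a non-product vector is shared between them in a way permitting an alternative optimal splitting. I expect this to need the full strength of the maximality-of-pairs machinery behind Theorems 4.3 and 4.4, essentially the argument of \cite{Lewen,Sinisa}.
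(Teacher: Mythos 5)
The first thing to note is that the paper itself offers no proof of this theorem: Section~4 states explicitly that its results are ``proven in \cite{Lewen,Sinisa}'' and are merely being recalled, so there is no in-paper argument to compare yours against and your proposal has to stand on its own. Its existence half does stand. Deriving $\delta\rho$ from the BSA of Theorem~4.3 and showing that $R(\delta\rho)$ carries no product vector by subtracting a small product projector and absorbing it into the separable part is exactly the standard Lewenstein--Sanpera device, and your bookkeeping is correct: $|e,f\rangle\in R(\delta\rho)\subseteq R(\rho)$, so it may be adjoined to $V$, and $\Lambda'=\Lambda+(1-\Lambda)\epsilon>\Lambda$ with $\rho_s'$ a normalized convex combination of separable states, contradicting maximality.

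The uniqueness half, however, has a genuine gap, and not only the one you flag at the end. The intermediate step in which you conclude $R(\delta\rho_i)=W$ ``using that $R(\rho_{s,i})$ is spanned by product vectors while $W$ carries none'' is invalid: a subspace spanned by product vectors can perfectly well contain a subspace with no product vectors in it. In $\mathbb{C}^2\otimes\mathbb{C}^2$ the span of $|00\rangle$ and $|11\rangle$ is product-spanned, yet the line through $|00\rangle+|11\rangle$ inside it is product-free; so ``product-spanned'' versus ``product-free'' gives you neither transversality nor a dimension count, and the equality of remainder ranges does not follow from what you wrote. More importantly, the final step --- passing from $\Lambda(\rho_{s,1}-\rho_{s,2})=(1-\Lambda)(\delta\rho_2-\delta\rho_1)$ to $\rho_{s,1}=\rho_{s,2}$ --- is not argued at all; you defer it to ``the maximality-of-pairs analysis \ldots as carried out in \cite{Lewen,Sinisa}.'' That deferred step is precisely the substance of the uniqueness assertion (everything before it is soft convexity), so the proposal as written proves existence but assumes uniqueness. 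To close it you would need to actually run the argument of \cite{Lewen,Sinisa}: roughly, the difference $\rho_{s,1}-\rho_{s,2}$ is a traceless Hermitian operator whose range lies in the product-vector-free subspace $R(\delta\rho_1)+R(\delta\rho_2)$, and one must show that a nonzero difference of separable admissible states supported there would let one strictly increase $\Lambda$, contradicting maximality; none of that is in your text.
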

\section{Best separable approximation for operations}
\noindent We cab define separable CPM; that is, $\Phi$ is separable if its action can be expressed in the form\\
\indent $\Phi(\rho)=\sum_{i=1}^{n}(S_{k}\otimes T_{k})\rho(S_{i}\otimes T_{k})^\dagger$,\\
for some integer $n$ and where $S_{k}$ and $T_{k}$ are operators acting on $\mathcal{H}_{A/B}$, respectively. Otherwise, we say that it is nonseparable.
Up to proportionality constant, separable maps are those that can be implemented using local operations and classical communication only.\\

Let us consider two systems, $A$ and $B$, spatially separated, each of them composed of two particles $(A_{1,2}$, and $B_{1,2})$. Let us consider a CPM $\Phi$ acting on systems $A_1$ and $B_1$, where $\Phi(\rho)=\sum_{k}M^{(k)}\rho(M^{(k)})^\dagger$ and $M^{(k)}$ acting on $\mathcal{H}_{A_{1}}\otimes\mathcal{H}_{B_{1}}$, where $M^{(k)}=S_{k}\otimes T_{k}$ for each index $k$. Now $\Phi$ induced another super-operator acting on $\mathcal{H}_{A}\otimes\mathcal{H}_{B}$ in the following sense:\\
\indent $\widetilde{\Phi}(X)=\sum_{k}(\widetilde{S_{k}}\otimes\widetilde{T_{k}})X(\widetilde{S_{k}}\otimes\widetilde{T_{k}})^{\dagger}$,\\
where $\widetilde{S_{k}}=S_{k}\otimes \mbox{id}^{(A_{2})}$ and $\widetilde{T_{k}}=T_{k}\otimes \mbox{id}^{(B_{2})}$, and $X$  acting on $\mathcal{H}_{A}\otimes\mathcal{H}_{B}$.\\
We are interested in whether this CPM can create ``nonlocal" entanglement between the systems $A$ and $B$. We define the operator $E_{A_{1}A_{2},B_{1}B_{2}}$ acting on $\mathcal{H}_{A}\otimes\mathcal{H}_{B}$, where $\mathcal{H}_{A}=\mathcal{H}_{A_{1}}\otimes\mathcal{H}_{A_{2}}$ and $\mathcal{H}_{B}=\mathcal{H}_{B_{1}}\otimes\mathcal{H}_{B_{2}}$, and $\dim(\mathcal{H}_{A_{i}})=\dim(\mathcal{H}_{B_{i}})=d$, as follows:\\
\indent $E_{A_{1}A_{2},B_{1}B_{2}}=(\Phi^{(A_{1}B_{1})}\otimes \mbox{id}^{(A_{2}B_{2})})(P_{A_{1}A_{2}}\otimes P_{B_{1}B_{2}})\equiv\widetilde{\Phi}(P_{A_{1}A_{2}}\otimes P_{B_{1}B_{2}})=\sum_{k}(\widetilde{S_{k}}\otimes\widetilde{T_{k}})(P_{A_{1}A_{2}}\otimes P_{B_{1}B_{2}})(\widetilde{S_{k}}\otimes\widetilde{T_{k}})^{\dagger}=\sum_{k}(\widetilde{S_{k}}P_{A_{1}A_{2}}\widetilde{S_{k}}^{\dagger}) \otimes(\widetilde{T_{k}}P_{B_{1}B_{2}}\widetilde{T_{k}}^{\dagger})$.\\
Here, $P_{A_{1}A_{2}}=|\Psi\rangle_{A_{1}A_{2}}\langle\Psi|$ with $|\Psi\rangle_{A_{1}A_{2}}=\frac{1}{\sqrt{d}}\sum_{m=1}^{d}|m\rangle_{A_{1}}\otimes|m\rangle_{A_{2}}$, and $P_{B_{1}B_{2}}=|\Psi\rangle_{B_{1}B_{2}}\langle\Psi|$ with  $|\Psi\rangle_{B_{1}B_{2}}=\frac{1}{\sqrt{d}}\sum_{\mu=1}^{d}|\mu\rangle_{B_{1}}\otimes|\mu\rangle_{B_{2}}$, where $\{|m\rangle:m=1,\ldots,d\}$ and $\{|\mu\rangle:\mu=1,\ldots,d\}$ are an orthonormal basis for $\mathcal{H}_{A_{1}/A_{2}}$ and $\mathcal{H}_{B_{1}/B_{2}}$ respectively. The map $\Phi$ is understood to act as the identity on the operators acting on $\mathcal{H}_{A_{2}}$ and $\mathcal{H}_{B_{2}}$. The operator $E$ has a clear interpretation since it is proportional to the density operator resulting from the operation $\Phi$ on systems $A_{1}$ and $B_{1}$ when both of them are prepared in a maximally entangled state with two ancillary systems, respectively. $E$ is called \emph{Choi matrix} for the bipartite super-operator $\Phi$, or the mapping $\Phi\rightarrow E(\Phi)$ is called \emph{Jamio{\l}kowski isomorphism} for the bipartite super-operator $\Phi$.\\

Now in general for $\Phi(\rho)=\sum_{k}M^{(k)}\rho(M^{(k)})^\dagger$, where $M^{(k)}=\sum_{mn,\mu\nu}M^{(k)}_{mn,\mu\nu}|m\rangle\langle n|\otimes|\mu\rangle\langle\nu|$, then $\Phi$ induced another super-operator like above as follows:\\
\indent $\widetilde{\Phi}(X)=\sum_{k}\widetilde{M^{(k)}}X\left(\widetilde{M^{(k)}}\right)^{\dagger}$,\\
where $\widetilde{M^{(k)}}=\sum_{mn,\mu\nu}M^{(k)}_{mn,\mu\nu}\widetilde{|m\rangle\langle n|}\otimes\widetilde{|\mu\rangle\langle\nu|}$; and $\widetilde{|m\rangle\langle n|}=|m\rangle\langle n|\otimes\mbox{id}^{(A_{2})}$ and $\widetilde{|m\rangle\langle n|}=|\mu\rangle\langle \nu|\otimes\mbox{id}^{(B_{2})}$.
\begin{eqnarray*}
E_{A_{1}A_{2},B_{1}B_{2}}&=&(\Phi^{(A_{1}B_{1})}\otimes \mbox{id}^{(A_{2}B_{2})})(P_{A_{1}A_{2}}\otimes P_{B_{1}B_{2}})\\
&\equiv&\widetilde{\Phi}(P_{A_{1}A_{2}}\otimes P_{B_{1}B_{2}})=\sum_{k}\widetilde{M^{(k)}}(P_{A_{1}A_{2}}\otimes P_{B_{1}B_{2}})\left(\widetilde{M^{(k)}}\right)^{\dagger}\\
&=&\sum_{k}\sum_{mm'nn',\mu\mu'\nu\nu'}M^{(k)}_{mn,\mu\nu}[M^{(k)}_{m'n',\mu'\nu'}]^{\ast}\\
&&(|m\rangle\langle n|\otimes\mbox{id}^{(A_{2})})P_{A_{1}A_{2}}(|m'\rangle\langle n'|\otimes\mbox{id}^{(A_{2})})^{\dagger}\otimes (|\mu\rangle\langle\nu|\otimes\mbox{id}^{(B_{2})})P_{B_{1}B_{2}}(|\mu'\rangle\langle \nu'|\otimes\mbox{id}^{(B_{2})})^{\dagger}\\
&=&\sum_{k}\sum_{mm'nn',\mu\mu'\nu\nu'}M^{(k)}_{mn,\mu\nu}[M^{(k)}_{m'n',\mu'\nu'}]^{\ast}\\
&&(|m\rangle\langle n|\otimes\mbox{id}^{(A_{2})}\otimes|\mu\rangle\langle\nu|\otimes\mbox{id}^{(B_{2})})(P_{A_{1}A_{2}}\otimes P_{B_{1}B_{2}})(|m'\rangle\langle n'|\otimes\mbox{id}^{(A_{2})}\otimes|\mu'\rangle\langle \nu'|\otimes\mbox{id}^{(B_{2})})^{\dagger}.
\end{eqnarray*}
If we define $\mbox{vec}(M^{(k)})=\sum_{mn,\mu\nu}M^{(k)}_{mn,\mu\nu}|m\rangle|n\rangle\otimes|\mu\rangle|\nu\rangle=\sum_{mn,\mu\nu}M^{(k)}_{mn,\mu\nu}|mn\rangle\otimes|\mu\nu\rangle=\sum_{mn,\mu\nu}M^{(k)}_{mn,\mu\nu}|mn\mu\nu\rangle$, then
$\sum_{k}\mbox{vec}(M^{(k)})\mbox{vec}(M^{(k)})^{\dagger}=E_{A_{1}A_{2},B_{1}B_{2}}$. If $M^{(k)}=A_{k}\otimes B_{k}$, then $\mbox{vec}(M^{(k)})=|A_{k}\rangle\rangle|B_{k}\rangle\rangle$.
\begin{prop} If $\Phi$ is a quantum operation on a bipartite quantum system, then $\Phi$ is separable if and only if its dynamical matrix $D_{\Phi}$ is separable.
\end{prop}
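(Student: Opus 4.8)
The plan is to transport everything to the level of dynamical matrices via the Choi--Jamio{\l}kowski isomorphism $\mathcal{J}:\Phi\mapsto D_{\Phi}$, the identity $D_{\Phi}=\sum_{j}|\Gamma_{j}\rangle\rangle\langle\langle\Gamma_{j}|$ valid when $\Phi(\cdot)=\sum_{j}\Gamma_{j}\cdot\Gamma_{j}^{\dagger}$, and Lemma \ref{lem2.7}, the key point being that a product Kraus operator $S_{k}\otimes T_{k}$ is carried by the fixed involutive unitary $I\otimes S\otimes I$ to a product positive operator. We work in the equal-dimension setting of this section so that Lemma \ref{lem2.7} applies verbatim; for $d_{A}\neq d_{B}$ one uses the evident analogue with a flip between the two factor spaces. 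Throughout, ``$D_{\Phi}$ is separable'' is meant with respect to the bipartition that groups the output and input $A$--factors together and likewise for $B$, that is $(\mathcal{H}_{A}\otimes\mathcal{H}_{A})$ versus $(\mathcal{H}_{B}\otimes\mathcal{H}_{B})$ --- equivalently, $(I\otimes S\otimes I)D_{\Phi}(I\otimes S\otimes I)$ is separable in the contiguous sense; this is the grouping in which the Choi matrix $E$ of a bipartite super-operator was written above.

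For the forward implication, assume $\Phi(\cdot)=\sum_{k}(S_{k}\otimes T_{k})\cdot(S_{k}\otimes T_{k})^{\dagger}$ with $S_{k}$ acting on $\mathcal{H}_{A}$ and $T_{k}$ on $\mathcal{H}_{B}$. Then $D_{\Phi}=\sum_{k}|S_{k}\otimes T_{k}\rangle\rangle\langle\langle S_{k}\otimes T_{k}|$, and since $I\otimes S\otimes I$ is an involution, Lemma \ref{lem2.7} gives $(I\otimes S\otimes I)\,|S_{k}\otimes T_{k}\rangle\rangle=|S_{k}\rangle\rangle|T_{k}\rangle\rangle$, whence
\[
(I\otimes S\otimes I)\,D_{\Phi}\,(I\otimes S\otimes I)=\sum_{k}|S_{k}\rangle\rangle|T_{k}\rangle\rangle\langle\langle S_{k}|\langle\langle T_{k}|=\sum_{k}\bigl(|S_{k}\rangle\rangle\langle\langle S_{k}|\bigr)\otimes\bigl(|T_{k}\rangle\rangle\langle\langle T_{k}|\bigr),
\]
a nonnegative combination of product positive operators. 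Hence $D_{\Phi}$ is separable.

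For the converse, assume $D_{\Phi}$ is separable. Unwinding the definition (a separable positive operator is a nonnegative combination of product rank-one projectors), we may write $(I\otimes S\otimes I)D_{\Phi}(I\otimes S\otimes I)=\sum_{k}|x_{k}\rangle\langle x_{k}|\otimes|y_{k}\rangle\langle y_{k}|$ with $|x_{k}\rangle\in\mathcal{H}_{A}\otimes\mathcal{H}_{A}$ and $|y_{k}\rangle\in\mathcal{H}_{B}\otimes\mathcal{H}_{B}$. Because vectorization is one-one and onto, choose matrices $S_{k},T_{k}$ with $|x_{k}\rangle=|S_{k}\rangle\rangle$ and $|y_{k}\rangle=|T_{k}\rangle\rangle$; running the previous computation backwards (conjugate by $I\otimes S\otimes I$ and apply Lemma \ref{lem2.7}) yields $D_{\Phi}=\sum_{k}|S_{k}\otimes T_{k}\rangle\rangle\langle\langle S_{k}\otimes T_{k}|=D_{\Psi}$, where $\Psi(\cdot)=\sum_{k}(S_{k}\otimes T_{k})\cdot(S_{k}\otimes T_{k})^{\dagger}$ is separable by inspection. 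Since $\mathcal{J}$ is a bijection, $\Phi=\Psi$, so $\Phi$ is separable; no separate check of the trace-decreasing condition is needed, because $\sum_{k}(S_{k}\otimes T_{k})^{\dagger}(S_{k}\otimes T_{k})$ is the value at $I$ of the dual of $\Phi$ and hence independent of the Kraus representation.

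No step is genuinely hard; the one place demanding care is the tensor-factor bookkeeping --- fixing unambiguously which $d_{A}^{2}\times d_{B}^{2}$-bipartition ``separability of $D_{\Phi}$'' refers to, and verifying that the reshuffling $I\otimes S\otimes I$ of Lemma \ref{lem2.7} is exactly the intertwiner between that grouping and the one in which the product form $\bigl(|S_{k}\rangle\rangle\langle\langle S_{k}|\bigr)\otimes\bigl(|T_{k}\rangle\rangle\langle\langle T_{k}|\bigr)$ is manifest. Once this identification is pinned down, the two implications are the same computation read in opposite directions, together with the bijectivity of vectorization and of $\mathcal{J}$.
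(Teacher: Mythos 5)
Your proof is correct, and its forward direction is the same computation the paper performs: a product Kraus operator $S_k\otimes T_k$ vectorizes, after the reordering of tensor factors, to the product vector $|S_k\rangle\rangle|T_k\rangle\rangle$, so $D_\Phi$ becomes a nonnegative sum of the product positive operators $|S_k\rangle\rangle\langle\langle S_k|\otimes|T_k\rangle\rangle\langle\langle T_k|$. The paper hides the conjugation by $I\otimes S\otimes I$ inside its ad hoc $\mathrm{vec}$ notation of Section 5, whereas you make the intertwiner explicit via Lemma \ref{lem2.7}; that is a presentational difference only, and your insistence on pinning down which bipartition of the $d_A^2\times d_B^2$ space ``separability of $D_\Phi$'' refers to is a genuine clarification the paper needed.

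The substantive difference is that the paper's proof stops after the forward implication, even though the proposition is stated as an ``if and only if.'' You supply the missing converse: decompose a separable $D_\Phi$ into product rank-one projectors, pull the factors back through the bijectivity of vectorization to obtain product Kraus operators, and invoke the injectivity of $\mathcal{J}$ (equivalently of $\Phi\mapsto L_\Phi$ composed with the bijection $\mathcal{R}$) to conclude $\Phi$ itself has a product Kraus form. Your closing observation that the constraint $\sum_k(S_k\otimes T_k)^\dagger(S_k\otimes T_k)\leq I$ holds automatically --- because this sum equals $\Phi^\dagger(I)$ and is therefore independent of the chosen Kraus representation --- disposes of the one point where the reconstructed decomposition could in principle have failed the paper's definition of a separable operation. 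In short: same mechanism as the paper where the paper has a proof, plus a correct completion where it does not.
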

\begin{proof} By the definition of separable quantum operation,
$\Phi(\rho)=\sum_{i}(A_{i}\otimes B_{i})\rho(A_{i}\otimes
B_{i})^{\dagger}$ when $\Phi$ is separable. Now the dynamical matrix
for the separable operation $\Phi$ is
$D_{\Phi}=\sum_{i}\mbox{vec}(A_{i}\otimes B_{i})
\mbox{vec}(A_{i}\otimes B_{i})^{\dagger}=\sum_{i}|A_{i}\rangle\rangle\langle\langle
A_{i}|\otimes |B_{i}\rangle\rangle\langle\langle B_{i}|$.
\end{proof}
\begin{definition} Given quantum operation $\Phi$ on bipartite quantum system $\mathcal{H}_{1}\otimes\mathcal{H}_{2}$ with $\mbox{dim}\mathcal{H}_{1}=\mbox{dim}\mathcal{H}_{2}=N$,
the dynamical matrix $D_{\Phi}$ for $\Phi$ can decomposed as $D_{\Phi}=\lambda D_{s}+(1-\lambda)D_{e}$ in terms of the BSA decomposition for state.
Then the separable operation $\Phi_{BSA}$ determined by $\lambda D_{s}$ is called \emph{best separable operation approximation} for $\Phi$.
$\Phi_{ENT}\equiv\Phi-\Phi_{BSA}$ is called \emph{pure entanglement-produced operation part} for $\Phi$.
\end{definition}
\begin{remark} If there is another decomposition $D_{\Phi}=D'_{s}+D'_{e}$ for which $D'_{s}$ is just separable, then: $\lambda D_{s}-D'_{s}\geq0$ by the uniqueness of the BSA. Thus the decomposition $\Phi=\Phi_{BSA}+\Phi_{ENT}$ is unique. \\~\\
By operator-sum representation theorem,
$\Phi(\rho)=\sum_{i\in\mathbb{F}}F_{i}\rho F^{\dagger}_{i}=\sum_{j\in\mathbb{G}}G_{j}\rho G^{\dagger}_{j},$
where $\max(|\mathbb{F}|,|\mathbb{G}|)\leq N^4$. Let $$\mathbb{I}=\{i\in\mathbb{F}:F_{i}=A_{i}\otimes B_{i}\},\mathbb{J}=\{j\in\mathbb{G}:G_{j}=C_{j}\otimes D_{j}\}.$$
Write
$\Upsilon(\rho)=\sum_{i\in\mathbb{I}}F_{i}\rho F^{\dagger}_{i}$ and $\Psi(\rho)=\sum_{j\in\mathbb{J}}G_{j}\rho G^{\dagger}_{j}$;$\Upsilon'(\rho)=\sum_{i\in\mathbb{F}\setminus\mathbb{I}}F_{i}\rho F^{\dagger}_{i}$ and $\Psi'(\rho)=\sum_{j\in\mathbb{G}\setminus\mathbb{J}}G_{j}\rho G^{\dagger}_{j}$.
\end{remark}
\begin{thrm} $\Upsilon=\Psi=\Phi_{BSA}.$
\end{thrm}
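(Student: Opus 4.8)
The plan is to push everything through the Choi--Jamio\l{}kowski isomorphism $\mathcal{J}\colon\Phi\mapsto D_\Phi$, which is a linear bijection, so that the statement becomes one about the BSA of a single positive matrix. Writing the BSA decomposition of Definition~5.2 as $D_\Phi=\lambda D_s+(1-\lambda)D_e$, we have $\Phi_{BSA}=\mathcal{J}^{-1}(\lambda D_s)$ and $\Phi_{ENT}=\mathcal{J}^{-1}((1-\lambda)D_e)$, so $\Upsilon=\Psi=\Phi_{BSA}$ is equivalent to $D_\Upsilon=D_\Psi=\lambda D_s$. From the Choi-matrix formula of Section~5 one has $D_\Phi=\sum_{i\in\mathbb{F}}\mathrm{vec}(F_i)\mathrm{vec}(F_i)^{\dagger}$, $D_\Upsilon=\sum_{i\in\mathbb{I}}\mathrm{vec}(F_i)\mathrm{vec}(F_i)^{\dagger}$, $D_{\Upsilon'}=\sum_{i\in\mathbb{F}\setminus\mathbb{I}}\mathrm{vec}(F_i)\mathrm{vec}(F_i)^{\dagger}$ with $D_\Phi=D_\Upsilon+D_{\Upsilon'}$ (by linearity of $\mathcal{J}$), and likewise for $\mathbb{G},\mathbb{J}$; so the theorem reduces to identifying these partial sums.

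First I would prove the easy half, namely that $\Phi_{BSA}-\Upsilon$ and $\Phi_{BSA}-\Psi$ are completely positive. For $i\in\mathbb{I}$ we have $F_i=A_i\otimes B_i$, hence $\mathrm{vec}(F_i)=|A_i\rangle\rangle|B_i\rangle\rangle$ is a product vector for the cut $A_1A_2\mid B_1B_2$, so $D_\Upsilon$ is separable; moreover $D_\Phi-D_\Upsilon=D_{\Upsilon'}\ge 0$. Thus $D_\Upsilon$ is a separable matrix dominated by $D_\Phi$, and the maximality remark (Remark~5.4) gives $\lambda D_s-D_\Upsilon\ge 0$. The same argument applied to $\mathbb{G}$ gives $\lambda D_s-D_\Psi\ge 0$.

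For the reverse inclusion I would first realise $\Phi_{BSA}$ by an explicit product Kraus family: write the separable matrix $\lambda D_s$ as $\sum_k\mathrm{vec}(A_k'\otimes B_k')\mathrm{vec}(A_k'\otimes B_k')^{\dagger}$ and pick any Kraus family $\{E_l\}$ for $\Phi_{ENT}$; since $D_{\Phi_{ENT}}=(1-\lambda)D_e$ has no product vectors in its range (uniqueness of the BSA, Theorem~4.5), every $\mathrm{vec}(E_l)\in R(D_e)$ is non-product, so no $E_l$ is a product operator. Then $\{A_k'\otimes B_k'\}\cup\{E_l\}$ is a Kraus family for $\Phi$ whose product part is exactly $\{A_k'\otimes B_k'\}$, with associated $\Upsilon$ equal to $\Phi_{BSA}$. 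It then remains to transport this along the isometric freedom of Kraus representations: writing $\mathrm{vec}(F_i)=\sum_m u_{im}\mathrm{vec}(G_m)$ for an isometry $[u_{im}]$ over the reference family $\{G_m\}$ above, one has to show that the product members $\{F_i:i\in\mathbb{I}\}$ still reconstruct all of $\lambda D_s$, i.e. $\lambda D_s\le D_\Upsilon$; combined with the previous paragraph this forces $D_\Upsilon=\lambda D_s$, and symmetrically $D_\Psi=\lambda D_s$, whence $\Upsilon=\Psi=\Phi_{BSA}$.

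I expect this last step to be the real obstacle, and it is delicate enough to need a hypothesis restricting the admissible Kraus families. The isometry $[u_{im}]$ can mix the separable block $\{A_k'\otimes B_k'\}$ with the entangled block $\{E_l\}$, so the new $F_i$ need not be product operators even though $\mathrm{span}\{\mathrm{vec}(F_i)\}=R(D_\Phi)$ still contains $R(\lambda D_s)$; consequently $D_\Upsilon$ can be strictly smaller than $\lambda D_s$. (Already on $\mathcal{H}_1\otimes\mathcal{H}_2=\mathbb{C}^2\otimes\mathbb{C}^2$: for $\Phi(\rho)=\tfrac{1}{2}\rho+\tfrac{1}{2}S\rho S$ with $S$ the swap, one checks via the BSA of $D_\Phi$ that $\Phi_{BSA}=\tfrac{1}{2}\mathrm{Id}$; the Kraus family $\{\tfrac{1}{\sqrt{2}}(I\otimes I),\tfrac{1}{\sqrt{2}}S\}$ then gives $\Upsilon=\tfrac{1}{2}\mathrm{Id}$, whereas the family $\{P_{\mathrm{sym}},P_{\mathrm{asym}}\}$ coming from the spectral decomposition of $D_\Phi$ — neither factor being a product operator — gives $\Upsilon=0$.) So the statement is to be read for Kraus families adapted to the BSA splitting of $D_\Phi$, and the core of the proof is to make that hypothesis precise and then invoke the uniqueness clause of Theorem~4.5 to force the product part of such a family to saturate $\lambda D_s$; once that is in place, $\Upsilon=\Psi$ is immediate since both coincide with the decomposition-independent separable part of $\Phi$.
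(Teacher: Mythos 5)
Your first two paragraphs reproduce, in cleaner language, the only part of the argument that the paper actually carries out: $D_{\Upsilon}=\sum_{i\in\mathbb{I}}\mathrm{vec}(F_i)\mathrm{vec}(F_i)^{\dagger}$ is separable, $D_{\Phi}-D_{\Upsilon}=D_{\Upsilon'}\ge 0$, and the domination claim of Remark 5.4 then yields $\lambda D_s-D_{\Upsilon}\ge 0$ (likewise for $\Psi$). Note in passing that Remark 5.4 deduces $\lambda D_s-D_s'\ge 0$ from the maximality of the \emph{weight} $\lambda$, which is an order-theoretic conclusion the BSA theorem does not literally supply; both you and the paper lean on it.

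Where you diverge is in the reverse inequality $D_{\Upsilon}\ge\lambda D_s$, and your diagnosis is correct: this cannot hold for an arbitrary Kraus family, because the set $\mathbb{I}$ of product Kraus operators is not an invariant of $\Phi$. This is precisely where the paper's own proof founders. It argues that $\lambda D_s-D_{\Upsilon}>0$ would force $D_{\Upsilon'}-D_{\Phi_{BSA}-\Upsilon}>0$ with $D_{\Phi_{BSA}-\Upsilon}$ separable, ``contradicting the fact that there is no factorizing operational element for $\Upsilon'$.'' But a CP map can dominate a nonzero separable CP map while every operator in its \emph{chosen} Kraus family is non-product; domination only guarantees that \emph{some other} operator-sum representation of a sub-map contains product elements. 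Your swap example makes this concrete: for $\Phi=\tfrac12\,\mathrm{Id}+\tfrac12\,S(\cdot)S$ on $\mathbb{C}^{2}\otimes\mathbb{C}^{2}$, the families $\{\tfrac{1}{\sqrt2}(I\otimes I),\tfrac{1}{\sqrt2}S\}$ and $\{P_{\mathrm{sym}},P_{\mathrm{asym}}\}$ are both valid and give $\Upsilon=\tfrac12\,\mathrm{Id}$ and $\Upsilon=0$ respectively (neither symmetrizer is a product operator: a product operator on $\mathbb{C}^{2}\otimes\mathbb{C}^{2}$ has rank $1$, $2$ or $4$, and a rank-one product operator has a product range vector, which the singlet is not). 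Since the two $\Upsilon$'s differ, they cannot both equal $\Phi_{BSA}$, so the theorem is false as stated for general $\mathbb{F},\mathbb{G}$. Your proposed repair --- restricting to Kraus families adapted to the splitting $D_{\Phi}=\lambda D_s+(1-\lambda)D_e$ --- is the only reading under which the statement survives, but it then reduces essentially to the definition of $\Phi_{BSA}$. In short, your inability to finish the last step is not a defect of your attempt; it locates a genuine error in the paper's proof.
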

\begin{proof} Apparently, $D_{\Phi}=D_{\Upsilon}+D_{\Upsilon'}$,
where $D_{\Upsilon}$ is separable since $\Upsilon$ is separable
operation. Hence it follows from the uniqueness of the BSA that
$\lambda D_{s}-D_{\Upsilon}\geq0$. If, otherwise, $\lambda
D_{s}-D_{\Upsilon}>0$, then
$D_{\Phi_{BSA}-\Upsilon}=D_{\Phi_{BSA}}-D_{\Upsilon}>0$, that is,
$\Phi_{BSA}-\Upsilon$ is CP map and separable, so
$D_{\Phi}=[D_{\Upsilon}+D_{\Phi_{BSA}-\Upsilon}]+[D_{\Upsilon'}-D_{\Phi_{BSA}-\Upsilon}]$,
where $D_{\Upsilon'}-D_{\Phi_{BSA}-\Upsilon}>0$, contradict with the
fact that there is no factorizing operational element for
$\Upsilon'$. Therefore $\lambda D_{s}-D_{\Upsilon}=0$, equivalently,
$\Upsilon=\Phi_{BSA}$. The theorem is proved.
\end{proof}

\end{document}